\title[Constants of motions for the Lohe type models with frustration]{Constants of motion for the finite-dimensional Lohe type models with frustration and applications to emergent dynamics}
\author[S.-Y. Ha]{Seung-Yeal Ha}
\address[S.-Y. Ha]{\newline Department of Mathematical Sciences and Research Institute of Mathematics \newline Seoul National University, Seoul 08826, Republic of Korea and \newline
Korea Institute for Advanced Study, Hoegiro 85, Seoul, 02455, Republic of 
Korea} \email{syha@snu.ac.kr}
\author[D. Kim]{Dohyun Kim}
\address[Dohyun Kim]{\newline Research Institute of Basic Sciences, \newline Seoul National University, Seoul 08826, Republic of Korea}
\email{dohyunkim@snu.ac.kr}
\author[H. Park]{Hansol Park}
\address[H. Park]{\newline Department of Mathematical Sciences \newline  Seoul National University, Seoul 08826, Republic of Korea}
\email{hansol960612@snu.ac.kr}
\author[S. W. Ryoo]{Sang Woo Ryoo}
\address[S. W. Ryoo]{\newline Mathematics Department, Princeton University, 
\newline Princeton, New Jersey 08544-1000, United States} \email{sryoo@math.princeton.edu}
\newtheorem{theorem}{Theorem}[section]
\newtheorem{lemma}{Lemma}[section]
\newtheorem{corollary}{Corollary}[section]
\newtheorem{proposition}{Proposition}[section]
\newtheorem{example}{Example}[section]
\newtheorem{remark}{Remark}[section]
\newtheorem{definition}{Definition}[section]
\newcommand{\bbr}{\mathbb R}
\newcommand{\bbu}{\mathbb U}
\newcommand{\bbc}{\mathbb C}
\newcommand{\bbs}{\mathbb S}
\newcommand{\bbp} {\mathbb P}
\newcommand{\bbt} {\mathbb T}
\newcommand{\bbn} {\mathbb N}
\newcommand{\mi}{\mathrm i}
\newcommand{\dg}{\dagger}
\newcommand{\kp}{\kappa}
\newcommand{\veps}{\varepsilon}
\def\charf {\textup{{\text 1}\kern-.30em {\text l}}}
\begin{document}

\date{\today}

\subjclass[2010]{34C15, 34D06, 92B25, 92D25} \keywords{Aggregation, constant of motion, frustration,   Kuramoto model,   Lohe matrix model,  Lohe sphere model, order parameter, synchronization}

\thanks{\textbf{Acknowledgment.}
The work of S.-Y. Ha was partially supported by the National Research Foundation of Korea Grant (NRF-2017R1A2B2001864) funded by the Korea Government, and the work of D. Kim was supported by the National Institute for Mathematical Sciences (NIMS) grant funded by the Korea government (MSIT) (No.B19610000).}


\begin{abstract}
We present constants of motion for the finite-dimensional Lohe type aggregation models with frustration and we apply them to analyze the emergence of collective behaviors. The Lohe type models have been proposed as possible non-abelian and higher-dimensional generalizations of the Kuramoto model, which is a prototype phase model for synchronization. The aim of this paper is to study the emergent collective dynamics of these models under the effect of (interaction) frustration, which generalizes phase-shift frustrations in the Kuramoto model. To this end, we present constants of motion, i.e., conserved quantities along the flow generated by the models under consideration, and, from the perspective of the low-dimensional dynamics thus so obtained, derive several results concerning the emergent asymptotic patterns of the Kuramoto and Lohe sphere models.
\end{abstract}
\maketitle \centerline{\date}


\section{Introduction} \label{sec:1}
\setcounter{equation}{0}
Collective behaviors of many-body complex systems have been extensively studied in biological and physical systems, e.g., flocks of birds, swarm of bacteria, herding of sheep, arrays of Josephson junctions, etc. \cite{A-B, A-R, B-B, Ki, Ku1, Ku2, Lo-1, Lo-2, O, Pe,P-R, S-W, St,Wi2, Wi1}. However, despite their ubiquitous presence, systematic research based on rigorous mathematical modeling  was only begun  half a century ago by Winfree \cite{Wi2, Wi1} and Kuramoto \cite{Ku1,Ku2}.  In this paper, our main interest lies in generalizations of the Kuramoto model, namely the Lohe matrix and sphere models which correspond to non-abelian and higher-dimensional generalizations of the Kuramoto model, respectively. For concreteness, we begin with a description of the Kuramoto model and the Lohe type models as below.

The Kuramoto model \cite{Ku1,Ku2} describes the dynamics of a collection $\{\theta_j\}_{j=1}^N$ of $2\pi$-periodic variables, where the dynamics of $\theta_j$, the $j$-th Kuramoto oscillator, is given as follows:
\begin{equation}
\dot\theta_j = \nu_j + \frac{\kp}{N}\sum_{k=1}^N \sin(\theta_k - \theta_j), \quad j=1,\cdots,N.
\end{equation}
Here $\nu_j$ is the natural frequency of the $j$-th Kuramoto oscillator, and $\kappa>0$ denotes a positive coupling strength.

The Lohe matrix model \cite{H-R, Lo-1, Lo-2} describes an analogous dynamics on the unitary group. Here, and in the rest of the paper, we denote by $\bbu(d)$ the unitary group consisting of $d \times d$ unitary matrices. With $U_j$ and $U_j^*$, $j=1,\cdots,N$, denoting  time-dependent $d\times d$ unitary matrices and their hermitian conjugates, respectively,  let $H_j$ be a constant $d\times d$ hermitian matrix, and let $\kappa>0$ denote a positive constant denoting the interaction strength. Then, the Lohe matrix model reads as follows:
\begin{equation} \label{A-0}
\mi \dot U_j U_j ^* = H_j + \frac{\mi \kappa}{2N} \sum_{k=1}^N ( U_kU_j^* - U_j U_k^* ), \quad j=1,\cdots, N.
\end{equation}

On the other hand, the Lohe sphere model  \cite{C-C-H, C-H5, C-H4, C-H1, C-H3, C-H2,   Lo-1} describes an analogous dynamics on the $d$-dimensional unit sphere $\bbs^{d}$. Specifically, the dynamics of an ensemble $\{x_j\}_{j=1}^N$ of points on the unit sphere $\bbs^{d}$ is given as follows:
\begin{equation} \label{A-0-1}
\|x_j\|^2 \dot x_j = \Omega_j x_j + \frac{\kp}{N} \sum_{k=1}^N  (\|x_j\|^2x_k - \langle x_j,x_k\rangle x_j), \quad j=1,\cdots,N,
\end{equation}
where $\Omega_j$ is now a skew-symmetric matrix, i.e., $\Omega_j^t = -\Omega_j$, and $\kappa>0$ denotes a positive coupling strength as usual. For the case with identical oscillators, i.e., $\Omega_j = \Omega,~j= 1, \cdots, N$, system \eqref{A-0-1} has already been introduced in \cite{O} as a ``swarm on sphere" model. We remark that the Lohe sphere model \eqref{A-0-1} can be derived from the Lohe matrix model for the 3-sphere $\bbs^3$ using a special parameterization of \textcolor{black}{$\mathrm{S}\bbu(2)$}.

In \cite{Da, S-K, Zh}, frustration was introduced into the Kuramoto phase model for  more realistic modeling, and its emergent dynamics has also been extensively studied in \cite{H-K-L,H-K-L1, L-H}. The model is given as follows. Let $\theta_j$ be the phase of the $j$-th Kuramoto oscillator. Then, the Kuramoto-Sakaguchi model with uniform frustration $\alpha$ reads as 
\begin{equation} \label{A-0-2}
\dot\theta_j = \nu_j + \frac{\kp}{N}\sum_{k=1}^N \sin(\theta_k - \theta_j+ \alpha), \quad j=1,\cdots,N,
\end{equation}
where $\nu_j$ is the natural frequency of the $j$-th Kuramoto oscillator. The simple presence of $\alpha$ in the sinusoidal coupling of \eqref{A-0-2} makes the analysis harder than the original Kuramoto model with $\alpha = 0$. For example, the total phase $\sum_{j=1}^{N} \theta_j$ is not a conserved quantity any more, and the gradient flow structure is destroyed. Thus, energy estimates based on the conservation of total phase do not work in the present context. However, in spite  of the lack of conserved quantities and good structural property, the Lyapunov functional approach does still work for system \eqref{A-0-2}, as we will see below.

In the seminal paper \cite{S-W}, various constants of motion were found for the Kuramoto model \eqref{A-0-2} with frustration in the case of identical oscillators $\nu_j=\nu$, and the dynamics of this model were shown to be highly degenerate. The emergent dynamics has been classified, excluding measure zero initial data. Recently, analogous constants of motion were found for the higher-dimensional Lohe matrix model and Lohe sphere model in \cite{Lo-5}, while in \cite{C-E-M,Lo-6} the constants of motion were exploited to discover a larger class of synchronization models. 

\subsection{Key questions to be addressed} In light of the classical Lohe matrix and sphere models, the introduction of frustration into the Kuramoto model, and the discovery of constants of motion, we will address the following questions throughout this paper: \newline
\begin{itemize}
\item
(Q1):~(Derivation of the Lohe sphere and matrix models with frustration):~What will be the analogs of the Lohe sphere and Lohe matrix models with frustrated interactions?
\vspace{0.2cm}
\item
(Q2)~(Existence of nontrivial constants of motion):~Are there any nontrivial conserved quantities for Lohe type aggregation models with frustration?  
\vspace{0.2cm}
\item
(Q3):~(Application of constants of motion):~If such constants of motion exist, can we use this constant of motion in the study of large-time behaviors of the Lohe type models?
\end{itemize}
\vspace{0.2cm}
In the absence of frustration, the above questions have been extensively studied in a series of papers \cite{A-R, B-C-M, C-E-M,C-H-J-K, C-S, D-X, D-B, D-B0, D-B1,H-K-R, H-L-X, H-R, J-M-B,M-S3,M-S2,Lo-6,  M-S1, V-M1, V-M2}. Some of the results here overlap with the recent paper \cite{Lo-5}, which was published during the production of this paper, but we are including those results to provide a different perspective on the subject matter.

\subsection{Outline of main results} In what follows, we briefly discuss our main results.

First, we consider an ensemble of identical Kuramoto oscillators with frustration $\alpha$. In this case, the phase $\theta_i$ satisfies the ordinary differential equation
\begin{equation} \label{A-0-3}
\dot \theta_j = \frac{\kappa}{N} \sum_{k=1}^N \sin(\theta_k-\theta_j + \alpha), \quad j=1,\cdots,N.
\end{equation}
We present two time-invariant functionals for this case. For $\alpha \in (-\frac{\pi}{2}, \frac{\pi}{2})$, we introduce the functional $\mathcal J_\alpha(\Theta)$:
\[  \mathcal J_\alpha(\Theta) := \prod_{i=1}^N \sin \left(\frac{\theta_{i+1}-\theta_i}{2} \right) e^{\tan\alpha \sum_{i=1}^N \theta_i}  \] 
is time-invariant under the flow \eqref{A-0-3} (see Theorem \ref{T3.2}). Moreover, as a corollary, we can see that depending on the sign of $\alpha$, the Kuramoto order parameter $R := \Big| \frac{1}{N} \sum_{j=1}^{N} \theta_j \Big|$ tends to $1$ or $0$ (see Corollary \ref{C3.1}). 
For a phase configuration $\Theta$ with $\theta_i \not \equiv \theta_j\mod 2\pi, ~~ 1 \leq i, j \leq N,$ we set the functional $\mathcal K_{abcd}(\Theta)$:
\[ \mathcal K_{abcd}(\Theta) := \frac{ \Delta\theta_{ab} \cdot\Delta\theta_{cd}}{\Delta\theta_{ac} \cdot\Delta\theta_{bd}}\quad \textup{where}\quad \Delta \theta_{ab} := \sin \Big( \frac{\theta_a-\theta_b}{2} \Big ). \]
Then, the functional ${ \mathcal K}_{abcd}(\Theta)$ is time-invariant under the flow \eqref{A-0-3} (see Theorem \ref{T3.3}), even for the critical case $|\alpha|  = \frac{\pi}{2}$. We also present a low-dimensional dynamics for \eqref{A-0-3}. For a phase $\Theta$ with a configuration:
\[
\theta_j(t)=\theta_N(t), \quad  j=N-m+1,\cdots, N,\qquad \theta_j(t)\neq \theta_N(t), \quad j=1,\cdots, N-m,
\]
we introduce the auxiliary variables:
\[
x_j(t) :=\frac{1+\cos (\theta_j(t)-\theta_N(t))}{\sin (\theta_j(t)-\theta_N(t))}, \quad j=1,\cdots, N-m.
\]
Then, the dynamics of $\{x_j \}_{j=1}^{N-m}$ is fully governed by a system determined by two bounded functions ${\mathcal A}$ and ${\mathcal B}$ (see Proposition \ref{P3.1}):
\[
\begin{cases}
\displaystyle \dot{x}_j=\mathcal{A}+\mathcal{B}x_j, \quad t > 0, \\
\displaystyle x_j(0)=\frac{1+\cos(\theta_j^0-\theta_N^0)}{\sin(\theta_j^0-\theta_N^0)},\quad j = 1,\cdots, N-m.
\end{cases}
\]

\vspace{0.2cm}

Second, we present the Lohe sphere model on $\bbs^{d}$ with frustration matrix $V$ and identical matrix $\Omega_j  = \Omega$:
\begin{equation} \label{A-2}
\dot x_j = \Omega x_j  + \frac{\kp}{N}\sum_{k=1}^N \left( Vx_k-\langle x_j,Vx_k\rangle x_j \right), \quad j=1,\cdots,N.
\end{equation}
Here we employ a frustration matrix of the form $V= aI_{d+1} + W$ where $a>0$ is a positive constant, where $I_{d+1}$ denotes the  $(d+1)\times (d+1)$ identity matrix and $W$ denotes a $(d+1)\times (d+1)$ skew-symmetric matrix. For the special case with $a = 1$ and $W = 0$, system \eqref{A-2} reduces to the Lohe sphere model \eqref{A-0-1} whose emergent dynamics has been extensively studied in the previous literature. For system \eqref{A-2}, we introduce the constant of motion
\[
{\mathcal H}_{abcd}(\mathcal X):= \frac{\|x_a-x_b\|\cdot\|x_c-x_d\|}{\|x_a-x_c\|\cdot\|x_b-x_d\|}, \quad 1\leq a,b,c,d\leq N.
\]
This functional ${\mathcal H}_{abcd}(\mathcal X)$ is shown to be time-invariant under the flow \eqref{A-2} in Theorem \ref{T5.1}. It is easy to see that for identical matrices $\Omega_j = \Omega$, particles will aggregate to opposite poles ${\mathcal N}$ and ${\mathcal S}$. With the invariance of the functional ${\mathcal H}_{abcd}(\mathcal X)$ in mind, we can in fact say more: we can show that there are only two possible asymptotic patterns up to rotation (Corollary \ref{C5.1}): 
\[ (|{\mathcal N}|, |{\mathcal S}|) = (N, 0),~(N-1, 1),\]
where $|\mathcal N|$ and $|\mathcal S|$ denote number of particles which tend to $\mathcal N$ and $\mathcal S$, respectively. Moreover, we can show that there will be no periodic solution using the monotonicity of the total diameter (Corollary \ref{C5.3}). We can also show that the circles form invariant sets(see Corollary \ref{C5.2}), thanks to the classical Ptolemy's theorem, and more generally that affine subsets are preserved(see Proposition \ref{P5.3.1}).  We also provide a sufficient framework leading to complete aggregation (see Theorem \ref{T5.2}). For a spatial configuration ${\mathcal X}$ and some $m = 1, \cdots, N$,
\[ x_j \neq x_N,~ j =1,\cdots,N-m,\qquad  x_j=x_N,~j=N-m+1,\cdots,N, \]
we introduce new auxiliary variable:
\[
y_j:=x_N+\frac{2}{\|x_j-x_N\|^2}(x_j-x_N), \quad j=1,\cdots,N-m.
\]
Then, the dynamics of $\{ y_j \}_{j = 1}^{N-m}$ is governed by the three quantities $M(t)\in O(d+1)$, $a(t)>0$, \textcolor{black}{$b(t)\in \bbp_{x_N}^\perp:=\{y\in\bbr^d:\langle y, x_N\rangle=0\}$}:
\begin{equation}\label{data}
\begin{cases}
y_i(t)=M(t)(a(t)y_i(0)+b(t)),\quad i=1,\cdots,N-1,\\
x_N(t)=M(t)x_N^0, \\
M(0)=I_{d+1},\quad a(0)=1,\quad b(0)=0\in\bbp_{x_N^0}^\perp.
\end{cases}
\end{equation}
Here, three quantities $M(t)\in O(d+1)$, $a(t)>0$, $b(t)\in \bbp_{x_N^0}^\perp$ are determined by an ODE system (see Proposition \ref{P5.2}):
\[
\begin{cases}
\displaystyle a'(t)=\frac{\kappa}{N}\Bigg[1+\sum_{k=1}^{N-1}\frac{-1+\|a(t)y_k(0)+b(t)\|^2}{1+\|a(t)y_k(0)+b(t)\|^2}\Bigg]a(t),\\
\displaystyle b'(t)= \kappa b(t)+\frac{\kappa}{N}\sum_{k=1}^{N-1}\frac{2}{1+\|a(t)y_k(0)+b(t)\|^2}a(t)y_k(0),\\
\displaystyle M'(t)=M(t)L(t),\\
a(0)=1, \quad b(0)=0\in \bbp_{x_N^0}^\perp, \quad M(0)=I_{d+1}.
\end{cases}
\]

Last but not least, we present  the Lohe matrix model for identical hamiltonians with frustration:
\begin{equation} \label{NN-22}
\mi \dot U_j U_j^* = H + \frac{\mi \kappa}{2N} \sum_{k=1}^N ( VU_kU_j^* - U_iU_k^* V^*), \quad j=1,\cdots, N,
\end{equation}
where the frustration matrix  $V=aI_d+W$ is the constant $d\times d$ matrix, $I_d$ is the $d\times d$ identity matrix and $W$ is a $d\times d$ skew-symmetric matrix. It is worthwhile to mention that such a frustration operator for the Lohe matrix model (in fact, generalized Lohe matrix model proposed in \cite{H-K-R2}) was first introduced in \cite{D} and the linearization of the model around the fixed point was provided, whereas stability analysis was not yet performed. On the other hand, the constants of motion for the Lohe matrix model have been recently obtained in \cite{Lo-6} (see Section \ref{sec:6.1}). A sufficient framework for the emergent dynamics of \eqref{NN-22} has been studied in Theorem \ref{T6.1} in terms of the frustration matrix $V$ and initial data $\{ U_j^0 \}$ (see Theorem \ref{T6.1}). We also provide some class of equilibrium states using group representation (Theorem \ref{T6.2}).

\subsection{Structure of the rest of the paper} In Section \ref{sec:2}, we introduce the Lohe matrix model with frustration, which generalizes the Kuramoto model with frustration \eqref{A-0-2}, and study some basic properties.  In Section \ref{sec:3}, we present constants of motions of the Kuramoto model with frustration from a different perspective from the previous literature  \cite{M-M-S,S-W} and study low-dimensional dynamics which  is fully governed by two auxiliary  functions.  In Section \ref{sec:5}, we study constants of motion  of the Lohe sphere model, nontrivial existence of periodic solutions and low-dimensional dynamics. In Section \ref{sec:6}, we present a sufficient framework leading to the complete aggregation for an ensemble of identical particles and study a class of equilibria using an elementary property of group representation. In Appendix A, we present the proof of Proposition \ref{P3.2}.

\section{Preliminaries} \label{sec:2}
\setcounter{equation}{0}
In this section, we introduce the Lohe matrix model with frustration, and present its low-dimensional reductions to the Lohe sphere model and the Kuramoto model under the effect of frustration.

\subsection{The Lohe matrix model}  \label{sec:2.1}
In this subsection, we briefly introduce the Lohe matrix model with (interaction) frustration and study its basic properties. Let  $U_j=U_j(t)$ and $U_j^*= U_j^*(t)$ be a time-dependent $d\times d$ unitary matrix and its hermitian conjugate, and let $H_j$ and $V$ be constant $d\times d$ hermitian and unitary matrices, respectively. For motivation, let us consider the issue of how to put the frustration matrix $V$ in the coupling terms in \eqref{A-0}:
\begin{equation*} \label{B-0}
 U_kU_j^* - U_j U_k^* = U_kU_j^* -  (U_kU_j^*)^*. 
 \end{equation*}
This is simply a function of $U_k U_j^*$, and, considering that we wish to obtain a system that reduces to the Kuramoto model with frustration for $d=1$(with $V=e^{\mathrm{i}\alpha})$, there are three possible places to introduce a frustration $V$ in the quadratic term $U_k U_j^*$:
\[   VU_kU_j^*, \quad  U_k V U_j^*, \quad \mbox{and} \quad U_k U_j^* V.  \]
An important property of the Lohe matrix model is its right-translation invariance, and we wish the frustration to respect this property. Then, it is easy to see that the second choice does not lead to right-translation invariance, while the first and third do(see Lemma \ref{L2.1} (2)). Without loss of generality, we choose the first choice so that the admissible coupling term with frustration matrix will be 
\[
 VU_kU_j^* -  (VU_kU_j^*)^* = VU_kU_j^*  - U_j U_k^* V^*.
\]
The other choice will lead to a parallel discussion.

In conclusion, we define the Lohe matrix model under the effect of frustration to be governed by the following Cauchy problem:
\begin{equation}
\begin{cases} \label{B-1}
 \displaystyle \mi \dot U_j U_j ^* = H_j + \frac{\mi \kappa}{2N} \sum_{k=1}^N ( VU_kU_j^* - U_j U_k^* V^*), \quad t>0, \\
 \displaystyle U_j \Big|_{t = 0} = U_j^0, \quad U_j^0 (U_j^0)^{*} = I_d, \quad  \quad j=1,\cdots, N.
\end{cases}
\end{equation}
Next, we present two properties immediately associated with the Cauchy problem \eqref{B-1}.
\begin{lemma} \label{L2.1}
The following statements hold:
\begin{enumerate}
\item (Conservation of Unitarity)
Let $\{ U_j \}$ be a solution to \eqref{B-1}. Then $U_j U_j^*$ is conserved along the Lohe flow \eqref{B-1}:
\[
U_j(t)U_j^*(t) = I_d, \quad t\geq 0, \quad j = 1, \cdots, N.
\]
\item (Right-translation invariance) 
System \eqref{B-1} is invariant under the right multiplication action by a unitary matrix, i.e., if $L\in \mathbb U(d)$ and $W_j:=U_j L$, then $W_j$ satisfies 
\begin{equation} \label{B-1-0}
\begin{cases}
& \displaystyle \mi \dot W_j W_j^* = H_j + \frac{\mi \kappa}{2N}\sum_{k=1}^N ( VW_kW_j^* - W_j W_k^* V^*), \quad t>0, \quad j=1,\cdots, N, \\
& \displaystyle W_j \Big|_{t = 0} = W_j^0 L, \quad W_j^0 (W_j^0)^{*} = I_d, \quad  \quad j=1,\cdots, N.
\end{cases}
\end{equation}
\end{enumerate}
\end{lemma}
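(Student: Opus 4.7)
For assertion (1), my plan is to show that the right-hand side of \eqref{B-1} is Hermitian, so that $\dot U_j U_j^*$ is anti-Hermitian; combined with the algebraic identity $U_j \dot U_j^* = (\dot U_j U_j^*)^*$, this will immediately yield
\[ \frac{d}{dt}(U_j U_j^*) \;=\; \dot U_j U_j^* + U_j \dot U_j^* \;=\; \dot U_j U_j^* + (\dot U_j U_j^*)^* \;=\; 0. \]
To verify Hermiticity of the right-hand side, I observe that $H_j$ is Hermitian by hypothesis, while each summand $VU_kU_j^* - U_jU_k^*V^*$ has the form $A - A^*$ with $A = VU_kU_j^*$ and is therefore anti-Hermitian; the prefactor $\mi$ then converts the coupling sum into a Hermitian quantity. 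Integrating from the initial condition $U_j^0(U_j^0)^* = I_d$ yields $U_j(t)U_j(t)^* = I_d$ for all $t \geq 0$.

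For assertion (2), the plan is direct substitution. Writing $W_j := U_j L$ with $L \in \mathbb U(d)$, the key observation is that every quadratic block $U_aU_b^*$ appearing in \eqref{B-1} is invariant under this right translation, since
\[ W_aW_b^* \;=\; U_a L L^* U_b^* \;=\; U_a U_b^* \]
by the unitarity of $L$; likewise $\dot W_j W_j^* = \dot U_j L L^* U_j^* = \dot U_j U_j^*$ on the left-hand side. Replacing every $U$-block in \eqref{B-1} by the corresponding $W$-block then reproduces \eqref{B-1-0} verbatim, and the initial condition transforms as $W_j(0) = U_j^0 L$ by the very definition of $W_j$.

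Neither step poses a genuine obstacle; the proof amounts to bookkeeping with Hermitian conjugates together with the identity $LL^* = I_d$. The only structural point worth emphasizing, which also justifies the motivating discussion preceding \eqref{B-1}, is the asymmetry between the two assertions under the three admissible placements of $V$: all three of $VU_kU_j^*$, $U_kVU_j^*$, and $U_kU_j^*V$ give rise to a coupling term of the form $A - A^*$, so part (1) holds uniformly; but only the two outer placements are right-translation invariant, because the middle placement produces $W_kVW_j^* = U_k L V L^* U_j^*$, which fails to coincide with $U_kVU_j^*$ unless $L$ commutes with $V$. This is the sole step where the placement of $V$ actually matters, and it is precisely what singles out the model \eqref{B-1}.
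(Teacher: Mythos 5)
Your proof is correct and follows essentially the same route as the paper: for (1), observing that the right-hand side of $\eqref{B-1}_1$ is Hermitian (so that $\dot U_j U_j^*$ is anti-Hermitian) is exactly the paper's step of adding $\eqref{B-1}_1$ to its Hermitian conjugate, merely phrased as a symmetry statement rather than an explicit cancellation. For (2), the direct substitution $W_a W_b^* = U_a U_b^*$ matches the paper's computation, and your closing remark on the failure of right-invariance for the middle placement $U_k V U_j^*$ is precisely the justification the paper gives in the discussion preceding \eqref{B-1}.
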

\begin{proof}
(i)~Note that $\eqref{B-1}_1$ can be rewritten as 
\begin{equation} \label{B-1-0-0}
\dot U_j U_j^* = -\mi H_j + \frac{\kappa}{2N} \sum_{k=1}^N ( VU_kU_j^* - U_j U_k^* V^*).
\end{equation}
We take the Hermitian conjugate of \eqref{B-1-0-0} using the relations $H_i^* = H_i$ to obtain 
\begin{equation} \label{B-1-1}
U_j \dot{U}_j^* = \mi H_j + \frac{\kappa}{2N} \sum_{k=1}^N (U_j U_k^* V^* - V U_kU_j^*).
\end{equation}
Now, we add $\eqref{B-1}_1$ and \eqref{B-1-1} to see
\begin{equation*}
\frac{d}{dt} \Big( U_j(t)U_j(t)^* \Big) = 0 , \quad \textup{or equivalently,} \quad U_j(t) U_j(t)^* = U_j^0(U_j^0)^* = I_d.
\end{equation*}
(ii) For a fixed constant matrix $L \in \mathbb U(d)$, we set
\[
 W_j := U_j L, \quad j = 1, \cdots, N.
\]
Then, it is easy to see
\begin{align}
\begin{aligned} \label{B-1-1-1}
\mi \dot{W}_j W_j^* &= \mi (\dot{U}_j L)(L^* U_j^*) =  \mi \dot{U}_j U_j^*, \\
VW_kW_j^* &= V(U_kL)(L^* U_j^*) =VU_kU_j^*, \\
W_j W_k^* V^* &= (U_j L)( L^* U_k^*) V^* = U_j U_k^* V^*.
\end{aligned}
\end{align}
In $\eqref{B-1}_1$, we substitute the above relations \eqref{B-1-1-1} to get the desired estimate \eqref{B-1-0}.
\end{proof}
\begin{remark} For future reference, we note the following variant of \eqref{B-1-0-0}, which can be obtained by multiplying $U_j$ to the right hand side of \eqref{B-1-0-0}:
\begin{equation*} \label{B-1-1-2}
\dot U_j  = -\mi H_j U_j + \frac{\kappa}{2N} \sum_{k=1}^N ( VU_k - U_j U_k^* V^* U_j)  =-\mi H_j U_j + \frac{\kappa}{2N} \sum_{k=1}^N ( VU_k - U_j (V U_k)^* U_j).
\end{equation*}
\end{remark}
\subsection{Reductions to low-dimensional models}  \label{sec:2.2} In this subsection, we review the reductions of $\eqref{B-1}_1$ to lower-dimensional synchronization models such as the Lohe sphere model and the Kuramoto model. In the absence of frustration, which is $V = I_d$ in our case, these computations were first performed in  \cite{Lo-1, Lo-2}. We include the non-frustrated case $V = I_d$ for reference and to reflect historical development.

\subsubsection{From the Lohe matrix model to the Lohe sphere model} Consider the following special case of $\eqref{B-1}_1$:
\[ d = 2, \quad V= I_2 \in \bbc^{2\times 2}. \]
In this case we may use a special parametrization of $\bbu(2)$: any $2 \times 2$ unitary matrix $U_j$ can be written as a linear combination of the Pauli matrices $\{\sigma_k\}_{k=1}^3$ and $I_2$:
\begin{equation} \label{B-2}
U_j :=     e^{-\mi\theta_j} \left( \mi \sum_{k=1}^3 x_j^k \sigma_k + x_j^4 I_2   \right) = e^{-\mi\theta_j}
\begin{pmatrix}
x_j^4 + \mi x_j^1 & x_j^2 + \mi x_j^3 \\
-x_j^2 + \mi x_j^3 & x_j^4 -\mi x_j^1
\end{pmatrix},
\end{equation}
where $I_2$ and $\{\sigma_k\}_{k=1}^3$ are given by
\begin{equation*}I_2=
\begin{pmatrix}
1 & 0 \\
0& 1\\
\end{pmatrix}
,\quad \sigma_1=
\begin{pmatrix}
1 & 0 \\
0& -1\\
\end{pmatrix}
,\quad \sigma_2 = 
\begin{pmatrix}
0 & -\mi \\
\mi & 0 \\
\end{pmatrix}
,\quad \sigma_3 = 
\begin{pmatrix}
0 &1 \\
1&0\\
\end{pmatrix}.
\end{equation*}
Similarly, $H_j$ can be expressed as a linear combination of $I_2$ and $\{\sigma_k\}_{k=1}^3$:
\begin{equation} \label{B-2-1}
H_j := \sum_{k=1}^3 \omega_j^k \sigma_k  +\nu_j I_2,
\end{equation}
where $\omega_j = (\omega_j^1, \omega_j^2, \omega_j^3)$ is a real-valued vector in $\bbr^3$, and $\nu_j$ is the natural frequency which is associated with the $\mathbb U(1)$ component of $U_j$. Now, we substitute \eqref{B-2} and \eqref{B-2-1} into $\eqref{B-1}_1$ to obtain $5N$ equations for $(\theta_j, x_j^1,\cdots,x_j^4)$:
\begin{align}
\begin{aligned} \label{B-2-1-1}
&\|x_j\|^2 \dot\theta_j = \nu_j + \frac{\kp}{N}\sum_{k=1}^N \sin(\theta_k-\theta_j)\langle x_j,x_k\rangle, \quad t >0,~~1\leq j \leq N, \\
&\|x_j\|^2 \dot x_j = \Omega_j x_j + \frac{\kp}{N} \sum_{k=1}^N \cos(\theta_k-\theta_j) \Big(\|x_j\|^2x_k - \langle x_j,x_k\rangle x_j \Big).
\end{aligned}
\end{align}
Consider next the special case
\[  \theta_j \equiv 0, \quad  \nu_j \equiv 0 \quad \textup{and} \quad \|x_j \| = 1. \]
In this case, we derive the Lohe sphere model from \eqref{B-2-1-1}:
\begin{equation*} \label{B-2-2}
\dot x_j = \Omega_j x_j + \frac{\kp}{N} \sum_{k=1}^N  \Big(x_k - \langle x_j,x_k\rangle x_j \Big).
\end{equation*}

Now, let us consider the effect of frustration. Let $V \in \bbc^{2\times 2}$ be a constant matrix which has the following form: 
\begin{equation*}
V = 
\begin{pmatrix}
v_4 + \mi v_1 & v_2 + \mi v_3 \\
-v_2 + \mi v_3 & v_4 - \mi v_1 \\
\end{pmatrix}, \quad \sum_{k=1}^4 \|v_k\|^2 =1
\end{equation*}
The same calculation above which derived \eqref{B-2-1-1} from the Lohe matrix model \eqref{B-1} now gives the Lohe sphere model with frustration on $\bbs^3$:
\begin{align}
\begin{aligned} \label{B-3}
\dot x_i &= \Omega_i x_i + \frac{\kp}{N} \sum_{k=1}^N  ({\tilde V}x_k - \langle x_i, {\tilde V}x_k\rangle x_i), \quad 1\leq i \leq N, \\
{\tilde V} &= \begin{pmatrix}
v_4 & -v_3 & v_2 & v_1 \\ v_3 & v_4 & -v_1 & v_2 \\ -v_2 & v_1 & v_4 & v_3 \\ -v_1 & -v_2 & v_3 & v_4 \\
\end{pmatrix}
= v_4 I_d + \begin{pmatrix}
0 & -v_3 & v_2 & v_1 \\ v_3 & 0 & -v_1 & v_2 \\ -v_2 & v_1 &0 & v_3 \\ -v_1 & -v_2 & v_3 & 0 \\
\end{pmatrix}.
\end{aligned}
\end{align}
From \eqref{B-3}, we can formally generalize the Lohe sphere model with frustration on $\bbs^{d}$:
\begin{equation} \label{B-3-0}
\dot x_i = \Omega_i x_i + \frac{\kp}{N} \sum_{k=1}^N  ({\tilde V}x_k - \langle x_i, {\tilde V}x_k\rangle x_i), \quad 1\leq i \leq N.
\end{equation}
For the frustration matrix, we set 
\begin{equation} \label{B-3-0-0}
 V = a I_d + W, 
 \end{equation}
where the constant $a$ is positive, $I_{d+1}$ is the $(d+1)\times (d+1)$ identity matrix and $ W$ is a $(d+1)\times (d+1)$ skew-symmetric matrix. We 
further substitute \eqref{B-3-0-0} into \eqref{B-3-0} to get 
\begin{equation} \label{B-3-1-0}
\dot x_i = \Omega_i x_i + \underbrace{\frac{\kp a }{N} \sum_{k=1}^N (x_k - \langle x_i,x_k\rangle x_i)}_\textup{synchronous motion} + \underbrace{ \frac{\kp}{N} \sum_{k=1}^N ( Wx_k - \langle x_i, Wx_k\rangle x_i ). }_\textup{ periodic motion} 
\end{equation}
\textcolor{black}{We can see that the presence of frustration matrix $W$ introduces a competition between \textit{`synchronization'} and \textit{`periodic motion'}, in the following sense. The second term on the R.H.S. of \eqref{B-3-1-0}(synchronous motion) tends to bring the oscillators together. On the other hand, since $W$ is a $(d+1) \times (d+1)$ skew-symmetric matrix,  all eigenvalues of $W$ are zero or purely imaginary. Hence, we can interpret the last term on the R.H.S. of \eqref{B-3-1-0}(periodic motion), together with $\Omega_i x_i$, tries to pull the dynamics into a periodic motion.}
\subsubsection{From the Lohe matrix model to the Kuramoto model} (This is a special case of the previous discussion.) Consider the case $d=1$ in $\eqref{B-1}_1$. In this case, we use the following ansatz:
\begin{equation*}
U_j := e^{-\mi \theta_j}, \quad H_j := \nu_j \in \bbr, \quad j = 1, \cdots, N \quad \textup{and} \quad V:= e^{-\mi\alpha}.
\end{equation*}
This yields
\begin{align}
\begin{aligned} \label{B-3-1}
& \mi \dot U_j U_j^* = \dot \theta_j  \quad \textup{and} \\
& VU_kU_j^* - U_iU_k^* V^* = e^{\mi(\theta_j-\theta_k - \alpha)} - e^{\mi(-\theta_j +\theta_k + \alpha)} = 2\mi \sin{ (\theta_j-\theta_k- \alpha)}.
\end{aligned}
\end{align}
We substitute the above relations \eqref{B-3-1} into $\eqref{B-1}_1$ to derive the Kuramoto model with frustration: 
\begin{equation} \label{B-4}
\dot \theta_j = \nu_j + \frac{\kp}{N}\sum_{k=1}^N \sin{(\theta_k-\theta_j + \alpha)}.
\end{equation}
We expand the coupling term \eqref{B-4} using basic trigonometry to obtain
\begin{equation} \label{B-4-0}
\dot\theta_j = \nu_j + \underbrace{\frac{\kp\cos\alpha}{N} \sum_{k=1}^N \sin (\theta_k -\theta_j)}_\textup{synchronous motion} + \underbrace{\frac{\kp \sin\alpha}{N} \sum_{k=1}^N \cos (\theta_k-\theta_j).}_\textup{periodic motion}
\end{equation}
Similar to the discussion at the end of Section 2.2.1, we can see that the R.H.S. of \eqref{B-4-0} again involves a natural competition between \textit{`synchronization'} and \textit{`periodic motion'}. \newline

In the following  section, we study  the constants of motion of the Kuramoto model with frustration and its application to the large-time behaviors.

\section{Ensemble of identical Kuramoto oscillators with frustration} \label{sec:3}
\setcounter{equation}{0}  
In this section, we study constants of motion for the Kuramoto model  with a positive frustration $\alpha \in (0, \pi)$ and identical oscillators:
\begin{equation} \label{C-0}
\dot \theta_j = \frac{\kappa}{N} \sum_{k=1}^N \sin{ (\theta_k-\theta_j + {\tilde \alpha})}, \quad 0 \leq {\tilde \alpha} \leq \pi.
\end{equation}
We will eventually see how one can reduce this to a dynamics on $\bbr^2$.
\subsection{Constant of motion} \label{sec:3.1} We take $\alpha = {\tilde \alpha} - \frac{\pi}{2}$ to see
\begin{equation} \label{C-1}
\dot \theta_j = \frac{\kappa}{N} \sum_{k=1}^N \cos{ (\theta_k-\theta_j + \alpha) }, \quad |\alpha| \leq  \frac{\pi}{2}.
\end{equation}
Throughout the paper, we call system \eqref{C-1} the cosine Kuramoto flow with frustration. In what follows, we study the following three issues.
\begin{itemize}
\item
First, we construct a time-invariant functional of the cosine-Kuramoto model \textit{without} frustration $\alpha = 0$ (see Theorem \ref{T3.1}). 
\item
Second, we extend the above time-invariant functional to the cosine-Kuramoto model \textit{with} frustration $\alpha \in\left(-\frac{\pi}{2},\frac{\pi}{2}\right)$. (see Theorem \ref{T3.2}).
\item
Finally, we construct different time-invariant functionals for the full Kuramoto model $\alpha \in\left[-\frac{\pi}{2},\frac{\pi}{2}\right]$(see Theorem \ref{T3.3}). 
\end{itemize}

For the first step, we construct, given a phase configuration $\Theta = (\theta_1, \cdots, \theta_N)$, the functional
\[ \mathcal I(\Theta) := \prod_{i=1}^N \sin \left(\frac{\theta_{i+1}-\theta_i}{2} \right). \]
Here, we use the convention that $\theta_{N+1}=\theta_1$. We show that this functional ${\mathcal I} = {\mathcal I}(\Theta)$ is preserved along \eqref{C-1} when $\alpha=0$.
\begin{theorem} \label{T3.1}
Let $\Theta$ be a solution to \eqref{C-1} with $\alpha = 0$. Then, the functional $\mathcal I(\Theta)$ is time-invariant along the flow \eqref{C-1}:
\[   \mathcal I(\Theta(t)) =  \mathcal I(\Theta^0), \quad t \geq 0.   \]
\end{theorem}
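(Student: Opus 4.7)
The natural approach is to show that $\frac{d}{dt}\log \mathcal{I}(\Theta) = 0$ along the flow (working on the open set where no two consecutive phases coincide, so that all sine factors are nonzero). Taking the logarithmic derivative gives
\[
\frac{d}{dt}\log \mathcal{I}(\Theta) \;=\; \frac{1}{2}\sum_{i=1}^N \cot\!\left(\frac{\theta_{i+1}-\theta_i}{2}\right)(\dot\theta_{i+1}-\dot\theta_i),
\]
so the goal reduces to showing the right-hand side vanishes identically once $\dot\theta_j$ is replaced by the coupling term from \eqref{C-1} with $\alpha=0$.

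\textbf{Key computation.} I would first simplify $\dot\theta_{i+1}-\dot\theta_i$ using the difference-to-product identity
\[
\cos(\theta_k-\theta_{i+1})-\cos(\theta_k-\theta_i) \;=\; 2\sin\!\left(\frac{\theta_{i+1}-\theta_i}{2}\right)\sin\!\left(\theta_k-\frac{\theta_i+\theta_{i+1}}{2}\right),
\]
which yields
\[
\dot\theta_{i+1}-\dot\theta_i \;=\; \frac{2\kappa}{N}\sin\!\left(\frac{\theta_{i+1}-\theta_i}{2}\right)\sum_{k=1}^N\sin\!\left(\theta_k-\frac{\theta_i+\theta_{i+1}}{2}\right).
\]
The crucial feature is that the prefactor $\sin\bigl(\frac{\theta_{i+1}-\theta_i}{2}\bigr)$ cancels the cotangent appearing in the logarithmic derivative. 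Then applying the product-to-sum identity
\[
\cos\!\left(\frac{\theta_{i+1}-\theta_i}{2}\right)\sin\!\left(\theta_k-\frac{\theta_i+\theta_{i+1}}{2}\right)=\frac{1}{2}\bigl[\sin(\theta_k-\theta_i)+\sin(\theta_k-\theta_{i+1})\bigr]
\]
turns the expression into
\[
\frac{d}{dt}\log\mathcal{I}(\Theta)\;=\;\frac{\kappa}{2N}\sum_{i=1}^N\sum_{k=1}^N\bigl[\sin(\theta_k-\theta_i)+\sin(\theta_k-\theta_{i+1})\bigr].
\]

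\textbf{Finishing.} By reindexing $i\mapsto i-1$ in the second term (using the cyclic convention $\theta_{N+1}=\theta_1$), both sums collapse to $\sum_{i,k}\sin(\theta_k-\theta_i)$, which vanishes by antisymmetry under $i\leftrightarrow k$. Hence $\frac{d}{dt}\log\mathcal{I}(\Theta)=0$, proving $\mathcal{I}(\Theta(t))=\mathcal{I}(\Theta^0)$ as long as all consecutive phase differences stay nonzero; the case where some factor vanishes at $t=0$ (so $\mathcal I(\Theta^0)=0$) is handled by a continuity argument, since if two consecutive phases collide the product stays zero.

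\textbf{Expected obstacle.} The calculation is essentially a trigonometric identity, so there is no deep obstacle; the only mildly delicate point is bookkeeping the telescoping/reindexing step and confirming the cyclic convention is handled correctly. A minor technical worry is legitimizing the logarithmic derivative when some $\sin((\theta_{i+1}-\theta_i)/2)$ passes through zero, but this can be addressed either by working directly with $\frac{d}{dt}\mathcal{I}$ (multiplying through by $\mathcal{I}$ so the cotangent factors are replaced by products of sines) or by noting that the set $\{\mathcal I=0\}$ is invariant under the flow.
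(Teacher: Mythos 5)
Your proof is correct and follows essentially the same path as the paper's: compute $\tfrac{d}{dt}\mathcal I$ (the paper works with $\mathcal I\cdot\tfrac{d}{dt}\log\mathcal I$, which is the identical quantity) via the chain rule, apply the difference-to-product identity so that the $\sin\bigl(\tfrac{\theta_{i+1}-\theta_i}{2}\bigr)$ prefactor cancels the cotangent, apply the product-to-sum identity, reindex cyclically, and conclude by antisymmetry of $\sin(\theta_k-\theta_i)$. Your remark about the degenerate case $\mathcal I(\Theta^0)=0$ also matches the paper, which observes via uniqueness of solutions that the set $\{\theta_i\equiv\theta_{i+1}\bmod 2\pi\}$ is invariant under the flow.
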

\begin{proof}
If $\theta_i^0\equiv \theta_{i+1}^0\mod 2\pi$ for some $i=1,\cdots,N$ then the uniqueness of solutions of \eqref{C-1} tells us that $\theta_i(t)\equiv \theta_{i+1}^0(t)\mod 2\pi$ for all $t\ge 0$ for that $i$. Hence $\mathcal I(\Theta(t)) = 0= \mathcal I(\Theta^0)$ in this case. Otherwise, $\theta_i\not\equiv \theta_{i+1}\mod 2\pi$ for all $i=1,\cdots,N$, and then the following computation gives the proof:
\begin{align}
\begin{aligned} \label{C-2}
\frac{d\mathcal I}{dt} &= \sum_{j=1}^N \frac{ \partial \mathcal I}{\partial \theta_j} \frac{d\theta_j}{dt} =\mathcal I \sum_{j=1}^N \dot\theta_j \left(      -\frac{  \frac{1}{2}\cos \left(  \frac{\theta_{j+1}-\theta_j }{2} \right)       }{    \sin \left(   \frac{\theta_{j+1}-\theta_j}{2}   \right)     } + \frac{  \frac{1}{2}\cos \left(  \frac{\theta_{j}-\theta_{j-1} }{2} \right)       }{    \sin \left(   \frac{\theta_{j}-\theta_{j-1}}{2}   \right) }    \right) \\
&=\frac{\mathcal I}{2} \sum_{j=1}^N \frac{ \cos \left( \frac{\theta_j-\theta_{j-1}}{2}    \right)    }{\sin \left( \frac{\theta_j-\theta_{j-1}}{2}    \right)          } (\dot\theta_j -\dot\theta_{j-1}) \\
&=\frac{\mathcal I \kappa }{2N} \sum_{j=1}^N  \frac{ \cos \left( \frac{\theta_j-\theta_{j-1}}{2}    \right)    }{\sin \left( \frac{\theta_j-\theta_{j-1}}{2}    \right)          }  \sum_{i=1}^N ( \cos (\theta_j-\theta_i) - \cos (\theta_{j-1}-\theta_i) ) \\
&=\frac{\mathcal I \kappa}{2N} \sum_{j=1}^N  \frac{ \cos \left( \frac{\theta_j-\theta_{j-1}}{2}    \right)    }{\sin \left( \frac{\theta_j-\theta_{j-1}}{2}    \right)          }  \sum_{i=1}^N 2\sin \left(   \frac{ \theta_j + \theta_{j-1} - 2\theta_i}{2}   \right) \sin \left( \frac{\theta_{j-1}-\theta_j}{2}      \right) \\
&=-\frac{\mathcal I\kappa}{2N} \sum_{j=1}^N \sum_{i=1}^N 2\cos \left(  \frac{\theta_j-\theta_{j-1}}{2}  \right) \sin\left(  \frac{\theta_j + \theta_{j-1} -2\theta_i}{2}  \right) \\ 
&=-\frac{\mathcal I\kappa}{2N}\sum_{j=1}^N\sum_{i=1}^N \left(\sin (\theta_j-\theta_i) + \sin(\theta_{j-1} -\theta_i) \right)=-\frac{\mathcal I \kappa}{N} \sum_{i,j=1}^N \sin (\theta_j-\theta_i) \\
&= 0.
\end{aligned}
\end{align}
\end{proof}
We have verified that $\mathcal{I}(\Theta)$  is a constant of motion of \eqref{C-1} when $\alpha=0$. In the presence of frustration $\alpha\in\left(-\frac{\pi}{2},\frac{pi}{2}\right)$, we define the perturbed functional
\begin{equation} \label{C-4}
 \mathcal J_\alpha(\Theta) := \mathcal I(\Theta) e^{\tan\alpha \sum_{j=1}^N \theta_j},\quad \forall\Theta.
 \end{equation}
For $\alpha = 0$, the functional $ \mathcal J_\alpha(\Theta)$ becomes the functional  ${\mathcal I}(\Theta)$. For $\alpha\in\left(-\frac{\pi}{2},\frac{\pi}{2}\right)$ we will see that the perturbation factor $e^{\tan\alpha \sum_{j=1}^N \theta_j}$ exactly cancels out the frustration effect, so that the functional $ \mathcal J_\alpha(\Theta)$ becomes a constant of motion of \eqref{C-1}.  
\begin{theorem} \label{T3.2}
Let $\Theta$ be a solution to \eqref{C-1} with $|\alpha| < \frac{\pi}{2}$. Then, the functional ${\mathcal J}_{\alpha}(\Theta)$ is time-invariant along the flow \eqref{C-4}:
\[ {\mathcal J}_{\alpha}(\Theta(t)) =  {\mathcal J}_{\alpha}(\Theta^0), \quad t \geq 0.      \]
\end{theorem}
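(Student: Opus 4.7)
Since $\log \mathcal{J}_\alpha = \log \mathcal{I} + \tan\alpha \sum_j \theta_j$ wherever $\mathcal{I}$ does not vanish, the plan is to split $\frac{d}{dt}\log \mathcal{J}_\alpha$ into $\frac{d}{dt}\log\mathcal{I}(\Theta)$ and $\tan\alpha \sum_j \dot\theta_j$, compute each in closed form, and verify that they cancel. First I would dispose of the degenerate case $\theta_i^0 \equiv \theta_{i+1}^0 \pmod{2\pi}$ for some $i$ exactly as in Theorem \ref{T3.1}: by uniqueness of solutions of \eqref{C-1} that pair remains equal for all time, so $\mathcal{I}(\Theta(t)) \equiv 0$ and the identity $\mathcal{J}_\alpha(\Theta(t)) = \mathcal{J}_\alpha(\Theta^0) = 0$ is trivial. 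In what follows I assume $\theta_i \not\equiv \theta_{i+1} \pmod{2\pi}$ for all $i$.

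For the first piece I would redo the chain of manipulations in \eqref{C-2} with the frustrated coupling $\cos(\theta_k - \theta_j + \alpha)$ in place of $\cos(\theta_k - \theta_j)$. Applying $\cos A - \cos B = -2\sin\tfrac{A+B}{2}\sin\tfrac{A-B}{2}$ to $\dot\theta_j - \dot\theta_{j-1}$ produces a factor $\sin\bigl(\tfrac{\theta_j - \theta_{j-1}}{2}\bigr)$ that cancels the sine denominator coming from the chain rule; then $2\cos X \sin Y = \sin(Y + X) + \sin(Y - X)$, together with an index shift, collapses the inner double sum into a scalar multiple of $\frac{\kp}{N}\sum_{j,k}\sin(\theta_k - \theta_j + \alpha)$. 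At $\alpha = 0$ this vanishes by the antisymmetry $\sum_{j,k}\sin(\theta_k - \theta_j) = 0$, recovering Theorem \ref{T3.1}; for general $\alpha$ I would expand $\sin(\theta_k - \theta_j + \alpha) = \sin(\theta_k - \theta_j)\cos\alpha + \cos(\theta_k - \theta_j)\sin\alpha$ and use the two standard order-parameter identities
\begin{equation*}
\sum_{j,k=1}^N \sin(\theta_k - \theta_j) = 0, \qquad \sum_{j,k=1}^N \cos(\theta_k - \theta_j) = \bigg|\sum_{k=1}^N e^{\mi \theta_k}\bigg|^2,
\end{equation*}
to reduce $\frac{d}{dt}\log \mathcal{I}$ to a constant multiple of $\sin\alpha \cdot \bigl|\sum_k e^{\mi\theta_k}\bigr|^2 / N$.

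The second piece is obtained by a shorter version of the same computation: summing \eqref{C-1} in $j$ and invoking the two symmetric-sum identities above gives $\sum_j \dot\theta_j = \kp \cos\alpha \cdot \bigl|\sum_k e^{\mi \theta_k}\bigr|^2 / N$, so after multiplying by $\tan\alpha$ the two contributions to $\frac{d}{dt}\log \mathcal{J}_\alpha$ are each a scalar multiple of $\sin\alpha \cdot \bigl|\sum_k e^{\mi\theta_k}\bigr|^2 / N$, and cancel. The role of the perturbation factor $e^{\tan\alpha \sum_j \theta_j}$ is precisely to produce this cancelling term, which is what makes $\mathcal{J}_\alpha$ a constant of motion even though $\mathcal{I}$ is not. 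The main obstacle I expect is the careful bookkeeping of signs through the sum-to-product expansions, since the identity $\sin\bigl(\tfrac{\theta_{j-1} - \theta_j}{2}\bigr) = -\sin\bigl(\tfrac{\theta_j - \theta_{j-1}}{2}\bigr)$ enters at several places and ultimately fixes the exact prefactor in the perturbation exponent. Working instead with the complex shorthand $\dot\theta_j = \kp R\cos(\Phi - \theta_j + \alpha)$, where $NRe^{\mi\Phi} = \sum_k e^{\mi\theta_k}$, substantially streamlines this bookkeeping. The hypothesis $|\alpha| < \tfrac{\pi}{2}$ enters only to keep $\tan\alpha$ finite.
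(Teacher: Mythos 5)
Your decomposition of $\frac{d}{dt}\log\mathcal{J}_\alpha$ into $\frac{d}{dt}\log\mathcal{I}$ and $\tan\alpha\sum_j\dot\theta_j$, together with the sum-to-product reduction of the first piece and the separate computation of the second, is exactly the paper's argument: the paper likewise derives $\frac{d}{dt}\mathcal{I}$ and $\frac{d}{dt}\sum_j\theta_j$ in \eqref{C-5}--\eqref{C-6}, observes that both are proportional to $\sum_{j,k}\cos(\theta_k-\theta_j)$, and integrates. Your invocation of the order-parameter identity $\sum_{j,k}\cos(\theta_k-\theta_j) = \big|\sum_k e^{\mi\theta_k}\big|^2$ is a cosmetic convenience layered on the same core computation.
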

\begin{proof}
Again, the trivial case $\theta_i^0\equiv \theta_{i+1}^0\mod 2\pi$ for some $i=1,\cdots,N$ is treated easily, so we may assume $\theta_i^0\not\equiv \theta_{i+1}^0\mod 2\pi$ for all $i=1,\cdots,N$. Similarly to \eqref{C-2}, one has 
\begin{align*}
\begin{aligned}
\frac{d \mathcal I}{dt} &= \sum_{j=1}^N \frac{\partial \mathcal I }{\partial \theta_j} \frac{d\theta_j}{dt} =\frac{\mathcal I}{2} \sum_{j=1}^N \frac{ \cos \left( \frac{\theta_j-\theta_{j-1}}{2}    \right)    }{\sin \left( \frac{\theta_j-\theta_{j-1}}{2}    \right)          } (\dot\theta_j -\dot\theta_{j-1}) = \frac{\mathcal I \kappa }{2N} \sum_{j, k=1}^N \frac{ \cos \left( \frac{\theta_j-\theta_{j-1}}{2}    \right)    }{\sin \left( \frac{\theta_j-\theta_{j-1}}{2}    \right)          }  \\
&\times \Big[ \cos\alpha \Big(\cos (\theta_k-\theta_j)-\cos(\theta_k-\theta_{j-1}) \Big) + \sin\alpha \Big (\sin(\theta_k-\theta_j) - \sin(\theta_k-\theta_{j-1}) \Big) \Big ] \\
&=\frac{\mathcal I \kappa}{N} \sum_{j,k=1}^N \frac{ \cos \left( \frac{\theta_j-\theta_{j-1}}{2}    \right)    }{\sin \left( \frac{\theta_j-\theta_{j-1}}{2}    \right)          }  \Big[ \cos\alpha \sin \left(  \frac{ 2\theta_k-\theta_j-\theta_{j-1}}{2} \right)\sin \left(  \frac{\theta_j-\theta_{j-1}}{2} \right) \\
&\hspace{4cm} +\sin\alpha \sin \left(  \frac{\theta_{j-1}-\theta_j}{2} \right)\cos \left(  \frac{2\theta_k-\theta_j-\theta_{j-1}}{2}\right)\Big] \\
&=\frac{\mathcal I \kappa}{N} \sum_{j,k=1}^N \cos \left( \frac{\theta_j-\theta_{j-1}}{2}\right) \Big[ \cos\alpha \sin \Big( \frac{2\theta_k-\theta_j-\theta_{j-1}}{2}\Big)-\sin\alpha \cos\Big( \frac{2\theta_k-\theta_j-\theta_{j-1}}{2} \Big) \Big]  \\
&=\frac{\mathcal I \kappa}{2N} \sum_{j,k=1}^N \cos\alpha (\sin(\theta_k-\theta_{j-1}) +\sin(\theta_k-\theta_j)) -\sin\alpha (\cos(\theta_k-\theta_{j-1}) + \cos(\theta_k-\theta_j)) \\
&=-\frac{\mathcal I \kappa \sin\alpha}{N} \sum_{j,k=1}^N \cos(\theta_k-\theta_j),
\end{aligned}
\end{align*}
i.e.,
\begin{equation} \label{C-5}
\frac{d {\mathcal I} }{dt} = -\frac{\mathcal I \kappa \sin\alpha}{N} \sum_{j,k=1}^N \cos(\theta_k-\theta_j).  
\end{equation}
On the other hand, it follows from \eqref{C-1} that 
\begin{equation}\label{C-6}
\frac{d}{dt} \sum_{j=1}^N \theta_j = \frac{\kappa \cos\alpha }{N} \sum_{j,k=1}^N \cos(\theta_k-\theta_j).
\end{equation}
Then, from \eqref{C-5} and \eqref{C-6}, we find 
\begin{equation*}
\frac{d}{dt} {\mathcal I }= -\mathcal I \tan\alpha \frac{d}{dt} \sum_{j=1}^N \theta_j.
\end{equation*}
Finally, we integrate the relation above to obtain 
\begin{equation*}
\log |\mathcal I| = -\tan\alpha  \sum_{i=1}^N \theta_i + C, \quad \textup{i.e.,}\quad \mathcal I(t) = \mathcal I(t_0) e^{-\tan\alpha \sum_{j=1}^N (\theta_j(t) -\theta_j(t_0))}.
\end{equation*}
This establishes the desired relation
\begin{equation*}
\mathcal J_\alpha(\Theta(t))  = \mathcal J_\alpha(\Theta(t_0)), \quad t>0.
\end{equation*}
\end{proof}
\begin{remark}
In \cite{S-W}, the authors already found that $\mathcal I(\Theta)$ is constant of motion for \eqref{C-1} with $\sin \alpha = 0$. For the interpretation of $\mathcal I(\Theta)$, we refer the reader to Appendix B of \cite{S-W}. On the other hand, to the best of authors' knowledge, the constant of motion $\mathcal J_\alpha(\Theta)$ for \eqref{C-1} with $\sin\alpha\neq 0$ has not yet been explored in the literature.
\end{remark}

\noindent In the following,  we introduce the order parameters $(R, \phi)$, measuring a sort of average synchrony of system \eqref{C-1}.
\begin{definition} \label{D3.1}
Let $\Theta = \Theta(t)$ be a solution of \eqref{C-1}. Then, the order parameters $(R, \phi)$ are defined by the implicit relations:
\begin{align*}
\begin{aligned}
& R e^{{\mathrm i} \phi} = \frac{1}{N} \sum_{j=1}^{N} e^{{\mathrm i} \theta_j},  \quad \textup{or equivalently}, \\
& R\cos\phi = \frac{1}{N} \sum_{j=1}^N \cos\theta_j, \qquad R\sin\phi = \frac{1}{N} \sum_{j=1}^N \sin\theta_j.
\end{aligned}
\end{align*}
\end{definition}
As a corollary of Theorem \ref{T3.2}, we have the asymptotic behavior of the phase vector. 
\begin{corollary} \label{C3.1} 
Let $\Theta$ be a solution to \eqref{C-1}. Then, the following dichotomy holds.
\begin{enumerate}
\item (\cite{H-K-L, H-K-L1})
If the frustration and initial data satisfy
\[  \alpha \in \Big(0, \frac{\pi}{2} \Big), \qquad \max_{1\leq i, j \leq N} | \theta_i^0 -\theta_j^0| <  2\alpha, \]
then complete phase synchronization emerges:
\[ \lim_{t \to \infty} R(t) = 1. \]
\item
If the frustration and initial data satisfy
\begin{equation} \label{C-6-0}
 \alpha \in  \Big(-\frac{\pi}{2}, 0 \Big), \quad \theta_i^0\neq \theta_j^0 \quad 1\le i\neq j\le N, 
 \end{equation} 
then  complete incoherence emerges:
\[ \lim_{t \to \infty} R(t) = 0. \]
\end{enumerate}
\end{corollary}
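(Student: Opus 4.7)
Part (1) is already contained in \cite{H-K-L, H-K-L1}, so the proof plan focuses on part (2). The strategy is to exploit the interplay between the conservation of $\mathcal{J}_\alpha$ from Theorem \ref{T3.2} and the trivial a priori bound $|\mathcal{I}(\Theta)|\le 1$, which together will force the total phase $\sum_j\theta_j$ to remain bounded in time, from which decay of $R(t)$ will follow.

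First, summing \eqref{C-1} over $j$ and using the identity $\sum_{j,k}\cos(\theta_k-\theta_j)=N^2R^2$, I obtain
\[
\frac{d}{dt}\sum_{j=1}^N\theta_j \;=\; \kappa N\cos\alpha\cdot R(t)^2 \;\ge\; 0,
\]
since $\cos\alpha>0$ for $\alpha\in(-\pi/2,0)$. Hence $t\mapsto\sum_j\theta_j(t)$ is non-decreasing. The assumption \eqref{C-6-0} that the initial phases be pairwise distinct (interpreted modulo $2\pi$) guarantees that every factor of $\mathcal{I}(\Theta^0)=\prod_i\sin((\theta_{i+1}^0-\theta_i^0)/2)$ is nonzero, while each factor is trivially bounded by $1$ in absolute value, so $|\mathcal{I}(\Theta(t))|\le 1$ along the flow. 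Rewriting the conservation law $\mathcal{J}_\alpha(\Theta(t))=\mathcal{J}_\alpha(\Theta^0)$ as
\[
|\mathcal{I}(\Theta(t))|\;=\;|\mathcal{I}(\Theta^0)|\,\exp\!\Bigl(\tan\alpha\cdot\bigl(\textstyle\sum_j\theta_j^0-\sum_j\theta_j(t)\bigr)\Bigr),
\]
and using $\tan\alpha<0$, I see that if $\sum_j\theta_j(t)\to +\infty$, the right-hand side would blow up, contradicting $|\mathcal{I}(\Theta(t))|\le 1$. Consequently $\sum_j\theta_j(t)$ converges to a finite limit.

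Integrating the monotonicity identity then yields $\int_0^\infty R(t)^2\,dt<\infty$. Since the right-hand side of \eqref{C-1} is uniformly bounded by $\kappa$, each $\theta_j$ is Lipschitz in $t$ with constant $\kappa$, and hence so is $R(t)$. A standard argument now gives $R(t)\to 0$: otherwise, there would exist $\delta>0$ and $t_n\to\infty$ with $R(t_n)\ge\delta$, and Lipschitz continuity would produce disjoint intervals of fixed positive length on which $R\ge\delta/2$, violating $L^2$-integrability.

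The main obstacle is conceptual rather than technical: one must recognize that the seemingly ad hoc perturbation factor $e^{\tan\alpha\sum_j\theta_j}$ appearing in the definition of $\mathcal{J}_\alpha$ encodes exactly the dissipation of the order parameter when $\alpha<0$; once this antagonism between the upper bound $|\mathcal{I}|\le 1$ and the monotone growth of $\sum_j\theta_j$ is identified, everything reduces to a standard dichotomy. A minor bookkeeping point concerns interpreting \eqref{C-6-0} modulo $2\pi$ so that $\mathcal{I}(\Theta^0)\ne 0$, but this does not affect the substance of the argument.
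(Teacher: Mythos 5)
Your proof is correct and follows essentially the same approach as the paper's: you derive the monotonicity $\frac{d}{dt}\sum_j\theta_j = \kappa N R^2\cos\alpha \ge 0$, then use the conservation of $\mathcal{J}_\alpha$ (Theorem \ref{T3.2}) together with the a priori bound $|\mathcal{I}|\le 1$ and $\tan\alpha<0$ to deduce by contradiction that $\sum_j\theta_j(t)$ stays bounded and hence converges. The only cosmetic difference is in the final step: the paper invokes Barbalat's lemma to conclude that $\sum_j\dot\theta_j = \kappa N R^2\cos\alpha\to 0$, whereas you unfold that lemma into its elementary ingredients ($\int_0^\infty R^2\,dt<\infty$ plus Lipschitz continuity of $R$), which is the same argument spelled out.
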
 
\begin{proof}
\noindent (i) For notational simplicity, we set $\tilde\alpha : = \pi/2- \alpha.$ Then, \eqref{C-1} becomes
\begin{equation} \label{C-6-0-1}
\dot \theta_j = \frac\kp N \sum_{k=1}^N \sin(\theta_k -\theta_j + \tilde \alpha).
\end{equation}
In this form \eqref{C-6-0-1} of \eqref{C-1}, the proof can be found in, for instance, \cite{H-K-L, H-K-L1}. We briefly sketch the proof. For the extremal fluctuations  $\theta_M(t)$ and $\theta_m(t)$ defined as
\begin{equation*}
\theta_M (t):= \max_{1\leq i \leq N}\theta_i(t),\quad \theta_m(t):= \min_{1\leq i \leq N} \theta_i(t),\quad t>0,
\end{equation*}
we set the diameter of phase configuration 
\begin{equation*}
D(\Theta(t)):= \theta_M(t) - \theta_m(t),\quad t\geq0.
\end{equation*}
Then, we find  
\begin{align} \label{C-6-0-2}
\begin{aligned}
\frac{d}{dt} D(\Theta) &= \dot\theta_M - \dot\theta_m = \frac{\kp}{N}\sum_{k=1}^N \sin(\theta_k - \theta_M + \tilde\alpha) - \sin(\theta_k - \theta_m + \tilde \alpha) \\
& = -\frac{2\kp}{N}\sum_{k=1}^N \sin\left( \frac{\theta_M - \theta_m}{2}\right) \cos\left( \frac{2\theta_k - (\theta_M + \theta_m) + 2\tilde\alpha}{2}  \right).
\end{aligned}
\end{align}
Since $\theta_m \leq \theta_j \leq \theta_M$ for $1\leq j \leq N$,  the term in the cosine function can be estimated as
\begin{equation*}
 -\frac{D(\Theta)}{2} + \tilde\alpha  = \frac{\theta_m - \theta_M}{2} + \tilde \alpha \leq   \frac{2\theta_k - (\theta_M + \theta_m) + 2\tilde\alpha}{2}  \leq \frac{\theta_M - \theta_m}{2} + \tilde \alpha = \frac{D(\Theta)}{2} + \tilde\alpha .
\end{equation*}
Hence, if we assume that the initial data satisfy $D(\Theta^0) <\pi-2\tilde \alpha = 2\alpha$, then a standard bootstrap argument shows that
\begin{equation} \label{C-6-0-3}
D(\Theta(t)) < D(\Theta^0) < 2\alpha,\quad t>0.
\end{equation}
Due to relation \eqref{C-6-0-3}, the cosine term in \eqref{C-6-0-2} becomes positive and thus yields exponential synchronization. \newline

\noindent (ii)~Note that 
\begin{align}
\begin{aligned} \label{C-6-1}
\sum_{i=1}^N \dot\theta_i  &= \frac{\kappa}{N}\sum_{i,j=1}^N \Big(  \cos(\theta_i-\theta_j)\cos\alpha + \sin(\theta_i-\theta_j)\sin\alpha \Big) \\
&= \frac{\kappa \cos\alpha}{N}\sum_{i,j=1}^N \Big( \cos\theta_i\cos\theta_j + \sin\theta_i\sin\theta_j \Big) \\
&= \kappa NR^2\cos\alpha \geq 0.
\end{aligned}
\end{align}
Thus, the total phase is a non-decreasing function of time $t$:
\begin{equation*}
\frac{d}{dt} \sum_{i=1}^N \theta_i   \geq 0.
\end{equation*}
Next, we claim:
\[  \sup_{0 \leq t < \infty} \sum_{i=1}^N\theta_i(t) < \infty. \]
Suppose to the contrary that the total phase is unbounded. Then since $\tan \alpha<0$, for arbitrary small $\varepsilon>0$, there is a time $T_\varepsilon>0$ such that 
\begin{equation} \label{C-7}
t>T_\varepsilon \quad  \Longrightarrow  \quad \left|e^{\tan\alpha \sum_{i=1}^N \theta_i(t) }\right| < \varepsilon.
\end{equation}
On the other hand, for $t>T_\veps$ 
\begin{equation} \label{C-8}
\mathcal J_\alpha(\Theta(t)) = \mathcal I(\Theta(t)) e^{\tan\alpha \sum_{i=1}^N \theta_i(t)}, \quad t>t_0.
\end{equation}
gives, along with \eqref{C-7},
\begin{equation*}
|\mathcal J_\alpha(\Theta(t))| \leq |\mathcal I(\Theta(t))| \varepsilon \leq \veps, \quad t>T_\varepsilon,
\end{equation*}
where we used the fact that sine function is uniformly bounded above by 1 and thus that $\mathcal I(\Theta)$ is also bounded above by 1. Since $\veps>0$ is chosen to be arbitrary, one has
\begin{equation*}
\lim_{t\to\infty} \mathcal J_\alpha(\Theta(t)) = 0.
\end{equation*}
However, since the quantity $\mathcal J_\alpha(\Theta)$ is conserved along the flow(due to Theorem \ref{T3.2}) and the initial data satisfy \eqref{C-6-0}, 
\begin{equation*}
\mathcal J_\alpha(\Theta(t)) = \mathcal J_\alpha(\Theta^0) \neq0,\quad t>0. 
\end{equation*}
This leads to the desired contradiction. Hence, $\sum_{i=1}^N \theta_i(t)$ is a non-decreasing and bounded function of time $t$, hence it converges. In particular, Barbalat's lemma together with the uniform boundedness property of the sine function implies
\[  \lim_{t\to\infty} \frac{d}{dt} \sum_{i=1}^N \dot\theta_i(t) =0. \]
Then, the above relation and \eqref{C-6-1} yield
\[ \kappa NR^2\cos\alpha \to 0 \quad \textup{i.e.,} \quad R \to 0 \quad \textup{as $t \to \infty$}. \]
\end{proof}

Note that the functional $\mathcal J_\alpha$ has a singularity at $\alpha=\pi/2$ due to the factor $\tan \alpha$. To overcome this singularity, we take the following two steps:
\newline
\begin{itemize}
\item
Step A:~By Theorem \ref{T3.2} involving parameter $\alpha$, we construct a new time-invariant functional. 

\vspace{0.2cm}
\item
Step B:~For the new-time invariant functional constructed in Step A, we show that the time-invariance of this new functional does not depend on $\alpha$ through the direct calculation.
\end{itemize}
Let $N\ge 4$, and let $\Theta$ be a phase configuration with 
\[ \theta_i \not \equiv \theta_j \mod 2\pi, \quad 1 \leq i, j \leq N. \]
Then, for four distinct indices $1 \leq a, b, c, d \leq N$, we define the cross-ratio 
\[ {\mathcal K}_{abcd}(\Theta) := \frac{ \Delta\theta_{ab} \cdot \Delta\theta_{cd}}{\Delta\theta_{ac} \cdot \Delta\theta_{bd}} \quad \textup{where} \quad \Delta \theta_{ab} := \sin \Big( \frac{\theta_a-\theta_b}{2} \Big ). \]
We will show that the cross-ratios are invariant.
\begin{theorem}{\cite{S-W}} \label{T3.3}
Let $N\ge 4$. Suppose that the frustration and initial data satisfy
\[  |\alpha| \leq \frac{\pi}{2}, \quad  \theta^0_i \not = \theta_j^0, \quad 1 \leq i\neq j \leq N. \]
Then, the functional $\mathcal K(\Theta)$ is time-invariant under the flow \eqref{C-1}.
\end{theorem}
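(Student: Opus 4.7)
The plan is to follow the two-step strategy indicated right before the statement: use Theorem \ref{T3.2} to motivate the cross-ratio as a conserved quantity when $|\alpha|<\pi/2$, and then verify the invariance by a direct differentiation that also covers the critical endpoints $|\alpha|=\pi/2$ where the factor $\tan\alpha$ in $\mathcal{J}_\alpha$ blows up.

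For \emph{Step A}, I would first observe that the coupling in \eqref{C-1} is symmetric under arbitrary permutations of the indices, so Theorem \ref{T3.2} holds verbatim for every cyclic ordering $\sigma$ of $\{1,\ldots,N\}$, producing a whole family of conserved quantities
$$\mathcal{J}_\alpha^\sigma(\Theta) = \prod_{i=1}^N \sin\!\left(\frac{\theta_{\sigma(i+1)}-\theta_{\sigma(i)}}{2}\right)\exp\!\left(\tan\alpha\sum_{i=1}^N\theta_i\right).$$
Given four distinct indices $a,b,c,d$, I pick two cyclic orderings $\sigma,\sigma'$ that coincide everywhere except at two adjacent positions where $b$ and $c$ are interchanged. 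In the ratio $\mathcal{J}_\alpha^\sigma/\mathcal{J}_\alpha^{\sigma'}$, the exponential factor cancels (it is ordering-free) and every sine factor not involving the four distinguished indices drops out. A short check then gives $\mathcal{J}_\alpha^\sigma/\mathcal{J}_\alpha^{\sigma'}=\pm\mathcal{K}_{abcd}(\Theta)$, so $\mathcal{K}_{abcd}$ is conserved for every $|\alpha|<\pi/2$.

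For \emph{Step B}, I would compute $\frac{d}{dt}\log\mathcal{K}_{abcd}$ directly. Logarithmic differentiation produces four terms of the form $\varepsilon_{ij}\cot\frac{\theta_i-\theta_j}{2}(\dot\theta_i-\dot\theta_j)$ with signs $\varepsilon_{ij}=\pm 1$ over the pairs $(a,b),(c,d),(a,c),(b,d)$. Substituting \eqref{C-1} and applying $\cos A-\cos B=-2\sin\frac{A+B}{2}\sin\frac{A-B}{2}$ cancels the potentially singular cotangent factor, and the product-to-sum identity $2\cos X\sin Y=\sin(X+Y)+\sin(Y-X)$ simplifies each term to
$$\cot\frac{\theta_i-\theta_j}{2}(\dot\theta_i-\dot\theta_j)=\frac{\kappa}{N}\sum_{k=1}^N\bigl[\sin(\theta_k-\theta_i+\alpha)+\sin(\theta_k-\theta_j+\alpha)\bigr].$$
In the alternating sum over the four pairs, each index $\ell\in\{a,b,c,d\}$ contributes the sine $\sin(\theta_k-\theta_\ell+\alpha)$ once with a plus and once with a minus sign, so the total vanishes identically.

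The only real obstacle is the bookkeeping at the end of Step B: tracking the eight resulting sines and confirming their pairwise cancellation. Crucially, this calculation never divides by $\cos\alpha$ and never invokes $\tan\alpha$, so the identity $\frac{d}{dt}\mathcal{K}_{abcd}=0$ holds uniformly for all $\alpha\in[-\pi/2,\pi/2]$, which is exactly what is needed to extend the invariance to the critical endpoints outside the reach of Theorem \ref{T3.2}. Finally, the hypothesis $\theta_i^0\neq\theta_j^0$ persists for all $t\ge 0$ by standard uniqueness for the smooth ODE \eqref{C-1}, guaranteeing that $\mathcal{K}_{abcd}$ is well-defined throughout the evolution.
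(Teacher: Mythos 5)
Your proposal is correct and follows the same two-step skeleton the paper lays out right before the theorem: Step~A is essentially identical to the paper's Case~A (taking the ratio of two of the conserved functionals $\mathcal J_\alpha$ from Theorem~\ref{T3.2} built from cyclic orderings that differ by one adjacent transposition; the exponential cancels and the sign works out to $\pm\mathcal K_{abcd}$). Where you genuinely diverge is Step~B. The paper handles the endpoint $|\alpha|=\pi/2$ by first invoking continuity (in fact analyticity) of solutions in the parameter $\alpha$ to pass to the limit $\alpha'\to\pm\pi/2$, and then, separately, gives a direct verification \emph{only} for $\alpha=\pm\pi/2$ after rewriting the dynamics in terms of the order parameters $(R,\phi)$. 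Your Step~B instead carries out the logarithmic differentiation directly from \eqref{C-1} for arbitrary $\alpha$: the identity
\[
\cot\frac{\theta_i-\theta_j}{2}\,(\dot\theta_i-\dot\theta_j)
=\frac{\kappa}{N}\sum_{k=1}^N\bigl[\sin(\theta_k-\theta_i+\alpha)+\sin(\theta_k-\theta_j+\alpha)\bigr]
\]
holds verbatim for all $\alpha\in[-\tfrac{\pi}{2},\tfrac{\pi}{2}]$, and the alternating sum over the pairs $(a,b),(c,d),(a,c),(b,d)$ cancels index by index. This is cleaner: it avoids both the limiting argument and the order-parameter machinery, never divides by $\cos\alpha$, and in fact makes your Step~A redundant --- your Step~B alone already proves the theorem for the full range $|\alpha|\le\pi/2$. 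Your closing remark that the non-collision hypothesis $\theta_i\ne\theta_j$ persists by forward-backward uniqueness for the smooth autonomous ODE is correct and is a point the paper leaves implicit.
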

\begin{proof} Below, we consider the two cases separately:
\[  \textup{Either}~|\alpha| < \frac{\pi}{2} \quad \textup{or} \quad |\alpha| = \frac{\pi}{2}. \]

\vspace{0.2cm}

\noindent $\bullet$~Case A $~(|\alpha| < \frac{\pi}{2})$: By permuting the indices if necessary, it is enough to consider the case 
\[ (a,b,c,d) = (1,2,3,4). \]
Then, it follows from Theorem \ref{T3.2} that 
\begin{align}
\begin{aligned} \label{C-8-0}
\mathcal J_\alpha^{(1)} &:= \mathcal I(\theta_1,\theta_2,\theta_3,\cdots,\theta_N) e^{\tan \alpha \sum_{i=1}^N \theta_i}, \\
\mathcal J_\alpha^{(2)} &:= \mathcal I(\theta_1,\theta_3,\theta_2,\cdots,\theta_N) e^{\tan \alpha \sum_{i=1}^N \theta_i}.
\end{aligned}
\end{align}
are time-invariant under the flow \eqref{C-1}. On the other hand, we have
\begin{equation} \label{C-8-1}
\frac{\mathcal J_\alpha^{(1)}}{   \mathcal J_\alpha^{(2)}} = -\frac{ \Delta\theta_{12} \Delta\theta_{34}}{\Delta\theta_{13} \Delta\theta_{24}}.
\end{equation}
Now, we combine \eqref{C-8-0} and \eqref{C-8-1} to get the desired result. \newline

\noindent $\bullet$~Case B:~Using a limiting argument, it follows from Case A that ${\mathcal K}_{abcd}(\Theta)$ is time-invariant: For fixed initial data $\Theta^0$, time $t>0$ and coupling strength but varying frustration $\alpha$, we can see that the solutions $\{\Theta_\alpha(t)\}_{\alpha\in\bbr}$(the subscript now indicating dependence on frustration $\alpha$) are continuous(actually analytic) in $\alpha$. Thus for $\alpha=\pm\frac{\pi}{2}$,
\[
{\mathcal K}_{abcd}(\Theta_{\pm \frac{\pi}{2}}(t))=\lim_{\alpha'\rightarrow \pm \frac{\pi}{2}}{\mathcal K}_{abcd}(\Theta_{\alpha'}(t))=\lim_{\alpha'\rightarrow \pm \frac{\pi}{2}}{\mathcal K}_{abcd}(\Theta_{\alpha'}^0)={\mathcal K}_{abcd}(\Theta^0).
\]
Note that this argument works even though $\mathcal J_\alpha(\Theta)$ has a singularity at $\alpha = \pm\pi/2$.

It is easy to treat the case $\alpha = \pm\pi/2$ directly, using the order parameters. For reference, we treat the case $\alpha =\pi/2$, as the case $\alpha = -\pi/2$ is similar. With $\alpha = \pi/2$, system \eqref{C-1} becomes 
\begin{equation*} \label{C-9}
\dot \theta_j = \frac{\kappa}{N} \sum_{k=1}^N \cos \left(\theta_k-\theta_j + \frac{\pi}{2} \right) = \frac{\kappa}{N} \sum_{k=1}^N  \sin (\theta_j-\theta_k). 
\end{equation*}
Now, we claim:
\begin{equation} \label{C-9-1}
\frac{d}{dt} {\mathcal K}_{abcd}(\Theta) = 0.
\end{equation}
{\bf Proof of claim}: It follows from Definition \ref{D3.1} that 
\begin{equation*}
R\cos\phi = \frac{1}{N} \sum_{i=1}^N \cos\theta_i \quad \textup{and} \quad R\sin\phi = \frac{1}{N} \sum_{i=1}^N \sin\theta_i,
\end{equation*}
and rewrite the Kuramoto model as
\begin{equation} \label{C-9-2}
\dot\theta_i = \kappa R\sin(\phi-\theta_i).
\end{equation}
We use \eqref{C-9-1} and \eqref{C-9-2} to see
\[
\dot\theta_i - \dot\theta_j = \kappa R(\sin(\phi-\theta_i)-\sin(\phi-\theta_j)) = \kappa R \cos \left( \frac{2\phi-\theta_i-\theta_j}{2}\right) \sin \left( \frac{\theta_j-\theta_i}{2}\right).
\]
For simplicity, we write 
\begin{equation*}
\Delta\theta_{ab}=S_{ab} = \sin \left( \frac{\theta_a-\theta_b}{2} \right) \quad \textup{and}\quad C_{ab} = \cos \left( \frac{\theta_a-\theta_b}{2} \right).
\end{equation*}
Then we differentiate $\mathcal K(\Theta)$ to obtain
\begin{align*}
&\frac{d}{dt} {\mathcal K}_{abcd}(\Theta) = \frac{d}{dt} \frac{ S_{ab} S_{cd}}{S_{ac}S_{bd}} \\
& \hspace{0.2cm} = \frac{    \dot S_{ab} S_{cd}S_{ac}S_{bd} + S_{ab} \dot S_{cd}S_{ac}S_{bd} - S_{ab} S_{cd}\dot S_{ac}S_{bd} -S_{ab} S_{cd}S_{ac}\dot S_{bd}     }{S_{ac}^2S_{bd}^2} \\
& \hspace{0.2cm}  =\frac{ S_{ab}S_{cd}}{S_{ac}S_{bd}} \left( \frac{\dot S_{ab}}{S_{ab}} + \frac{\dot S_{cd}}{S_{cd}} -\frac{\dot S_{ac}}{S_{ac}}-\frac{\dot S_{bd}}{S_{bd}}    \right) \\
& \hspace{0.2cm}  =\frac{ S_{ab}S_{cd}}{S_{ac}S_{bd}} \left[  \frac{C_{ab}}{S_{ab}} \left( \frac{\dot\theta_a -\dot\theta_b}{2}\right) + \frac{C_{cd}}{S_{cd}} \left( \frac{\dot\theta_c -\dot\theta_d}{2}\right) -\frac{C_{ac}}{S_{ac}} \left( \frac{\dot\theta_a -\dot\theta_c}{2}\right)-\frac{C_{cd}}{S_{cd}} \left( \frac{\dot\theta_c -\dot\theta_d}{2}\right)           \right] \\
& \hspace{0.2cm}  = \frac{KRS_{ab}S_{cd}}{2S_{ac}S_{bd}} \Biggr[  \frac{C_{ab}}{S_{ab}} S_{ab} \cos \left(\frac{2\phi-\theta_a-\theta_b}{2}\right) +\frac{C_{cd}}{S_{cd}} S_{cd} \cos \left(\frac{2\phi-\theta_c-\theta_d}{2}\right)    \\
&\hspace{2.5cm}- \frac{C_{ac}}{S_{ac}} S_{ac} \cos \left(\frac{2\phi-\theta_a-\theta_c}{2}\right) \frac{C_{cd}}{S_{cd}} S_{cd} \cos \left(\frac{2\phi-\theta_c-\theta_d}{2}\right) \Biggr] \\
& \hspace{0.2cm} =\frac{KRS_{ab}S_{cd}}{4S_{ac}S_{bd}}\Big[ \cos(\phi-\theta_a) + \cos(\phi-\theta_b) + \cos(\phi-\theta_c) + \cos(\phi-\theta_d)   \\
&\hspace{2.5cm} -\cos(\phi-\theta_a) -\cos(\phi-\theta_b) -\cos(\phi-\theta_c)-\cos(\phi-\theta_d) \Big]   \\
&=0.
\end{align*}
This yields the desired result.
\end{proof}

\subsection{Reduction to low-dimensional dynamics} \label{sec:3.2} Using the constants of motions developed so far, we will now, in this subsection, rewrite the Kuramoto model with frustration into low-dimensional dynamics depending on two auxiliary inputs.\\

Recall that 
\begin{equation}\label{Ku}
\begin{cases}
\displaystyle \dot{\theta}_j=\frac{\kappa}{N}\sum_{k=1}^N\sin (\theta_k-\theta_j +\alpha), \quad j = 1, \cdots, N, \\
\theta_j(0)=\theta_j^0,
\end{cases}
\end{equation}
In what follows, since the two-oscillator case is well-understood, we assume $N\ge 3$. The previous section showed that given any four oscillators on the circle, their geometric cross-ratio is  time-invariant. Naturally, the next step of the argument would be to foliate the state space into lower-dimensional submanifolds, in order to simplify the equations and obtain a clear picture of the dynamical properties. Indeed, this was done in the seminal paper \cite{S-W}, which verified that these cross-ratios constitute a collection of $(N-3)$-invariants and thus reduced the $N$-dimensional dynamics to 3-dimensional dynamics. Roughly speaking, this is because if we know all the cross-ratios and the positions of three points, the positions of the remaining points can be determined. In more technical terms, consider a subgroup of the M\"obius transformation group which preserves the unit disc:
\[
G=\left\{ z\mapsto e^{\mathrm{i}\phi }\frac{z-\alpha}{1-\bar{\alpha}z} : \phi\in \bbr, \alpha\in \bbc, |\alpha|<1 \right\},
\]
which is in fact isomorphic to the three-dimensional Lie group $\textup{PSL}(2,\bbr)$.
The orbits of the action of $G$ on $N$-tuples of points on the Riemann sphere in general position are precisely the connected components of the level sets with respect to the collection of all possible cross-ratios. Hence, the state at time $t$ can be described as the result of some transformation $M(t)\in G$ acting on the initial state, and so the $N$-dimensional dynamics of $(\theta_1,\cdots,\theta_N)\in \bbt^N$ can be described with the 3-dimensional dynamics of $M(t)\in G$. This argument has been pursued in  \cite{C-E-M}. Below, we use a slightly different approach. Once we effect a stereographic projection with respect to $e^{\mathrm{i}\theta_N}$, the conservation of the cross-ratios of the particles on the circle is equivalent to the conservation of the cross-ratios of the images particles under the stereographic projection.
\begin{figure}[h]
\centering
\includegraphics[width=0.6\textwidth]{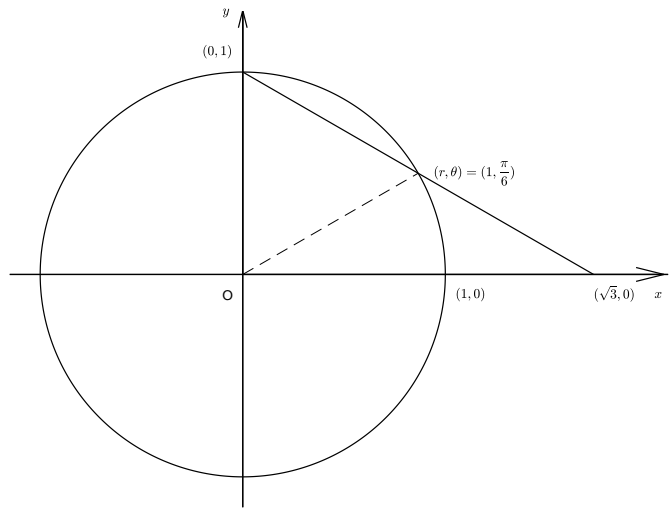} 
\caption{Stereographic projection} 
\end{figure}
Thus the image of the stereographic transformation changes along the one-dimensional affine transformation group $\textup{Aff}(1,\bbr)$, which is a 2-dimensional Lie group. The remaining 1-degree of freedom is reflected in the location of $e^{\mathrm{i}\theta_N}$.

We use this argument involving stereographic projection, instead of the M\"obius transformation group, for two reasons:
\begin{itemize}
\item 
The structure of $\textup{Aff}(1,\bbr)$ is more intuitive than that of $G \cong PSL(2,\bbr)$.
\item 
This method can easily be extended to higher-dimensional Lohe sphere models and thus can give a unified framework for understanding constants of motion.
\end{itemize}
Next, we prepare a formal setup. We rearrange the oscillators which are equal to $\theta_N$ modulo $2\pi$ to the end of the $N$-tuple and perform some modulo $2\pi$ shifts so that
\[
\theta_j^0=\theta_N^0, \quad j=N-m+1,\cdots, N,\qquad \theta_j^0\not\equiv \theta_N^0 \mod 2\pi, \quad j=1,\cdots, N-m,
\]
for some $1\le m \le N$. By the autonomy of \eqref{Ku}, we have  for all $t\ge 0$
\begin{equation} \label{C-9-3}
\theta_j(t)=\theta_N(t), \quad  j=N-m+1,\cdots, N,\qquad \theta_j(t)\neq \theta_N(t), \quad j=1,\cdots, N-m.
\end{equation}
We introduce the auxiliary variables
\begin{equation}\label{param}
\beta_j(t):=\theta_j(t)-\theta_N(t), \quad x_j :=\frac{1+\cos \beta_j}{\sin \beta_j}=\frac{\sin\beta_j}{1-\cos\beta_j},\quad j=1,\cdots,N-m.
\end{equation}
This clearly satisfies
\begin{equation}\label{param-2}
\sin\beta_j=\frac{2x_j}{x_j^2+1},\quad \cos\beta_j=\frac{x_j^2-1}{x_j^2+1}.
\end{equation}
The geometric meaning of these variables is that we rotate the unit circle $\bbs^1\subset \bbr^2$ so that $\theta_N$ is situated at the point $(1,0)$, and we use stereographic projection to project the $j$-th oscillator at $e^{\mathrm{i}\beta_j}$ onto the point $(0,x_j)$ of the $y$-axis. In the next proposition, we derive the Cauchy problem for $x_j$.
\begin{proposition}\label{P3.1}
The variables $\{x_j\}_{j=1}^{N-m}$ in \eqref{param} satisfy
\begin{equation}\label{stereo}
\begin{cases}
\displaystyle \dot{x}_j=\mathcal{A}+\mathcal{B}x_j, \quad t > 0, \\
\displaystyle x_j(0)=\frac{1+\cos(\theta_j^0-\theta_N^0)}{\sin(\theta_j^0-\theta_N^0)},\quad j = 1,\cdots, N-m,
\end{cases}
\end{equation}
where the coefficients ${\mathcal A} = \mathcal{A}(N,m,K,x_1,\cdots,x_{N-m},\alpha)$ and ${\mathcal B} = \mathcal{B}(N,m,K,x_1,\cdots,x_{N-m},\alpha)$ are explicitly given by the following relations:
\begin{align*}
\begin{aligned}
\mathcal{A} &:=\frac{\kappa}{N}\left[m\sin\alpha+\sum_{k=1}^{N-m}\left(\frac{2x_k}{x_k^2+1}\cos\alpha+\frac{x_k^2-1}{x_k^2+1}\sin\alpha\right)\right], \\
\mathcal{B} &:=\frac{\kappa}{N}\left[m\cos\alpha+\sum_{k=1}^{N-m}\left(-\frac{2x_k}{x_k^2+1}\sin\alpha+\frac{x_k^2-1}{x_k^2+1}\cos\alpha\right)\right].
\end{aligned}
\end{align*}
\end{proposition}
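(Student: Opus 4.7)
The plan is to differentiate the defining relation $x_j = (1+\cos\beta_j)/\sin\beta_j = \cot(\beta_j/2)$ along the flow for $\beta_j := \theta_j - \theta_N$, and to reorganize the resulting sum in a way that exposes the linear-in-$x_j$ structure. By \eqref{C-9-3}, the denominator $\sin\beta_j$ is nonzero for $j=1,\ldots,N-m$ on $[0,\infty)$, so differentiation is legitimate.

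First, I will compute $\dot{\beta}_j = \dot{\theta}_j - \dot{\theta}_N$ using \eqref{Ku} and rewrite the phase differences in terms of the $\beta_k$'s, via $\theta_k-\theta_j = \beta_k-\beta_j$ and $\theta_k-\theta_N = \beta_k$. The key algebraic step is the identity
\begin{equation*}
\sin(\beta_k - \beta_j + \alpha) - \sin(\beta_k + \alpha)
 = \sin(\beta_k+\alpha)(\cos\beta_j - 1) - \cos(\beta_k+\alpha)\sin\beta_j,
\end{equation*}
obtained by expanding $\sin((\beta_k+\alpha)-\beta_j)$. Summing over $k$ and setting $\mathcal{S} := \sum_{k=1}^N \sin(\beta_k+\alpha)$, $\mathcal{C} := \sum_{k=1}^N \cos(\beta_k+\alpha)$ yields
\begin{equation*}
\dot{\beta}_j = \frac{\kappa}{N}\bigl[(\cos\beta_j - 1)\,\mathcal{S} - \sin\beta_j\,\mathcal{C}\bigr].
\end{equation*}

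Second, I will convert to the $x_j$ variable. Since $x_j = \cot(\beta_j/2)$, differentiation gives $\dot{x}_j = -\dot\beta_j/(1-\cos\beta_j)$, and substituting the expression for $\dot\beta_j$ produces, after dividing by $1-\cos\beta_j$,
\begin{equation*}
\dot{x}_j = \frac{\kappa}{N}\Bigl[\mathcal{S} + \frac{\sin\beta_j}{1-\cos\beta_j}\,\mathcal{C}\Bigr] = \frac{\kappa}{N}\bigl[\mathcal{S} + \mathcal{C}\,x_j\bigr],
\end{equation*}
where the last equality uses the identity $\sin\beta_j/(1-\cos\beta_j) = (1+\cos\beta_j)/\sin\beta_j = x_j$. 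This is already the desired linear-in-$x_j$ form.

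Third, I will identify $\mathcal{A} = (\kappa/N)\mathcal{S}$ and $\mathcal{B} = (\kappa/N)\mathcal{C}$ by splitting each sum according to whether $\beta_k=0$ or not. For $k = N-m+1,\ldots,N$ one has $\beta_k = 0$, contributing $m\sin\alpha$ to $\mathcal{S}$ and $m\cos\alpha$ to $\mathcal{C}$; for $k = 1,\ldots,N-m$, I expand $\sin(\beta_k+\alpha)$ and $\cos(\beta_k+\alpha)$ via the angle-addition formulas and substitute the half-angle relations \eqref{param-2}. Matching the two expressions yields exactly the formulas for $\mathcal{A}$ and $\mathcal{B}$ in the statement. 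There is no substantial obstacle here; the argument is essentially a bookkeeping exercise in trigonometric identities, with the only subtle point being the verification that both terms $\mathcal{S}$ and $\mathcal{C}$ can be written cleanly in terms of the $x_k$'s after this splitting.
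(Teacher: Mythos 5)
Your proposal is correct and follows essentially the same route as the paper: differentiate $\beta_j=\theta_j-\theta_N$, use the sine-difference identity to isolate the $(\cos\beta_j-1)$ and $\sin\beta_j$ coefficients, pass to $x_j=\cot(\beta_j/2)$, and split the resulting sums at $k=N-m$ before applying the half-angle relations \eqref{param-2}. Introducing the shorthand $\mathcal{S}$ and $\mathcal{C}$ (equivalently, treating $\beta_k=0$ for $k>N-m$ uniformly rather than separating out the $m\sin(\beta_j-\alpha)$ term) is a minor cosmetic tidying of the paper's bookkeeping, not a different method.
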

\begin{proof} We consider $\eqref{stereo}_2$ and $\eqref{stereo}_1$, separately.  \newline 

\noindent $\bullet$ (Derivation of $\eqref{stereo}_2$):~This follows from the relation \eqref{param} directly. \newline

\noindent $\bullet$ (Derivation of $\eqref{stereo}_1$):~From the relation  \eqref{C-9-3}, it is easy to see that for $j=1,\cdots,N-m$,
\begin{align*}
\begin{aligned}
\dot{\theta}_j &=\frac{\kappa}{N}\sum_{k=1}^N\sin (\theta_k-\theta_j +\alpha) =\frac{\kappa}{N}\left(-m\sin(\beta_j-\alpha)+\sum_{k=1}^{N-m}\sin (\beta_k-\beta_j +\alpha)\right), \\
\dot{\theta_N} &=\frac{\kappa}{N}\sum_{k=1}^N\sin (\theta_k-\theta_N+\alpha)=\frac{\kappa}{N}\left(m\sin\alpha+\sum_{k=1}^{N-m}\sin (\beta_k+\alpha)\right).
\end{aligned}
\end{align*}
Thus, for $j=1,\cdots,N-m$ one has 
\begin{align}
\begin{aligned} \label{C-9-4}
\dot{\beta}_j &=\dot{\theta}_j-\dot{\theta}_N \\
&=\frac{\kappa}{N}\left[-m\sin(\beta_j-\alpha)+\sum_{k=1}^{N-m}\sin (\beta_k-\beta_j +\alpha)-m\sin\alpha-\sum_{k=1}^{N-m}\sin (\beta_k+\alpha)\right]\\
&=\frac{\kappa}{N}\left[m\sin\alpha(\cos\beta_j-1)-m\sin\beta_i\cos\alpha+\sum_{k=1}^{N-m}\left(\sin (\beta_k+\alpha)(\cos\beta_j-1)-\cos(\beta_k+\alpha)\sin\beta_j \right)\right].
\end{aligned}
\end{align}
Again, for $j=1,\cdots,N-m$, it follows from \eqref{param} and \eqref{param-2} that
\begin{align*}
\dot{x}_j &=\frac{d}{dt}\left(\frac{\sin \beta_i}{1-\cos\beta_j}\right)=\left(\frac{\cos\beta_j \cdot(1-\cos\beta_j)-\sin\beta_j \cdot\sin\beta_j}{(1-\cos\beta_j)^2}\right)\dot{\beta}_j  \\
&=\frac{\cos\beta_j -1}{(1-\cos\beta_j)^2}\dot{\beta}_j=\frac{1}{\cos\beta_j-1}\dot{\beta}_j \\
&=\frac{\kappa}{N}\left[m\sin\alpha+\frac{m\sin\beta_j}{1-\cos\beta_j}m\cos\alpha+\sum_{k=1}^{N-m}\left(\sin (\beta_k+\alpha)+\cos(\beta_k+\alpha)\frac{\sin\beta_j}{1-\cos\beta_j}\right)\right]\\
&=\frac{\kappa}{N}\biggl[m\sin\alpha+mx_j \cos\alpha \\
& \hspace{0.5cm} +\sum_{k=1}^{N-m}\left\{\left(\frac{2x_k}{x_k^2+1}\cos\alpha+\frac{x_k^2-1}{x_k^2+1}\sin\alpha\right) +\left(-\frac{2x_k}{x_k^2+1}\sin\alpha+\frac{x_k^2-1}{x_k^2+1}\cos\alpha\right)x_j \right\} \biggl]\\
&=\frac{\kappa}{N}\left[m\sin\alpha+\sum_{k=1}^{N-m}\left(\frac{2x_k}{x_k^2+1}\cos\alpha+\frac{x_k^2-1}{x_k^2+1}\sin\alpha\right)\right]\\
&\hspace{0.5cm} +x_j \frac{\kappa}{N}\left[m\cos\alpha+\sum_{k=1}^{N-m}\left(-\frac{2x_k}{x_k^2+1}\sin\alpha+\frac{x_k^2-1}{x_k^2+1}\cos\alpha\right)\right] \\
&= \mathcal{A}+\mathcal{B}x_j.
\end{align*}
\end{proof}
\begin{lemma}\label{L3.1}
The following assertions hold. For $t \geq 0$ and $ i,j,k,l=1,\cdots,N-m$, 
\begin{align*}
\begin{aligned}
& \textup{(i)}~\frac{d}{dt}(x_i-x_j)=\mathcal{B}(x_i-x_j). \\
& \textup{(ii)}~(x_i(t)-x_j(t))(x_k^0-x_l^0)-(x_i^0-x_j^0)(x_k(t)-x_l(t))=0.
\end{aligned}
\end{align*}
\end{lemma}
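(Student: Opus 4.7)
The plan is to derive both assertions from Proposition \ref{P3.1} in a purely mechanical way, with no further geometric input. The only substance here is the observation that the coefficients $\mathcal{A}$ and $\mathcal{B}$ in \eqref{stereo} do not depend on the index $j$, so the equations for different indices differ only by their initial data.

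For part (i), I would subtract the equation $\dot{x}_j=\mathcal{A}+\mathcal{B}x_j$ from $\dot{x}_i=\mathcal{A}+\mathcal{B}x_i$. The $\mathcal{A}$-terms cancel, yielding
\[
\frac{d}{dt}(x_i-x_j)=\mathcal{B}(x_i-x_j)
\]
as claimed. No further computation is required.

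For part (ii), the point is that although $\mathcal{B}=\mathcal{B}(t)$ depends on the full state, the linear ODE in (i) has the same coefficient $\mathcal{B}(t)$ for every pair $(i,j)$. I would therefore fix $t\ge 0$, set $F(t):=\exp\!\left(\int_0^t \mathcal{B}(s)\,ds\right)$, and integrate (i) to obtain
\[
x_i(t)-x_j(t)=(x_i^0-x_j^0)\,F(t),\qquad x_k(t)-x_l(t)=(x_k^0-x_l^0)\,F(t),
\]
for all admissible indices $i,j,k,l\in\{1,\ldots,N-m\}$. Multiplying the first identity by $(x_k^0-x_l^0)$ and the second by $(x_i^0-x_j^0)$ and subtracting makes the common factor $F(t)$ disappear, which gives exactly the cross-ratio-type identity in (ii).

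There is no genuine obstacle: the argument is a one-line linear ODE calculation. The only thing worth mentioning is that the identity in (ii) is the infinitesimal expression of the fact that the projective ratio of any two differences $x_i-x_j$ is conserved, which in the stereographic picture is the Lemma \ref{T3.3}-type cross-ratio invariance restated in affine coordinates; in particular the argument works regardless of whether $\mathcal{B}(s)$ is integrable or changes sign, since the common exponential factor $F(t)$ always cancels out of the final expression.
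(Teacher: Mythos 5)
Your proof is correct and matches the paper's argument essentially verbatim: subtract the two instances of \eqref{stereo} to obtain (i), then integrate the resulting linear ODE with the common exponential factor $\exp\!\bigl(\int_0^t \mathcal{B}(s)\,ds\bigr)$ and cancel it to obtain (ii). The only minor caveat is your closing remark about $\mathcal{B}$ possibly failing to be integrable; in fact $\mathcal{B}$ is continuous and bounded along any solution, so $F(t)$ is always well-defined, but this is a side comment and does not affect the validity of the argument.
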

\begin{proof} (i)~We use Proposition \ref{P3.1} to get 
\begin{equation} \label{C-9-4}
\frac{d}{dt}(x_i-x_j)=(\mathcal{A}+\mathcal{B}x_i)-(\mathcal{A}+\mathcal{B}x_j)=\mathcal{B}(x_i-x_j).
\end{equation}
(ii)~Note that $\mathcal{A}$ and $\mathcal{B}$ are independent of the choice of $i$ and $j$. We now solve \eqref{C-9-4} to obtain
\begin{align*}
\begin{aligned}
x_i(t)-x_j(t) &= (x_i^0-x_j^0)\cdot \exp\left[\int_0^t \mathcal{B}(s)ds\right], \\
x_k(t)-x_l(t) &= (x_k^0-x_l^0)\cdot \exp\left[\int_0^t \mathcal{B}(s)ds\right].
\end{aligned}
\end{align*}
This yields the desired relation. 
\end{proof}
\begin{remark}
Since the stereographic projection used in \eqref{param} takes the cross ratio to the ratio of side lengths on the real line, the last statement of Lemma \ref{L3.1} is equivalent to the statement that the cross ratios are constant.
\end{remark}
Thus, there is an affine transformation of the real line that takes the configuration $\{x_i(0)\}_{i=1}^{N-m}$ to each $\{x_i(t)\}_{i=1}^{N-m}$, i.e., for each time $t\ge 0$, we may find two functions $f(t)$ and $g(t)$ which satisfy
\begin{equation}\label{affine}
x_j(t)=g(t)+f(t)x_j^0,\quad j=1,\cdots,N-m.
\end{equation}
However, the choice of such $f(t)$ and $g(t)$ would not be unique, if $x_1^0=\cdots=x_{N-m}^0$. Of course, in this case, we do not need all this machinery, because this is a weighted 2-oscillator case which is easy to analyze. Nevertheless, we can choose $f(t)$ and $g(t)$ in a consistent manner. The heuristic argument runs as follows. First, clearly we should put $f(0)=1$ and $g(0)=0$, and since the arrangement of the $x_i$'s is invariant, the affine transformation should be orientation-preserving, i.e., $f(t)>0$ for all $t\ge 0$. We substitute \eqref{affine} in \eqref{stereo} to obtain
\[
g'(t)+f'(t)x_j^0 = \mathcal{A}+\mathcal{B}x_j(t) =\mathcal{A}+\mathcal{B}g(t)+\mathcal{B}f(t)x_j^0,\quad j=1,\cdots, N-m.
\]
Hence, it would be reasonable to formulate
\begin{equation} \label{C-9-4-1}
\frac{d}{dt} \begin{pmatrix}
f \\ g
\end{pmatrix}
= 
\tilde{\mathcal{B}} 
\begin{pmatrix}
f \\ g
\end{pmatrix}
+ 
\begin{pmatrix}
0 \\ 
\tilde{\mathcal{A}}
\end{pmatrix}
\quad
\textup{and} \quad
\begin{pmatrix}
f(0) \\ g(0)
\end{pmatrix}
=
\begin{pmatrix}
1 \\ 0
\end{pmatrix},
\end{equation}
where $\tilde{\mathcal{A}} = \tilde{\mathcal{A}}(N,m,\kappa,f,g,\alpha)$ and $\tilde{\mathcal{B}} = \tilde{\mathcal{B}}(N,m,\kappa,f,g,\alpha)$ are now expressed in terms of $f$ and $g$:
\begin{align}
\begin{aligned}  \label{C-9-5}
\tilde{\mathcal{A}} &=\frac{\kappa}{N}\left[m\sin\alpha+\sum_{k=1}^{N-m}\left(\frac{2fx_k^0+2g}{(fx_k^0+g)^2+1}\cos\alpha+\frac{(fx_k^0+g)^2-1}{(fx_k^0+g)^2+1}\sin\alpha\right)\right], \\
\tilde{\mathcal{B}} &=\frac{\kappa}{N}\left[m\cos\alpha+\sum_{k=1}^{N-m}\left(-\frac{2(fx_k^0+g)}{(fx_k^0+g)^2+1}\sin\alpha+\frac{(fx_k^0+g)^2-1}{(fx_k^0+g)^2+1}\cos\alpha\right)\right].
\end{aligned}
\end{align}
Formally speaking, this can be stated as follows:
\begin{proposition}\label{P3.2}
Suppose that $(f, g)$ satisfies the Cauchy problem \eqref{C-9-4-1}--\eqref{C-9-5}. Then the relation \eqref{affine} holds.
\end{proposition}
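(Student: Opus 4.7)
The plan is to reduce Proposition \ref{P3.2} to a uniqueness statement for the Cauchy problem \eqref{stereo}. Define the candidate trajectory
\[
\tilde{x}_j(t) := g(t) + f(t) x_j^0, \quad j = 1, \ldots, N-m,
\]
and show that $(\tilde{x}_j)_{j=1}^{N-m}$ solves the same initial value problem as $(x_j)_{j=1}^{N-m}$. Since $\mathcal{A}$ and $\mathcal{B}$ are smooth (indeed analytic) functions of $(x_1, \ldots, x_{N-m})$, standard ODE uniqueness will then force $\tilde{x}_j \equiv x_j$, which is exactly \eqref{affine}.

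First I would check the initial condition. From $f(0) = 1$ and $g(0) = 0$ we get $\tilde{x}_j(0) = x_j^0$, matching $\eqref{stereo}_2$. Next I would differentiate: using $\eqref{C-9-4-1}$, which componentwise reads $f' = \tilde{\mathcal{B}} f$ and $g' = \tilde{\mathcal{B}} g + \tilde{\mathcal{A}}$, a direct computation gives
\[
\dot{\tilde{x}}_j = g'(t) + f'(t) x_j^0 = \tilde{\mathcal{A}} + \tilde{\mathcal{B}} \bigl( g + f x_j^0 \bigr) = \tilde{\mathcal{A}} + \tilde{\mathcal{B}} \tilde{x}_j.
\]

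The core observation — and the step one must be careful about — is that the expressions \eqref{C-9-5} for $\tilde{\mathcal{A}}$ and $\tilde{\mathcal{B}}$ are obtained from $\mathcal{A}$, $\mathcal{B}$ in Proposition \ref{P3.1} by substituting $x_k \mapsto f x_k^0 + g = \tilde{x}_k$. Consequently,
\[
\tilde{\mathcal{A}} = \mathcal{A}\bigl(N,m,\kappa,\tilde{x}_1,\ldots,\tilde{x}_{N-m},\alpha\bigr), \qquad \tilde{\mathcal{B}} = \mathcal{B}\bigl(N,m,\kappa,\tilde{x}_1,\ldots,\tilde{x}_{N-m},\alpha\bigr),
\]
so $\dot{\tilde{x}}_j = \mathcal{A}(\tilde{x}_*) + \mathcal{B}(\tilde{x}_*)\tilde{x}_j$. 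Thus $(\tilde{x}_j)$ satisfies precisely the coupled system \eqref{stereo} with the same initial data as $(x_j)$, and uniqueness of solutions to the Lipschitz ODE system \eqref{stereo} yields $\tilde{x}_j(t) = x_j(t)$ for all $t \ge 0$ and $j = 1, \ldots, N-m$.

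The main (and essentially only) obstacle is purely bookkeeping: one must verify that the algebraic substitution $x_k \mapsto f x_k^0 + g$ really converts $\mathcal{A},\mathcal{B}$ into $\tilde{\mathcal{A}},\tilde{\mathcal{B}}$ as written in \eqref{C-9-5}, so that the derivative of $\tilde{x}_j$ matches the right-hand side of \eqref{stereo} evaluated at $\tilde{x}_*$. Once that identification is made explicit, the proof is a one-line application of the Picard–Lindelöf uniqueness theorem.
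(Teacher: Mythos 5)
Your argument is correct and is essentially the same as the paper's: define $\tilde{x}_j := g + f x_j^0$, check $\tilde{x}_j(0)=x_j^0$, differentiate to get $\dot{\tilde{x}}_j = \tilde{\mathcal{A}} + \tilde{\mathcal{B}}\tilde{x}_j$, observe that $\tilde{\mathcal{A}}=\mathcal{A}(\{\tilde{x}_j\})$ and $\tilde{\mathcal{B}}=\mathcal{B}(\{\tilde{x}_j\})$, and conclude by ODE uniqueness. The only thing you omit that the paper includes is a preliminary step establishing that $(f,g)$ exists globally in time — the paper bounds $|\tilde{\mathcal{A}}|,|\tilde{\mathcal{B}}|\le\kappa$ via Cauchy–Schwarz and deduces $|f|\le e^{\kappa t}$, $|g|\le e^{\kappa t}-1$ — which is what licenses asserting the relation for all $t\ge 0$ rather than merely on the maximal interval of existence of $(f,g)$.
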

\begin{proof} 
We leave the detailed proof to Appendix A. 
\end{proof}

\begin{remark} Below, we provide several comments on the result of Proposition \ref{P3.2}. 
\begin{enumerate}
\item We reduced the Kuramoto model \eqref{Ku} for $N$ nonlinearly coupled equations to a nonlinear system of two equations for $f$ and $g$. The degree of freedom $N$ in \eqref{Ku} is given in the initial conditions; one degree of freedom is taken care of by considering the relative phase differences $\beta_i$, and then the remaining degree of freedom $N-1$ is given in the governing differential equation of Proposition \ref{P3.2}. 
\vspace{0.2cm}
\item Proposition \ref{P3.2} highlights the dynamical significance of the invariance of the cross ratios: the $N$ equations of \eqref{Ku} can be reduced to two equations.
\vspace{0.2cm}
\item We distinguished ($\tilde{\mathcal{A}}, \tilde{\mathcal{B}}$) from ($\mathcal{A}, \mathcal{B}$) in order to provide a rigorous proof of Proposition \ref{P3.2}. The proposition tells us that such a distinction is unnecessary for practical purposes.
\end{enumerate}
\end{remark}
Below, we provide explicit examples for coefficient functions ${\mathcal A}$ and ${\mathcal B}$ for $\alpha = 0$ and $\frac{\pi}{2}$, respectively. 
\begin{example}
\begin{enumerate}
\item
For the ``sine-Kuramoto model \eqref{Ku}" with $\alpha = 0$, we have
\[
\mathcal{A}=\frac{\kappa}{N}\sum_{k=1}^{N-m}\frac{2f(t)x_k^0+2g(t)}{(f(t)x_k^0+g(t))^2+1}, \quad \mathcal{B}=\frac{\kappa}{N}\left[m+\sum_{k=1}^{N-m}\frac{(f(t)x_k^0+g(t))^2-1}{(f(t)x_k^0+g(t))^2+1}\right].
\]
\item
For the ``cosine-Kuramoto model  \eqref{Ku}" with $\alpha = \frac{\pi}{2}$, we have
\[
\mathcal{A}=\frac{\kappa}{N}\left[m+\sum_{k=1}^{N-m}\frac{(f(t)x_k^0+g(t))^2-1}{(f(t)x_k^0+g(t))^2+1}\right], \quad 
\mathcal{B}=-\frac{\kappa}{N}\sum_{k=1}^{N-m}\frac{2(f(t)x_k^0+g(t))}{(f(t)x_k^0+g(t))^2+1}.
\]
\end{enumerate}
\end{example}

 
\section{The Lohe sphere model} \label{sec:5}
\setcounter{equation}{0} 
In this section, we study the constants of motion to the Lohe sphere model and establish the non-existence of limit cycle solutions for identical oscillators. We also provide a reduction of the Lohe sphere model into a low-dimensional system. 
\subsection{Constants of motion} \label{sec:5.1}
In this section, we study the constants of motion as a generalization of the constants of motion for the Kuramoto model in Section \ref{sec:3.1}. First, we consider the Cauchy problem of the Lohe sphere model on $\bbs^d$ without frustration:
\begin{equation}  \label{Z-1}
\begin{cases}
\displaystyle \dot x_i = \Omega_i x_i + \frac{\kappa}{N} \sum_{j=1}^N (x_j -\langle x_i,x_j\rangle x_i), \quad t > 0, \\
\displaystyle x_i(0) =x_i^0,
\end{cases}
\end{equation} 
where $\kappa$ is a nonnegative constant coupling strength. In what follows, we consider the following identical particle case: 
\begin{equation} \label{Z-1-0}
\Omega_i \equiv O \quad \textup{and}\quad x_i^0\neq x_j^0 \quad \textup{for all $i\neq j \in \{1,\cdots,N\}$}.
\end{equation}
As motivation, let $x_i=(\cos\theta_i,\sin\theta_i) \in \bbs^1$ be a point on the unit circle. Then we can easily check that
\begin{equation} \label{Z-1-1}
2\left|\sin\left( \frac{\theta_i-\theta_j}{2}\right) \right| = \|x_i-x_j\|.
\end{equation}
Then, based on the constant of motion in Section \ref{sec:3.1} and \eqref{Z-1-1}, we define the following cross-ratio functional of four points in general position \eqref{Z-1-0}:
\begin{equation}\label{Z-2}
H_{abcd}(\mathcal X):= \frac{\|x_a-x_b\|\cdot\|x_c-x_d\|}{\|x_a-x_c\|\cdot\|x_b-x_d\|}, \quad 1\leq a,b,c,d\leq N.
\end{equation}

\vspace{0.2cm}

 We here mention that conservation of the functional $H$ has been independently verified in recent work \cite{Lo-5}, where  the Watanabe-Strogatz transform for the Kuramoto model is generalized to its high-dimensional model \eqref{Z-1} with $\Omega_i \equiv O$. The detailed argument can be found in \cite{Lo-5}. In the following theorem, we show the aforementioned functional \eqref{Z-2} is time-invariant under the flow \eqref{Z-1}.

\begin{theorem} \label{T5.1}
Let $N\ge 4$, and let $\{ x_i(t) \}$ be a solution to the Lohe sphere model \eqref{Z-1}--\eqref{Z-1-0}. Then for any distinct indices $a,b,c,d$, the functional $H_{abcd}(\mathcal X)$ is invariant under the Lohe flow \eqref{Z-1}--\eqref{Z-1-0}:
\begin{equation*}
\frac{d}{dt} H_{abcd}(\mathcal X) = 0. 
\end{equation*}
\end{theorem}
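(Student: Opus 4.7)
My plan is to prove the invariance by a direct differentiation argument, exploiting the fact that each squared chord $\|x_i - x_j\|^2$ evolves in a purely multiplicative fashion with a rate that decomposes additively over the index pair.

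The first step is to compute $\frac{d}{dt}\|x_i-x_j\|^2 = 2\langle x_i - x_j, \dot x_i - \dot x_j\rangle$. Since $\Omega_i \equiv 0$, the difference $\dot x_i - \dot x_j$ equals $\frac{\kappa}{N}\sum_k\bigl(\langle x_j,x_k\rangle x_j - \langle x_i,x_k\rangle x_i\bigr)$, where the ``$x_k$'' mean-field terms cancel. Taking the inner product with $x_i - x_j$ and using $\|x_i\|=\|x_j\|=1$ collapses the resulting four-term sum, and after using the identity $\langle x_i, x_j\rangle - 1 = -\tfrac{1}{2}\|x_i-x_j\|^2$ I expect to arrive at the clean relation
\begin{equation*}
\frac{d}{dt}\|x_i-x_j\|^2 \;=\; -\frac{\kappa}{N}\,\|x_i-x_j\|^2\,(S_i + S_j), \qquad S_i := \sum_{k=1}^N \langle x_i, x_k\rangle.
\end{equation*}
The central structural feature here is that the rate $-\frac{\kappa}{N}(S_i+S_j)$ splits as a sum of a function of $i$ and a function of $j$.

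From this, taking the logarithmic derivative of $H_{abcd}(\mathcal{X})$ gives
\begin{equation*}
\frac{d}{dt}\log H_{abcd} \;=\; -\frac{\kappa}{2N}\Bigl[(S_a+S_b) + (S_c+S_d) - (S_a+S_c) - (S_b+S_d)\Bigr],
\end{equation*}
and the four $S_\bullet$ terms cancel identically in pairs. This gives $\frac{d}{dt}H_{abcd}(\mathcal X)=0$, as required. The non-vanishing of the denominators under the flow follows from the general-position hypothesis \eqref{Z-1-0} combined with the uniqueness of solutions: if two oscillators coincide initially they coincide forever, so the complement is forward-invariant, and no division-by-zero issue arises on any finite interval.

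I do not anticipate a genuine obstacle; the only place where care is needed is the algebraic collapse of the four cross terms in $\langle x_i-x_j, \langle x_j,x_k\rangle x_j - \langle x_i,x_k\rangle x_i\rangle$ to a product of the form $(\langle x_i,x_j\rangle-1)(S_i+S_j)$, which is what makes the additive splitting in $(i,j)$ possible. Once that identity is secured, the cross-ratio structure of $H_{abcd}$ matches it exactly, and the conclusion is immediate. This mirrors the Kuramoto proof of Theorem \ref{T3.3} at the level of $\log$-derivatives, but bypasses the stereographic/Möbius picture and uses only the chord-length evolution on $\mathbb{S}^d$.
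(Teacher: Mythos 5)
Your proof is correct and is essentially the same as the paper's: both establish that the logarithmic derivative of each chord length $\|x_i-x_j\|$ equals $-\tfrac{\kappa}{2N}(S_i+S_j)$, where $S_i = \sum_k \langle x_i, x_k\rangle = N\langle \bar{x}, x_i\rangle$, and then observe that the four such terms cancel in pairs in the cross-ratio. The paper writes the rate using the centroid $\bar{x}$ and differentiates $H_{abcd}$ by the quotient rule rather than taking $\log H_{abcd}$, but these are cosmetic differences of notation, not of substance.
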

\begin{proof}
For notational simplicity, we set the distance between two positions $x_a$ and $x_b$ and the total centroid to be the following:
\[ \ell_{ab} := \|x_a-x_b\|  \quad \textup{and} \quad \bar x := \frac{1}{N} \sum_{k=1}^N x_k.  \]
We differentiate $\ell_{ab}^2$ to obtain 
\[
\ell_{ab} \frac{d}{dt} \ell_{ab} = -\kappa (1-\langle x_a,x_b \rangle) ( \langle \bar x ,x_a \rangle + \langle \bar x,x_b\rangle).
\]
Since $\ell_{ab}^2 = 2(1-\langle x_a,x_b\rangle )$, we have 
\begin{equation*}
\frac{ \dot \ell_{ab}}{\ell_{ab}} = -\frac{\kappa}{2} \Big( \langle \bar x ,x_a \rangle + \langle \bar x,x_b\rangle \Big).
\end{equation*}
Then we finally differentiate $H_{abcd}(\mathcal X)$ to attain
\begin{align*}
\frac{d}{dt} H_{abcd}(\mathcal X) &= \frac{ \dot\ell_{ab} \ell_{cd} \ell_{ac}\ell_{bd}  + \ell_{ab} \dot \ell_{cd} \ell_{ac}\ell_{bd}   -\ell_{ab} \ell_{cd}\dot\ell_{ac}\ell_{bd} -\ell_{ab} \ell_{cd} \ell_{ac} \dot\ell_{bd}             }{        \ell_{ac}^2\ell_{bd}^2           } \\
&=\frac{ \ell_{ab} \ell_{cd}}{\ell_{ac}\ell_{bd}} \left(  \frac{\dot\ell_{ab}}{\ell_{ab}} +  \frac{\dot\ell_{cd}}{\ell_{cd}}  - \frac{\dot\ell_{ac}}{\ell_{ac}} - \frac{\dot\ell_{bd}}{\ell_{bd}}     \right) \\
&= -\frac{\kappa}{2} \frac{ \ell_{ab} \ell_{cd}}{\ell_{ac}\ell_{bd}} \\
& \times \Big( \langle \bar x, x_a\rangle + \langle \bar x, x_b\rangle + \langle \bar x, x_c\rangle+ \langle \bar x, x_d\rangle - \langle \bar x, x_a\rangle -\langle \bar x, x_c\rangle - \langle \bar x, x_b\rangle-\langle \bar x, x_d\rangle \Big) \\
&=0.
\end{align*}
This yields the desired estimate.
\end{proof}
In what follows, we present three applications of Theorem \ref{T5.1}. The first result concerns the number of particles converging toward opposite poles on the unit sphere. 
\begin{corollary} \label{C5.1}
Let $\{ x_i(t) \}$ be a solution to the Lohe system \eqref{Z-1}--\eqref{Z-1-0} such that all initial positions $x_i^0$ are different from each other, and we set $\mathcal M_N$ and ${\mathcal M}_S$ be numbers of the particles which aggregate to the north pole $\mathcal N$ and south pole $\mathcal S$, respectively. Then one has
\begin{equation*}
\min \{  \mathcal M_{\mathcal N}  ,\mathcal M_{\mathcal S}  \} \leq 1.
\end{equation*}
\end{corollary}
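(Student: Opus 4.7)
The plan is to argue by contradiction, exploiting the conservation of the cross-ratio $H_{abcd}$ established in Theorem \ref{T5.1}. Specifically, I would suppose for contradiction that $\min\{\mathcal{M}_{\mathcal{N}}, \mathcal{M}_{\mathcal{S}}\} \geq 2$. Then one may select four distinct indices $a, b, c, d \in \{1, \ldots, N\}$ such that
\[
x_a(t),\, x_b(t) \longrightarrow \mathcal{N}, \qquad x_c(t),\, x_d(t) \longrightarrow \mathcal{S} \quad \text{as } t \to \infty.
\]

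The first step is to compute $\lim_{t \to \infty} H_{abcd}(\mathcal{X}(t))$. Since $\mathcal{N}$ and $\mathcal{S}$ are antipodal on the unit sphere, one has $\|\mathcal{N} - \mathcal{S}\| = 2$, and the four pairwise distances entering $H_{abcd}$ satisfy
\[
\|x_a - x_b\| \to 0, \quad \|x_c - x_d\| \to 0, \quad \|x_a - x_c\| \to 2, \quad \|x_b - x_d\| \to 2,
\]
so the numerator of $H_{abcd}$ tends to $0$ while the denominator tends to $4$; hence $H_{abcd}(\mathcal{X}(t)) \to 0$. The second step is to compare this against the initial value: by the standing assumption that all $x_i^0$ are distinct, every pairwise distance is strictly positive at $t = 0$, so $H_{abcd}(\mathcal{X}^0) > 0$. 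Invoking the time-invariance of $H_{abcd}$ from Theorem \ref{T5.1} then forces $H_{abcd}(\mathcal{X}(t)) \equiv H_{abcd}(\mathcal{X}^0) > 0$ for all $t \geq 0$, which is incompatible with the asymptotic vanishing computed above. This contradiction delivers $\min\{\mathcal{M}_{\mathcal{N}}, \mathcal{M}_{\mathcal{S}}\} \leq 1$.

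The cross-ratio manipulation itself is immediate once Theorem \ref{T5.1} is in hand, so the real structural input is not in the corollary's proof but in the tacit premise (stated in the paragraph preceding the corollary) that for identical $\Omega_j \equiv \Omega$ every trajectory $x_j(t)$ converges to one of the two poles, so that the counts $\mathcal{M}_{\mathcal{N}}$ and $\mathcal{M}_{\mathcal{S}}$ are well-defined and satisfy $\mathcal{M}_{\mathcal{N}} + \mathcal{M}_{\mathcal{S}} = N$. The only obstacle I anticipate in a fully rigorous write-up is ensuring that this two-pole convergence dichotomy is established (for instance by invoking a Lyapunov or gradient-flow argument for the identical Lohe sphere model); conditional on it, no further work is required.
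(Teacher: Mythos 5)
Your proposal is correct and takes essentially the same route as the paper: assume both poles attract at least two particles, select four such indices, and invoke the time-invariance of the cross-ratio from Theorem \ref{T5.1} to derive a contradiction between a strictly positive constant (the initial cross-ratio, which is finite since all $x_i^0$ are distinct) and an asymptotic degeneration (you show $H_{abcd}\to 0$; the paper equivalently shows the reciprocal ratio diverges). Your closing caveat about the convergence-to-poles dichotomy is a fair observation — the paper also leaves this as an informal remark preceding the corollary — but as the corollary is phrased, the argument needs only the indices you select to converge, not all of them, so no additional work is genuinely required.
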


\begin{proof}
Let $x_i(t)$ be the position of the $i$-th particle at time $t$ which is located on the $d$-dimensional sphere. To derive a contradiction, suppose to contrary, i.e., that two distinct points $x_a$ and $x_b$ approach the north pole $\mathcal N$ and two other distinct points $x_c$ and $x_d$ approach the south pole $\mathcal S$, respectively. Then, it follows from Theorem \ref{T5.1} that 
\begin{equation} \label{Z-3}
\frac{ \ell_{ac} \ell_{bd} }{ \ell_{ab}\ell_{cd}     }(t) = \frac{ \ell_{ac} \ell_{bd} }{ \ell_{ab}\ell_{cd}     }(0) =:  {\mathcal C},\quad t\geq0.
\end{equation}
Clearly, the constant $\mathcal I$ cannot be infinite since it is determined by the initial data and the denominator of the ratio functional $H_{abcd}(\mathcal X)$ is nonzero. However, since we assume that $\ell_{ab}$ and $\ell_{cd}$ are converging to zero, the L.H.S. of \eqref{Z-3} diverges as time goes to infinity, whereas the R.H.S. of \eqref{Z-3} still remains to be positive constant. This gives a contradiction, and we obtain the desired conclusion.
\end{proof}
Our second corollary deals with the invariance of circles. For this, we briefly recall the classical Ptolmey theorem (page 308 \cite{Be}) without proof: if the four vertices of a cyclic quadrilateral are denoted as $A,B,C$ and $D$ in counterclockwise order, then lengths of the four sides and the two diagonals of the cyclic quadrilateral satisfy 
\begin{equation} \label{ptolemy} 
|AB|\cdot |CD|+ |BC|\cdot |AD |= |AC|\cdot |BD|.
\end{equation} 
Moreover, the converse of Ptolemy's theorem is also true. In other words, if the four vertices in a quadrilateral $A,B,C$ and $D$ satisfy the relation \eqref{ptolemy}, then the quadrilateral can be inscribed in a circle, that is, the four vertices lie on a circle. 

\begin{corollary} \label{C5.2}
Let $\{ x_i(t) \}$ be a solution to the Lohe sphere model \eqref{Z-1}--\eqref{Z-1-0}. Suppose that any four points lie on the same circle at $t=0$. Then these four points still remain on the same circle for all $t>0$.
\end{corollary}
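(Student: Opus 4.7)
The plan is to encode the geometric condition ``the four points lie on a common circle'' as an algebraic identity among the cross-ratios of Theorem~\ref{T5.1}, exploiting the dictionary provided by Ptolemy's equality \eqref{ptolemy} and its converse.

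The first step is to relabel the indices so that at $t=0$ the four points appear on their common circle in cyclic order as $x_a^0, x_b^0, x_c^0, x_d^0$, with $x_a^0 x_c^0$ and $x_b^0 x_d^0$ being the two diagonals of the inscribed quadrilateral. Ptolemy's equality \eqref{ptolemy} then reads
\[
\|x_a^0-x_b^0\|\cdot\|x_c^0-x_d^0\| + \|x_a^0-x_d^0\|\cdot\|x_b^0-x_c^0\| = \|x_a^0-x_c^0\|\cdot\|x_b^0-x_d^0\|,
\]
which, upon dividing by the right-hand side and unpacking the definition \eqref{Z-2}, becomes the equivalent linear relation
\[
H_{abcd}(\mathcal X^0) + H_{adcb}(\mathcal X^0) = 1.
\]

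The second step is to apply Theorem~\ref{T5.1} separately to the two index quadruples $(a,b,c,d)$ and $(a,d,c,b)$: both $H_{abcd}$ and $H_{adcb}$ are individually conserved, so their sum remains equal to $1$ along the Lohe flow. Multiplying through by $\|x_a(t)-x_c(t)\|\cdot\|x_b(t)-x_d(t)\|$ then returns Ptolemy's equality for the time-$t$ configuration $x_a(t), x_b(t), x_c(t), x_d(t)$, and the converse direction of Ptolemy's theorem forces these four points to lie on a common circle.

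I do not anticipate a serious obstacle: the only thing to monitor is that the denominators of the cross-ratios do not degenerate, but this is automatic, since the quantity $H_{abcd}$ inherited from the pairwise distinct initial data \eqref{Z-1-0} is a finite nonzero constant and hence no collision among $\{x_a, x_b, x_c, x_d\}$ can develop in finite time. The proof is essentially a two-line consequence of the invariance proved in Theorem~\ref{T5.1}, with Ptolemy's theorem serving only as the bridge between the geometric statement (concyclicity) and the analytically tractable conserved cross-ratios.
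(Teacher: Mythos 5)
Your proposal is correct and takes essentially the same approach as the paper: encode concyclicity via Ptolemy's equality, divide by $\ell_{ac}\ell_{bd}$ to express it as $H_{abcd}+H_{adcb}=1$, and invoke the conservation of each cross-ratio from Theorem \ref{T5.1} together with the converse of Ptolemy's theorem. Your additional remarks about choosing the correct cyclic labeling and about non-degeneracy of the denominators are sound points of care that the paper leaves implicit (the latter also follows directly from ODE uniqueness: distinct particles in the identical-$\Omega$ case stay distinct), but they do not change the substance of the argument.
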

\begin{proof}
Suppose that the four points $x_a,x_b,x_c$ and $x_d$ initially lie on the same circle. By Ptolemy's theorem, we have
\begin{equation*} \label{Z-3-1}
\ell_{ab}^0 \ell_{cd}^0 + \ell_{bc}^0\ell_{ad}^0 = \ell_{ac}^0 \ell_{bd}^0 \quad \textup{or equivalently}\quad  \frac{ \ell_{ab}^0 \ell_{cd}^0}{\ell_{ac}^0 \ell_{bd}^0} + \frac{ \ell_{bc}^0\ell_{ad}^0      }{\ell_{ac}^0 \ell_{bd}^0} = 1. 
\end{equation*}
Then, Theorem \ref{T5.1} implies 
\begin{equation*}
\frac{\ell_{ab} \ell_{cd}(t)}{\ell_{ac} \ell_{bd}(t)} + \frac{\ell_{bc}\ell_{ad}(t)}{\ell_{ac} \ell_{bd}(t)} =\frac{ \ell_{ab}^0 \ell_{cd}^0}{\ell_{ac}^0 \ell_{bd}^0} + \frac{ \ell_{bc}^0\ell_{ad}^0      }{\ell_{ac}^0 \ell_{bd}^0}  =1, \quad t>0, 
\end{equation*}
By the converse of Ptolemy's theorem, we can conclude that four points $x_a,x_b,x_c$ and $x_d$ still lie on the same circle (or plane). 
\end{proof}

We remark that Corollary \ref{C5.2} is actually a special case of a more general theorem. See Proposition \ref{P5.3.1}.

Finally, our third application of Theorem \ref{T5.1} is to rule out the existence of periodic solutions to \eqref{Z-1}--\eqref{Z-1-0}. For this, we study the time-evolution of the squared distance functional: for any solution $\{ x_i\}$ to \eqref{Z-1}, we set
\[
\mathcal D_M(X) := \sum_{i,j=1}^N \|x_i-x_j\|^2.
\]
\begin{lemma} \label{L5.1}
Let $\{ x_i(t) \}$ be a solution to the Lohe sphere model \eqref{Z-1}--\eqref{Z-1-0}. Then, $\mathcal D_M(X)$ is non-increasing for $\kappa>0$, and is non-decreasing for $\kappa <0$.
\end{lemma}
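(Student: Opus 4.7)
The plan is to recast $\mathcal{D}_M(X)$ in terms of the centroid $\bar{x} := \frac{1}{N}\sum_{k=1}^N x_k$ and reduce the monotonicity question to a Cauchy--Schwarz inequality. Since $\Omega_i \equiv O$, the equation \eqref{Z-1} simplifies to
\[
\dot{x}_i = \kappa\bigl(\bar{x} - \langle x_i,\bar{x}\rangle x_i\bigr), \quad i=1,\cdots,N.
\]
A first observation is that $\frac{d}{dt}\|x_i\|^2 = 2\langle x_i,\dot{x}_i\rangle = 2\kappa\bigl(\langle x_i,\bar{x}\rangle - \langle x_i,\bar{x}\rangle\|x_i\|^2\bigr) = 0$, so $\|x_i\|\equiv 1$ is preserved along the flow, as expected.

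Now I would expand
\[
\mathcal{D}_M(X) = \sum_{i,j=1}^N \bigl(\|x_i\|^2 + \|x_j\|^2 - 2\langle x_i,x_j\rangle\bigr) = 2N^2 - 2\Bigl\|\sum_{k=1}^N x_k\Bigr\|^2 = 2N^2\bigl(1 - \|\bar{x}\|^2\bigr).
\]
Hence the claim reduces to: $\|\bar{x}(t)\|^2$ is non-decreasing for $\kappa>0$ and non-increasing for $\kappa<0$.

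To prove this reduced claim, differentiate directly:
\[
\frac{d}{dt}\|\bar{x}\|^2 = \frac{2}{N}\sum_{i=1}^N \langle \bar{x},\dot{x}_i\rangle = \frac{2\kappa}{N}\sum_{i=1}^N\bigl(\|\bar{x}\|^2 - \langle x_i,\bar{x}\rangle^2\bigr).
\]
By Cauchy--Schwarz and $\|x_i\|=1$, each summand satisfies $\langle x_i,\bar{x}\rangle^2 \le \|x_i\|^2\|\bar{x}\|^2 = \|\bar{x}\|^2$, so the bracketed quantity is nonnegative. Consequently $\frac{d}{dt}\|\bar{x}\|^2$ has the same sign as $\kappa$, which yields the desired monotonicity of $\mathcal{D}_M(X)$ via the identity above.

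There is no real obstacle here; the only point requiring care is the preservation of $\|x_i\|=1$, which is used both in the reformulation $\mathcal{D}_M = 2N^2(1-\|\bar{x}\|^2)$ and in the Cauchy--Schwarz step. The computation is the natural $\bbs^d$-analogue of the well-known energy/order-parameter identity for the Kuramoto model, where $1 - \|\bar{x}\|^2$ plays the role of the complement of the squared order parameter $R^2$.
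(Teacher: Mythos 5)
Your proof is correct. The approach differs from the paper's in presentation: the paper differentiates the pairwise quantities $\|x_i-x_j\|^2$ directly and then sums over $i,j$ to arrive at the identity
\[
\frac{1}{2}\frac{d}{dt}\mathcal D_M(X) = -2N\kappa\Bigl(N\|x_c\|^2 - \sum_{i=1}^N\langle x_i,x_c\rangle^2\Bigr),
\]
whereas you first collapse $\mathcal D_M(X)=2N^2(1-\|\bar x\|^2)$ and then differentiate the squared order parameter. Both routes land on exactly the same quantity and the same Cauchy--Schwarz step $\langle x_i,\bar x\rangle^2\le\|\bar x\|^2$. In fact, the paper's own Remark~\ref{R5.1}(ii) explicitly points out your reformulation and the order-parameter derivative from \cite{C-H5} as an equivalent way to obtain the lemma, so you have independently rediscovered that alternative. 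Your version is slightly cleaner in that it avoids the bookkeeping of the double sum; the paper's version makes the pairwise-distance identity \eqref{Z-4} available for later use. One small thing worth flagging: you verify that $\|x_i\|=1$ is preserved, which is good practice, but note that this is imposed as the phase-space constraint in \eqref{Z-1} and is established generally for the Lohe model elsewhere, so a citation or a one-line appeal to that fact would suffice.
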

\begin{proof}
By direct calculation, one has
\begin{align}
\begin{aligned} \label{Z-4}
&\frac{1}{2}\frac{d}{dt} \|x_i-x_j\|^2 = \langle x_i-x_j, \dot x_i-\dot x_j\rangle \\
& \hspace{0.5cm} = \frac{\kappa}{N}\sum_{k=1}^N  \Big \langle x_i-x_j, \langle x_i,x_i\rangle x_k -\langle x_i,x_k\rangle x_i - \langle x_j,x_j\rangle x_k + \langle x_j,x_k\rangle x_j \Big \rangle  \\
& \hspace{0.5cm} =\frac{\kappa}{N} \Big \langle x_i-x_j, \langle x_i,x_i\rangle N x_c - \langle x_i, N x_c \rangle x_i - \langle x_j,x_j\rangle N x_c + \langle x_j, N x_c \rangle x_j \Big \rangle \\
& \hspace{0.5cm} =\kappa \Big \langle x_i-x_j, \langle x_i,x_i\rangle x_c -\langle x_i, x_c \rangle x_i - \langle x_j,x_j\rangle x_c + \langle x_j, x_c \rangle x_j  \Big \rangle \\
& \hspace{0.5cm} =-\kappa \left[\langle x_j, x_c \rangle + \langle x_i, x_c \rangle - \langle x_j, x_c \rangle\langle x_i,x_j\rangle - \langle x_i,x_c \rangle\langle x_i,x_j\rangle\right].
\end{aligned}
\end{align}
We sum up the relation \eqref{Z-4} with respect to $i,j$ to obtain 
\begin{align}
\begin{aligned} \label{Z-5}
\frac{1}{2} \mathcal D_M(X) &=-\kappa \sum_{i,j=1}^N \Big(  \langle x_j, x_c \rangle + \langle x_i, x_c \rangle - \langle x_j, x_c \rangle\langle x_i,x_j\rangle - \langle x_i, x_c \rangle\langle x_i,x_j\rangle \Big) \\
&=-2N\kappa \left(N\langle x_c, x_c \rangle - \sum_{i=1}^N \langle x_i, x_c \rangle^2 \right). 
\end{aligned}
\end{align}
On the other hand, it follows from the Cauchy-Schwarz inequality that 
\begin{equation} \label{Z-6}
\langle x_i, x_c \rangle^2 \leq \langle x_i,x_i\rangle \langle x_c, x_c \rangle = \langle x_c, x_c \rangle.
\end{equation}
Finally, we combine the relations \eqref{Z-5} and \eqref{Z-6} to derive the desired estimate.
\end{proof}
\begin{remark} \label{R5.1}
\textup{(i)}~~The equality condition of the Cauchy-Schwarz inequality yields that the inequality of \eqref{Z-6} holds if and only if 
\begin{equation*}
|x_c|=0 \quad \textup{or} \quad \textup{all $x_i$ are parallel to $x_c$.}
\end{equation*}
Hence, it follows from   Proposition \ref{P3.1} that
\begin{align*}
\begin{aligned}
\frac{d}{dt} \mathcal D_M(X(t)) = 0 &\quad \Longleftrightarrow \quad \textup{$X(t)$ is an equilibrium solution} \\
& \quad \Longleftrightarrow \quad \frac{d}{dt} X(t)=0.
\end{aligned}
\end{align*}
\textup{(ii)}~~It follows from the unit modulus property of $x_i$ that the functional $D_M(X)$ can be rewritten in terms of the order parameter:
\begin{equation*}
D_M(X) = \sum_{i,j=1}^N \|x_i - x_j\|^2 = 2N^2 (1-\|x_c\|^2).
\end{equation*}
Moreover in \cite{C-H5}, the dynamics of $\|x_c\|$ was derived:
\begin{equation*}
\frac{d}{dt} \|x_c\|^2 = 2\kp \left( \|x_c\|^2 - \frac1N\sum_{i=1}^N \langle x_i,x_c\rangle \langle x_i,x_c\rangle\right),
\end{equation*}
which also yields the same conclusion as Lemma \ref{L5.1}. 
\end{remark}
Finally, we use Lemma \ref{L5.1} and Remark \ref{R5.1} to derive the non-existence of periodic solutions. 
\begin{corollary} \label{C5.3}
System \eqref{Z-1}--\eqref{Z-1-0} does not admit a periodic solution with positive period. 
\end{corollary}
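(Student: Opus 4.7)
The plan is to argue by contradiction using the monotonicity of the total squared-diameter functional $\mathcal{D}_M(X)$ established in Lemma \ref{L5.1}, together with the characterization of its stationary points given in Remark \ref{R5.1}. Throughout, I consider the nondegenerate case $\kappa>0$; the case $\kappa<0$ is symmetric, and when $\kappa=0$ the dynamics \eqref{Z-1}--\eqref{Z-1-0} (with $\Omega_i\equiv O$) is trivially constant, so no nontrivial periodic orbit exists.

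Suppose, toward a contradiction, that $\{x_i(t)\}$ is a solution to \eqref{Z-1}--\eqref{Z-1-0} which is periodic with some positive period $T>0$, so that $x_i(t+T)=x_i(t)$ for every $t\ge 0$ and every $i$. Then in particular
\[
\mathcal{D}_M(X(T)) \;=\; \mathcal{D}_M(X(0)).
\]
On the other hand, Lemma \ref{L5.1} asserts that $t\mapsto \mathcal{D}_M(X(t))$ is non-increasing on $[0,T]$. A non-increasing function which returns to its initial value at a later time must be constant on the intervening interval, hence
\[
\mathcal{D}_M(X(t)) \equiv \mathcal{D}_M(X(0)), \qquad t\in[0,T],
\]
and consequently $\tfrac{d}{dt}\mathcal{D}_M(X(t))=0$ throughout $[0,T]$.

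By Remark \ref{R5.1}(i), the vanishing of $\tfrac{d}{dt}\mathcal{D}_M(X(t))$ is equivalent to $X(t)$ being an equilibrium of \eqref{Z-1}--\eqref{Z-1-0}, i.e.\ $\dot X(t)\equiv 0$ on $[0,T]$. By uniqueness of solutions, $X$ is then the constant solution $X(t)\equiv X(0)$ for all $t\ge 0$, which is not a periodic orbit with positive period in the sense of a genuine limit cycle. This contradiction completes the argument. The only nontrivial point of the proof is the equivalence in Remark \ref{R5.1}(i); everything else is a one-line application of Lemma \ref{L5.1} and the fact that a monotone function cannot return to a previously attained value without being constant in between.
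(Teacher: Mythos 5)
Your proof is correct and follows essentially the same route as the paper: apply the monotonicity of $\mathcal{D}_M$ from Lemma \ref{L5.1} to force the diameter functional to be constant over one period, then invoke Remark \ref{R5.1}(i) to conclude the configuration is an equilibrium, contradicting positivity of the period. Your explicit handling of the $\kappa\le 0$ cases is a small tidying-up but not a different approach.
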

\begin{proof}
Suppose that there exists a periodic solution $X_p$ with positive period $T>0$:
\[  X_p(t+T)= X_p(t) \quad \textup{after some time $t \geq t_0$}. \]
 This implies
\begin{equation} \label{Z-7}
\mathcal D_M(X_p(t+T)) = \mathcal D_M(X_p(t)), \quad t\geq t_0.
\end{equation}
Then, we use \eqref{Z-7} and Lemma \ref{L5.1} to see
\begin{equation*}
\frac{d}{dt} \mathcal D_M(X_p(t)) = 0, \quad t\geq t_0, \quad \Longleftrightarrow \quad \mathcal D_M(X_p(t)) = \mathcal D_M(X_p(t_0)), \quad t\geq t_0.
\end{equation*}
It follows from Remark \ref{R5.1} that $X_p$ is an equilibrium solution to \eqref{Z-1}. In other words, it must stop after $t\geq t_0$. However this implies that $T=0$, and it contradicts the positivity of the period $T$.
\end{proof}
\begin{remark} \label{R5.2}
The proof of Theorems \ref{T3.1}, \ref{T3.2} and \ref{T3.3} do not depend on the sign of the coupling strength $\kappa$. Hence the constants of motion obtained in these three theorem are still valid in the case $\kappa<0$.
\end{remark}
\subsection{Reduction to low-dimensional dynamics} \label{sec:5.2}
In this subsection, we study the low-dimensional dynamics of the Cauchy problem for the Lohe sphere model with frustration:
\begin{equation}\label{SphereLohe}
\begin{cases}
\displaystyle \dot{x}_j=\frac{\kappa}{N}\sum_{k=1}^{N} (Vx_k-\langle x_j,Vx_k\rangle x_j), \quad t > 0, \quad j = 1, \cdots, N, \\
\displaystyle x_j(0)=x_j^0\in \bbs^d,
\end{cases}
\end{equation}
where we permute the indices so that
\begin{equation}\label{NonSingular}
x_j^0\neq x_N^0,~ j =1,\cdots,N-m,\qquad  x_j^0=x_N^0,~j=N-m+1,\cdots,N
\end{equation}
for some $m=1,\cdots,N$. \newline

Next, we seek to generalize the methods of reduction of degree applied to the Kuramoto model in Section \ref{sec:3.2}. Based on our experience with the Kuramoto model, we take the $N$-th variable $x_N$ as our point of reference and employ a stereographic projection about $x_N$, that is, $x_j$ is projected onto the plane $\bbp_{x_N}^{\perp}\le \bbr^{d+1}$ orthogonal to $x_N$, as in the form
\begin{align}
\begin{aligned}\label{Z-8}
y_j &:=x_N+\frac{2}{\|x_j-x_N\|^2}(x_j-x_N)=x_N+\frac{1}{1-\langle x_j,x_N\rangle}(x_j-x_N)\\
&=\frac{1}{1-\langle x_j,x_N\rangle}x_j-\frac{\langle x_j,x_N\rangle}{1-\langle x_j,x_N\rangle}x_N,\quad j=1,\cdots,N-m.
\end{aligned}
\end{align}
For the Kuramoto model, which is the case $d=1$  for the Lohe sphere model, there is a natural group structure on $\bbs^1$ which allows us to apply an orthogonal transformation to normalize $x_N$ and $\bbp_{x_N}^{\perp}$ to a fixed point and a fixed plane, respectively. However, in the case $d\neq 1,3$, there is no such group structure, and thus there is no way to consistently identify the tangent planes. Therefore, we are forced to take the dynamics of $x_N$, and consequently also that of $\bbp_{x_N}^{\perp}$, into account. The inverse transformation is given by
\begin{align}
\begin{aligned}\label{Z-9}
x_j &=x_N+\frac{2}{\|y_j-x_N\|^2}(y_j -x_N)=x_N+\frac{2}{1+\|y_j \|^2}(y_j -x_N)\\
&=\frac{2}{1+\|y_j \|^2}y_j+\frac{-1+\|y_j\|^2}{1+\|y_j \|^2}x_N,\quad j=1,\cdots,N-m.
\end{aligned}
\end{align}
We use the explicit formulae \eqref{Z-8} and \eqref{Z-9} to obtain
\begin{eqnarray*} \label{Z-10}
\langle x_i,x_N\rangle &=& \frac{-1+\|y_i\|^2}{1+\|y_i\|^2},~ 1-\langle x_i,x_N\rangle=\frac{2}{1+\|y_i\|^2},\quad i=1,\cdots,N-m, \\
\langle x_i,Vx_N\rangle &=& \frac{2}{1+\|y_i\|^2}y_i+\frac{-1+\|y_i\|^2}{1+\|y_i\|^2}x_N,~ 1-\langle x_i,x_N\rangle=\frac{2}{1+\|y_i\|^2},\quad i=1,\cdots,N-m, \\
\langle x_i,x_j\rangle &=& \frac{4}{(1+\|y_i\|^2)(1+\|y_j\|^2)}\langle y_i,y_j\rangle+\frac{(-1+\|y_i\|^2)(-1+\|y_j\|^2)}{(1+\|y_i\|^2)(1+\|y_j\|^2)},\quad i \not = j=1,\cdots,N-m, \\
\langle y_i,x_N\rangle &=&0,\quad i=1,\cdots,N-m, \qquad 
\langle x_i,y_j\rangle=\frac{2}{1+\|y_i\|^2}\langle y_i,y_j\rangle,\quad i,j=1,\cdots,N-m.
\end{eqnarray*}

\vspace{0.2cm}

\noindent $\bullet$~(Derivation of the dynamics for $x_i - x_N$): It follows from $\eqref{SphereLohe}_1$ that for $i=1,\cdots,N-1$,
\begin{align}
\begin{aligned}\label{Z-11}
\dot{x}_i=&\frac{\kappa}{N}\sum_{j=1}^{N-m}\Bigg(\frac{2}{1+\|y_j\|^2}Vy_j+\frac{-1+\|y_j\|^2}{1+\|y_j\|^2}Vx_N-\biggl\{ \frac{4}{(1+\|y_i\|^2)(1+\|y_j\|^2)}\langle y_i,y_j\rangle\\
&+\frac{(-1+\|y_i\|^2)(-1+\|y_j\|^2)}{(1+\|y_i\|^2)(1+\|y_j\|^2)}\biggl\}\biggl(\frac{2}{1+\|y_i\|^2}y_i+\frac{-1+\|y_i\|^2}{1+\|y_i\|^2}x_N\biggl)\Bigg)\\
&+\frac{\kappa}{N}\Bigg[x_N-\frac{-1+\|y_i\|^2}{1+\|y_i\|^2}\biggl(\frac{2}{1+\|y_i\|^2}y_i+\frac{-1+\|y_i\|^2}{1+\|y_i\|^2}x_N\biggl)\Bigg]
\end{aligned}
\end{align}
 and
\begin{equation} \label{Z-12}
\dot{x}_N =\frac{\kappa}{N}\sum_{j=1}^{N-1}\Bigg[\left(\frac{2}{1+\|y_j\|^2}y_j+\frac{-1+\|y_j\|^2}{1+\|y_j\|^2}x_N\right)-\frac{-1+\|y_j\|^2}{1+\|y_j\|^2}x_N\Bigg] =\frac{\kappa}{N}\sum_{j=1}^{N-1}\frac{2}{1+\|y_j\|^2}y_j.
\end{equation}
Then, we subtract \eqref{Z-12} from  \eqref{Z-11} to obtain
\begin{align}
\begin{aligned}\label{Z-13}
&\frac{d}{dt}(x_i-x_N) =\frac{\kappa}{N}\sum_{j=1}^{N-1}\Bigg(\frac{-1+\|y_j\|^2}{1+\|y_j\|^2}x_N-\biggl\{ \frac{4}{(1+\|y_i\|^2)(1+\|y_j\|^2)}\langle y_i,y_j\rangle\\
&\hspace{1.5cm} +\frac{(-1+\|y_i\|^2)(-1+\|y_j\|^2)}{(1+\|y_i\|^2)(1+\|y_j\|^2)}\biggl\}\biggl(\frac{2}{1+\|y_i\|^2}y_i+\frac{-1+\|y_i\|^2}{1+\|y_i\|^2}x_N\biggl)\Bigg)\\
&\hspace{0.2cm}+\frac{\kappa}{N}\Bigg[x_N-\frac{-1+\|y_i\|^2}{1+\|y_i\|^2}\biggl(\frac{2}{1+\|y_i\|^2}y_i+\frac{-1+\|y_i\|^2}{1+\|y_i\|^2}x_N\biggl)\Bigg]\\
&\hspace{0.2cm} =\frac{\kappa}{N}\Bigg[-\frac{2(-1+\|y_i\|^2)}{(1+\|y_i\|^2)^2}-\sum_{j=1}^{N-1}\biggl\{ \frac{8\langle y_i,y_j\rangle}{(1+\|y_i\|^2)^2(1+\|y_j\|^2)}+\frac{2(-1+\|y_i\|^2)(-1+\|y_j\|^2)}{(1+\|y_i\|^2)^2(1+\|y_j\|^2)}\biggl\}\Bigg]y_i\\
&\hspace{0.2cm}+\frac{\kappa}{N}\Bigg[\frac{4\|y_i\|^2}{(1+\|y_i\|^2)^2}+\sum_{j=1}^{N-1}\biggl\{ -\frac{4(-1+\|y_i\|^2)\langle y_i,y_j\rangle}{(1+\|y_i\|^2)^2(1+\|y_j\|^2)}+\frac{4\|y_i\|^2(-1+\|y_j\|^2)}{(1+\|y_i\|^2)^2(1+\|y_j\|^2)}\biggl\}\Bigg]x_N.
\end{aligned}
\end{align}

\vspace{0.2cm}

\noindent $\bullet$~(Derivation of the dynamics for $\langle x_i, x_N \rangle$): We use the relation \eqref{C-5}  to get 
\begin{align}
\begin{aligned} \label{Z-14}
&\langle \dot{x}_i,x_N\rangle =\frac{\kappa}{N}\sum_{j=1}^{N-1}\Bigg(\frac{-1+\|y_j\|^2}{1+\|y_j\|^2} \\
& \hspace{0.5cm} -\biggl\{ \frac{4}{(1+\|y_i\|^2)(1+\|y_j\|^2)}\langle y_i,y_j\rangle +\frac{(-1+\|y_i\|^2)(-1+\|y_j\|^2)}{(1+\|y_i\|^2)(1+\|y_j\|^2)}\biggl\}\biggl(\frac{-1+\|y_i\|^2}{1+\|y_i\|^2}\biggl)\Bigg)\\
& \hspace{0.5cm} +\frac{\kappa}{N}\Bigg[1-\frac{-1+\|y_i\|^2}{1+\|y_i\|^2}\biggl(\frac{-1+\|y_i\|^2}{1+\|y_i\|^2}\biggl)\Bigg]\\
& \hspace{0.5cm} =\frac{\kappa}{N}\sum_{j=1}^{N-1}\Bigg(\frac{4\|y_i\|^2(-1+\|y_j\|^2)}{(1+\|y_j\|^2)(1+\|y_i\|^2)^2}- \frac{4\langle y_i,y_j\rangle(-1+\|y_i\|^2)}{(1+\|y_i\|^2)^2(1+\|y_j\|^2)}\Bigg) +\frac{\kappa}{N}\frac{4\|y_i\|^2}{(1+\|y_i\|^2)^2}
\end{aligned}
\end{align}
and
\begin{equation} \label{Z-15}
\langle x_i,\dot{x}_N\rangle = \frac{\kappa}{N}\sum_{j=1}^{N-1}\frac{2}{1+\|y_j\|^2}\langle x_i,y_j\rangle =\frac{\kappa}{N}\sum_{j=1}^{N-1}\frac{4\langle y_i,y_j\rangle}{(1+\|y_i\|^2)(1+\|y_j\|^2)},
\end{equation}
for $i=1,\cdots,N-1$. We use \eqref{Z-14} and \eqref{Z-15} to derive
\begin{align}
\begin{aligned} \label{Z-16}
\frac{d}{dt}\langle x_i,x_N\rangle  &=\langle \dot{x}_i,x_N\rangle+\langle x_i,\dot{x}_N\rangle\\
&=\frac{\kappa}{N}\Bigg[\sum_{j=1}^{N-1}\Bigg(\frac{4\|y_i\|^2(-1+\|y_j\|^2)}{(1+\|y_j\|^2)(1+\|y_i\|^2)^2}+ \frac{8\langle y_i,y_j\rangle}{(1+\|y_i\|^2)^2(1+\|y_j\|^2)}\Bigg)+\frac{4\|y_i\|^2}{(1+\|y_i\|^2)^2}\Bigg].
\end{aligned}
\end{align}
Thus, we combine \eqref{Z-13} and \eqref{Z-16} to obtain 
\begin{align*}
\begin{aligned} \label{Z-17}
&\frac{d}{dt}(x_i-x_N)+(y_i-x_N)\frac{d}{dt}\langle x_i,x_N\rangle\\
&=\frac{\kappa}{N}\Bigg[-\frac{2(-1+\|y_i\|^2)}{(1+\|y_i\|^2)^2}-\sum_{j=1}^{N-1}\biggl\{ \frac{8\langle y_i,y_j\rangle}{(1+\|y_i\|^2)^2(1+\|y_j\|^2)}+\frac{2(-1+\|y_i\|^2)(-1+\|y_j\|^2)}{(1+\|y_i\|^2)^2(1+\|y_j\|^2)}\biggl\}\Bigg]y_i\\
&\quad+\frac{\kappa}{N}\Bigg[\frac{4\|y_i\|^2}{(1+\|y_i\|^2)^2}+\sum_{j=1}^{N-1}\biggl\{ -\frac{4(-1+\|y_i\|^2)\langle y_i,y_j\rangle}{(1+\|y_i\|^2)^2(1+\|y_j\|^2)}+\frac{4\|y_i\|^2(-1+\|y_j\|^2)}{(1+\|y_i\|^2)^2(1+\|y_j\|^2)}\biggl\}\Bigg]x_N\\
&\quad+\frac{\kappa}{N}\Bigg[\sum_{j=1}^{N-1}\Bigg(\frac{4\|y_i\|^2(-1+\|y_j\|^2)}{(1+\|y_j\|^2)(1+\|y_i\|^2)^2}+ \frac{8\langle y_i,y_j\rangle}{(1+\|y_i\|^2)^2(1+\|y_j\|^2)}\Bigg)+\frac{4\|y_i\|^2}{(1+\|y_i\|^2)^2}\Bigg](y_i-x_N)\\
&=\frac{\kappa}{N}\Bigg[\frac{2}{1+\|y_i\|^2}+\sum_{j=1}^{N-1}\frac{2(-1+\|y_j\|^2)}{(1+\|y_i\|^2)(1+\|y_j\|^2)}\Bigg]y_i +\frac{\kappa}{N}\Bigg[-\sum_{j=1}^{N-1} \frac{4\langle y_i,y_j\rangle}{(1+\|y_i\|^2)(1+\|y_j\|^2)}\Bigg]x_N,
\end{aligned}
\end{align*}
where $i = 1, \cdots, N-1$. 
We summarize the discussion abovet in the following proposition.
\begin{lemma} \label{L5.2}
The Cauchy problem \eqref{SphereLohe}--\eqref{NonSingular} is equivalent to the following Cauchy Problem:
\begin{equation}\label{StereoLohe}
\begin{cases}
\displaystyle \dot{y}_i=\frac{\kappa}{N}\sum_{j=1}^{N-1}\frac{2}{1+\|y_j\|^2}y_j+\frac{\kappa}{N}\Bigg[1+\sum_{j=1}^{N-1}\frac{(-1+\|y_j\|^2)}{1+\|y_j\|^2}\Bigg]y_i +\frac{\kappa}{N}\Bigg[-\sum_{j=1}^{N-1} \frac{2\langle y_i,y_j\rangle}{1+\|y_j\|^2}\Bigg]x_N,  \\
\displaystyle \dot{x}_N=\frac{\kappa}{N}\sum_{j=1}^{N-1}\frac{2}{1+\|y_j\|^2}y_j, \quad  t > 0,\\
y_i(0)=x_N^0+\frac{2}{\|x_i^0-x_N^0\|^2}(x_i^0-x_N^0)\in \bbp_{x_N^0}^{\perp},\quad i=1.\cdots,N-1,\\
x^N(0)=x_N^0\in \bbs^d,
\end{cases}
\end{equation}
subject to constraints:
\begin{equation*}\label{nonsingular2}
\begin{cases}
y_i(0)\neq y_j(0),\quad i,j=1,\cdots,N-1,~i\neq j,\\
y_i(0)\neq \infty, \quad i=1,\cdots,N-1.
\end{cases}
\end{equation*}
\end{lemma}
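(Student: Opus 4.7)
The proof proceeds by direct verification in both directions, relying on the key algebraic identity $x_i - x_N = (1 - \langle x_i, x_N\rangle)(y_i - x_N)$, which follows immediately from the definition \eqref{Z-8}. Differentiating in time and rearranging, one obtains
\[
\dot{y}_i = \dot{x}_N + \frac{1+\|y_i\|^2}{2}\Bigl[(\dot{x}_i - \dot{x}_N) + \Bigl(\tfrac{d}{dt}\langle x_i, x_N\rangle\Bigr)(y_i - x_N)\Bigr],
\]
where the inner-product identity $1 - \langle x_i, x_N\rangle = \frac{2}{1+\|y_i\|^2}$ from \eqref{Z-10} has been used. This reduces the proof to assembling the formulas that have already been derived in the discussion preceding the lemma.

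For the forward direction, starting from a solution $\{x_j\}$ of \eqref{SphereLohe}--\eqref{NonSingular}, I would define $y_i$ via the stereographic projection \eqref{Z-8} for $i = 1, \cdots, N-m$. The identities in \eqref{Z-10} express every inner product appearing on the right-hand side of the Lohe sphere equation in terms of $\{y_j, x_N\}$, which is precisely the content of the long computations \eqref{Z-13} for $\dot{x}_i - \dot{x}_N$ and \eqref{Z-16} for $\frac{d}{dt}\langle x_i, x_N\rangle$. Substituting these into the displayed identity, and making use of the combination $\frac{d}{dt}(x_i - x_N) + (y_i - x_N)\frac{d}{dt}\langle x_i, x_N\rangle$ already simplified in the formula right before the lemma, one multiplies by $\frac{1+\|y_i\|^2}{2}$ and adds $\dot{x}_N$ from \eqref{Z-12} to arrive at the equation for $\dot{y}_i$ in \eqref{StereoLohe}.

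For the reverse direction, given a solution $\{y_i, x_N\}$ of \eqref{StereoLohe}, the inverse stereographic formula \eqref{Z-9} determines $\{x_j\}$ uniquely for $j = 1, \cdots, N-m$, and we set $x_j = x_N$ for $j = N-m+1, \cdots, N$, which is consistent with the uniqueness of solutions to \eqref{SphereLohe}. A short direct calculation confirms that $\frac{d}{dt}\|x_N\|^2 = 0$ and $\frac{d}{dt}\langle y_i, x_N\rangle = 0$ along \eqref{StereoLohe}, so that the geometric constraints $\|x_N\| = 1$ and $y_i \in \bbp_{x_N}^\perp$ are preserved along the flow; here one uses that the $y$-equation contains a term in the $x_N$ direction that precisely cancels $\langle y_i, \dot{x}_N\rangle$. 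Reversing the substitution chain then verifies that $\{x_j\}$ satisfies \eqref{SphereLohe}. The main obstacle throughout is algebraic bookkeeping rather than any conceptual difficulty: the expressions for $\dot{x}_i - \dot{x}_N$ and $\frac{d}{dt}\langle x_i, x_N\rangle$ are bulky, but since the crucial combination has already been evaluated in the excerpt, the remaining simplification amounts to collecting coefficients of $y_i$, $x_N$, and $\sum_j \frac{2y_j}{1+\|y_j\|^2}$.
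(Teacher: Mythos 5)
Your proof of the forward direction is essentially identical to the paper's: both begin from the chain-rule differentiation of the stereographic map, convert $1-\langle x_i,x_N\rangle$ to $\frac{2}{1+\|y_i\|^2}$, and assemble the already-derived expression for $\frac{d}{dt}(x_i-x_N)+(y_i-x_N)\frac{d}{dt}\langle x_i,x_N\rangle$, then multiply by $\frac{1+\|y_i\|^2}{2}$ and add $\dot{x}_N$. The one genuine addition on your part is the explicit reverse-direction argument: you verify that the constraints $\|x_N\|=1$ and $y_i\in\bbp_{x_N}^\perp$ are propagated by the reduced flow, which the paper leaves implicit. This is a useful supplement to establish ``equivalence'' rather than just one implication, though note that the two conservation statements $\frac{d}{dt}\|x_N\|^2=0$ and $\frac{d}{dt}\langle y_i,x_N\rangle=0$ are coupled (each derivation uses the other), so one should frame this as preservation of the constraint set by a standard bootstrap or invariant-manifold argument rather than as two independent calculations.
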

\begin{proof} We use \eqref{Z-13} and \eqref{Z-16} to get 
\begin{align*}
\begin{aligned}
\dot{y}_i &=\dot{x}_N+\frac{1}{1-\langle x_i,x_N\rangle}\frac{d}{dt}(x_i-x_N)+\frac{1}{(1-\langle x_i,x_N\rangle)^2}(x_i-x_N)\frac{d}{dt}\langle x_i,x_N\rangle\\
&=\dot{x}_N+\frac{1+\|y_i\|^2}{2}\frac{d}{dt}(x_i-x_N)+\frac{(1+\|y_i\|^2)^2}{4}\cdot \frac{2}{1+\|y_i\|^2}(y_i-x_N)\frac{d}{dt}\langle x_i,x_N\rangle\\
&=\dot{x}_N+\frac{1+\|y_i\|^2}{2}\Bigg[\frac{d}{dt}(x_i-x_N)+(y_i-x_N)\frac{d}{dt}\langle x_i,x_N\rangle\Bigg]\\
&=\frac{\kappa}{N}\sum_{j=1}^{N-1}\frac{2}{1+\|y_j\|^2}y_j +\frac{1+\|y_i\|^2}{2}\cdot \frac{\kappa}{N}\Bigg[\frac{2}{1+\|y_i\|^2}+\sum_{j=1}^{N-1}\frac{2(-1+\|y_j\|^2)}{(1+\|y_i\|^2)(1+\|y_j\|^2)}\Bigg]y_i\\
&\hspace{0.5cm}+\frac{1+\|y_i\|^2}{2}\cdot\frac{\kappa}{N}\Bigg[-\sum_{j=1}^{N-1} \frac{4\langle y_i,y_j\rangle}{(1+\|y_i\|^2)(1+\|y_j\|^2)}\Bigg]x_N\\
&=\frac{\kappa}{N}\sum_{j=1}^{N-1}\frac{2}{1+\|y_j\|^2}y_j+\frac{\kappa}{N}\Bigg[1+\sum_{j=1}^{N-1}\frac{(-1+\|y_j\|^2)}{1+\|y_j\|^2}\Bigg]y_i+\frac{\kappa}{N}\Bigg[-\sum_{j=1}^{N-1} \frac{2\langle y_i,y_j\rangle}{1+\|y_j\|^2}\Bigg]x_N.
\end{aligned}
\end{align*}
\end{proof}
\begin{lemma}\label{L5.3}
Let $\{y_j \}$ be a solution to \eqref{StereoLohe}. For any four indices $i, j, k, l=1,\cdots,N-1$, we have
\[
\frac{d}{dt}\langle y_i-y_j,y_k-y_l\rangle=\frac{2\kappa}{N}\Bigg[1+\sum_{m=1}^{N-1}\frac{-1+\|y_m\|^2}{1+\|y_m\|^2}\Bigg]\langle y_i-y_j,y_k-y_l\rangle.
\]
\end{lemma}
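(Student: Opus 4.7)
The plan is to exploit the following structural observation about the evolution equation for $y_i$ in \eqref{StereoLohe}: it can be written in the compact form
\[
\dot{y}_i = A(t) + B(t)\, y_i + C_i(t)\, x_N,
\]
where
\[
A(t) := \frac{\kappa}{N}\sum_{m=1}^{N-1}\frac{2}{1+\|y_m\|^2}y_m, \qquad B(t) := \frac{\kappa}{N}\left[1 + \sum_{m=1}^{N-1}\frac{-1+\|y_m\|^2}{1+\|y_m\|^2}\right]
\]
are scalar/vector quantities that do \emph{not} depend on $i$, while $C_i(t) := -\frac{2\kappa}{N}\sum_{m=1}^{N-1}\frac{\langle y_i, y_m\rangle}{1+\|y_m\|^2}$ depends on $i$ only through the single scalar factor multiplying $x_N$. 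Subtracting the equations for indices $i$ and $j$ therefore yields the clean relation
\[
\dot{y}_i - \dot{y}_j = B(t)\,(y_i - y_j) + \bigl(C_i(t) - C_j(t)\bigr)\, x_N.
\]

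Next I would establish that the orthogonality $\langle y_i, x_N\rangle = 0$, which holds at $t=0$ by the definition \eqref{Z-8} of the stereographic projection onto $\bbp_{x_N^0}^\perp$, is preserved by the flow. A short computation using both equations of \eqref{StereoLohe} gives
\[
\frac{d}{dt}\langle y_i, x_N\rangle = \langle \dot y_i, x_N\rangle + \langle y_i, \dot x_N\rangle = C_i(t) + \frac{\kappa}{N}\sum_{m=1}^{N-1}\frac{2\langle y_i, y_m\rangle}{1+\|y_m\|^2} = 0,
\]
where I have used $\langle y_m, x_N\rangle = 0$ in the first term; combined with the zero initial condition, uniqueness for the resulting linear ODE system gives $\langle y_i(t), x_N(t)\rangle \equiv 0$. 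In particular, $\langle y_i - y_j, x_N\rangle = \langle y_k - y_l, x_N\rangle = 0$ along the flow.

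With these two ingredients in hand, the proof is immediate. Applying the product rule and substituting the compact form of $\dot y_i - \dot y_j$ and $\dot y_k - \dot y_l$,
\[
\frac{d}{dt}\langle y_i - y_j, y_k - y_l\rangle = 2B(t)\langle y_i - y_j, y_k - y_l\rangle + (C_i-C_j)\langle x_N, y_k-y_l\rangle + (C_k-C_l)\langle y_i-y_j, x_N\rangle,
\]
and the last two terms vanish by the preserved orthogonality, leaving exactly the claimed identity.

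The step I expect to be the main (though still modest) obstacle is the orthogonality preservation: it is the only place where the coupling between the $y_i$ equation and the separate $\dot x_N$ equation has to be handled carefully, and it relies on the precise cancellation between the $C_i x_N$ term in $\dot y_i$ and the contribution from $\langle y_i, \dot x_N\rangle$. Everything else is then a one-line application of the product rule, which also makes transparent why the factor $2B(t)$ appearing on the right-hand side is precisely the coefficient advertised in the statement.
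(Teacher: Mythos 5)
Your proof is correct and takes essentially the same route as the paper: write $\dot{y}_i-\dot{y}_j$ as a scalar multiple of $y_i-y_j$ plus a multiple of $x_N$, then apply the product rule and observe that the $x_N$-terms vanish by orthogonality. The paper silently assumes $\langle y_i,x_N\rangle=0$ and $\|x_N\|=1$ along the flow when discarding those terms, whereas you explicitly verify the orthogonality preservation (note you also use $\|x_N\|^2=1$ in evaluating $\langle C_i x_N, x_N\rangle = C_i$, which is likewise preserved and can be argued the same way, or simply inherited from the equivalence with \eqref{SphereLohe} established in Lemma \ref{L5.2}); this added detail is a useful clarification.
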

\begin{proof} Note that 
\begin{align}
\begin{aligned} \label{Z-18}
& \dot{y}_i-\dot{y}_j =\frac{\kappa}{N}\Bigg[1+\sum_{m=1}^{N-1}\frac{-1+\|y_m\|^2}{1+\|y_m\|^2}\Bigg](y_i-y_j)-\frac{\kappa}{N}\Bigg[\sum_{m=1}^{N-1} \frac{2\langle y_i-y_j,y_m\rangle}{1+\|y_m\|^2}\Bigg]x_N, \\
& \dot{y}_k-\dot{y}_l =\frac{\kappa}{N}\Bigg[1+\sum_{m=1}^{N-1}\frac{-1+\|y_m\|^2}{1+\|y_m\|^2}\Bigg](y_k-y_l)-\frac{\kappa}{N}\Bigg[\sum_{m=1}^{N-1} \frac{2\langle y_k-y_l,y_m\rangle}{1+\|y_m\|^2}\Bigg]x_N.
\end{aligned}
\end{align}
Then, we use \eqref{Z-18} to find
\begin{align*}
\begin{aligned}
& \frac{d}{dt}\langle y_i-y_j,y_k-y_l\rangle  = \langle \dot{y}_i-\dot{y}_j,y_k-y_l\rangle+\langle y_i-y_j,\dot{y}_k-\dot{y}_l\rangle\\
& \hspace{0.5cm}  =\frac{\kappa}{N}\Bigg[1+\sum_{m=1}^{N-1}\frac{-1+\|y_m\|^2}{1+\|y_m\|^2}\Bigg]\langle y_i-y_j,y_k-y_l\rangle +\frac{\kappa}{N}\Bigg[1+\sum_{m=1}^{N-1}\frac{-1+\|y_m\|^2}{1+\|y_m\|^2}\Bigg]\langle y_i-y_j,y_k-y_l\rangle \\
& \hspace{0.5cm}  =\frac{2\kappa}{N}\Bigg[1+\sum_{m=1}^{N-1}\frac{-1+\|y_m\|^2}{1+\|y_m\|^2}\Bigg]\langle y_i-y_j,y_k-y_l\rangle.
\end{aligned}
\end{align*}
\end{proof}
\begin{proposition}
For any eight indices $a, b, c, d, e, f, g, h=1,\cdots, N-1$, we have
\[
\langle y_a-y_b,y_c-y_d\rangle (t) \langle y_e-y_f,y_g-y_h\rangle (0) = \langle y_a-y_b,y_c-y_d\rangle(0) \langle y_e-y_f,y_g-y_h\rangle(t)
\]
for all times $t\ge 0$.
\end{proposition}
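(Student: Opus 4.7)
The plan is to use Lemma \ref{L5.3} directly, observing that the time derivative of $\langle y_i-y_j,y_k-y_l\rangle$ is a scalar multiple of itself, with a scalar factor that is \emph{independent} of the indices $i,j,k,l$. This common multiplier decouples from the index dependence and allows us to integrate each inner product separately against the same integrating factor.

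More concretely, I would introduce the index-independent function
\[
\Lambda(t):=\frac{2\kappa}{N}\Bigg[1+\sum_{m=1}^{N-1}\frac{-1+\|y_m(t)\|^2}{1+\|y_m(t)\|^2}\Bigg],
\]
so that Lemma \ref{L5.3} reads $\frac{d}{dt}\langle y_i-y_j,y_k-y_l\rangle=\Lambda(t)\,\langle y_i-y_j,y_k-y_l\rangle$ for any four indices $i,j,k,l$. Since this is a scalar linear ODE, integration yields
\[
\langle y_i-y_j,y_k-y_l\rangle(t)=\langle y_i-y_j,y_k-y_l\rangle(0)\,\exp\!\left(\int_0^t \Lambda(s)\,ds\right).
\]
Applying this identity to the quadruples $(a,b,c,d)$ and $(e,f,g,h)$ separately and multiplying, the exponential factor appears once on each side of the claimed equality and can be cancelled, yielding the desired relation.

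The only subtlety to check is that the integrating factor $\exp(\int_0^t \Lambda(s)ds)$ is well defined and finite on $[0,t]$. Since $y_m(s)$ stays in $\mathbb{R}^d$ on the maximal interval of existence and the integrand $\frac{-1+\|y_m\|^2}{1+\|y_m\|^2}$ is bounded in absolute value by $1$, $\Lambda$ is a continuous bounded function of $t$ on that interval, so no growth issue arises. This is the whole argument; there is no real obstacle beyond recognizing that Lemma \ref{L5.3} already delivers the index-independent logarithmic derivative that makes the cross-product relation automatic.
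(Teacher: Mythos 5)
Your proposal is correct and follows essentially the same approach as the paper: apply Lemma \ref{L5.3}, observe the index-independent logarithmic derivative, integrate to get the common exponential factor for both quadruples, and cancel. The only addition is your remark on boundedness of the integrand, which the paper leaves implicit.
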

\begin{proof} We use Lemma \ref{L5.3} to get
\begin{align*}
\begin{aligned}
& \langle y_a-y_b,y_c-y_d\rangle (t)=\langle y_a-y_b,y_c-y_d\rangle (0)\exp\Bigg[\frac{2\kappa}{N}\int_0^t \Bigg(1+\sum_{k=1}^{N-1}\frac{-1+\|y_k\|^2}{1+\|y_k\|^2}\Bigg)(s)ds\Bigg], \\
& \langle y_e-y_f,y_g-y_h\rangle (t)=\langle y_e-y_f,y_g-y_h\rangle (0)\exp\Bigg[\frac{2\kappa}{N}\int_0^t \Bigg(1+\sum_{k=1}^{N-1}\frac{-1+\|y_k\|^2}{1+\|y_k\|^2}\Bigg)(s)ds\Bigg].
\end{aligned}
\end{align*}
These yield the desired estimate.

\end{proof}

\vspace{0.5cm}

Now, we are ready to provide the low-dimensional dynamics for $y_i$. For each time $t\ge 0$, we may select three quantities $M(t)\in O(d+1)$, $a(t)>0$, $b(t)\in \bbp_{x_N^0}^\perp$ which satisfy
\begin{equation}\label{data}
\begin{cases}
y_i(t)=M(t)(a(t)y_i(0)+b(t)),\quad i=1,\cdots,N-1,\\
x_N(t)=M(t)x_N^0, \\
M(0)=I_{d+1},\quad a(0)=1,\quad b(0)=0\in\bbp_{x_N^0}^\perp.
\end{cases}
\end{equation}
Note that the three quantities $M(t)$, $a(t)$ and $b(t)$ along with the initial data $y_1(0),\cdots,y_{N-1}(0),x_N^0$ fully describe the behavior of $y_1(t),\cdots,y_{N-1},x_N(t)$. In what follows, we will heuristically derive the dynamics of $M(t)$, $a(t)$, $b(t)$, and as a consequence, we  provide a unique choice for $M(t)$ and $b(t)$, under the assumption that
\begin{equation*}\label{span}
\{ y_1(0)-y_{N-1}(0),\cdots,y_{N-2}-y_{N-1}(0)\} ~~ \textrm{ spans } ~~ \bbp_{x_N^0}^\perp.
\end{equation*}
(These assumptions will eventually be unnecessary.) Of course, this requires $N\ge d+2$, since $\bbp_{x_N^0}^\perp$ is $d$-dimensional. Note that 
\begin{align*}
\begin{aligned} \label{Z-19-1}
\langle y_a-y_b,y_c-y_d\rangle (t)=&\langle M(t)a(t)(y_a(0)-y_b(0)),M(t)a(t)(y_c(0)-y_d(0))\rangle\\
=&a(t)^2\langle y_a(0)-y_b(0),y_c(0)-y_d(0)\rangle,
\end{aligned}
\end{align*}
and consequently
\begin{equation}\label{Z-19-2}
\frac{d}{dt}\langle y_a-y_b,y_c-y_d\rangle (t) = 2a(t)a'(t)\langle y_a(0)-y_b(0),y_c(0)-y_d(0)\rangle =\frac{2a'(t)}{a(t)}\langle y_a-y_b,y_c-y_d\rangle (t).
\end{equation}
Now we compare \eqref{Z-19-2} with Lemma \ref{L5.3} to get 
\begin{equation*}
\frac{d}{dt}\langle y_a-y_b,y_c-y_d\rangle=\frac{2\kappa}{N}\Bigg[1+\sum_{k=1}^{N-1}\frac{-1+\|y_k\|^2}{1+\|y_k\|^2}\Bigg]\langle y_a-y_b,y_c-y_d\rangle.
\end{equation*}
This yields
\begin{equation*}
\frac{2a'(t)}{a(t)}=\frac{2\kappa}{N}\Bigg[1+\sum_{k=1}^{N-1}\frac{-1+\|y_k\|^2}{1+\|y_k\|^2}\Bigg],
\end{equation*}
or equivalently
\begin{equation}\label{DEa}
a'(t)=\frac{\kappa}{N}\Bigg[1+\sum_{k=1}^{N-1}\frac{-1+\|y_k\|^2}{1+\|y_k\|^2}\Bigg]a(t).
\end{equation}
Next, we substitute \eqref{data} into \eqref{StereoLohe} to obtain
\begin{align}
\begin{aligned} \label{Z-19-3}
&M'(t)(a(t)y_i(0)+b(t))+M(t)(a'(t)y_i(0)+b'(t))\\
& =\frac{\kappa}{N}\sum_{k=1}^{N-1}\frac{2}{1+\|y_k\|^2}M(t)(a(t)y_k(0)+b(t)) +\frac{\kappa}{N}\Bigg[1+\sum_{k=1}^{N-1}\frac{-1+\|y_k\|^2}{1+\|y_k\|^2}\Bigg]M(t)(a(t)y_i(0)+b(t))\\
& \hspace{0.2cm}+\frac{\kappa}{N}\Bigg[-\sum_{k=1}^{N-1} \frac{2\langle y_i,y_k\rangle}{1+\|y_k\|^2}\Bigg]M(t)x_N^0,\quad i=1,\cdots,N-1,
\end{aligned}
\end{align}
and
\begin{equation} \label{Z-19-4}
M'(t)x_N^0=\frac{\kappa}{N}\sum_{k=1}^{N-1}\frac{2}{1+\|y_k\|^2}M(t)(a(t)y_k(0)+b(t)).
\end{equation}
We multiply $M(t)^{-1}$ to the left sides of \eqref{Z-19-3} and \eqref{Z-19-4} to get
\begin{align}
\begin{aligned}\label{DE1}
&M(t)^{-1}M'(t)(a(t)y_i(0)+b(t))+a'(t)y_i(0)+b'(t)\\
& \hspace{0.5cm} =\frac{\kappa}{N}\sum_{k=1}^{N-1}\frac{2}{1+\|y_k\|^2}(a(t)y_k(0)+b(t))+\frac{\kappa}{N}\Bigg[1+\sum_{k=1}^{N-1}\frac{-1+\|y_k\|^2}{1+\|y_k\|^2}\Bigg](a(t)y_i(0)+b(t))\\
&\hspace{0.7cm} +\frac{\kappa}{N}\Bigg[-\sum_{k=1}^{N-1} \frac{2\langle y_i,y_k\rangle}{1+\|y_k\|^2}\Bigg]x_N^0,\quad i=1,\cdots,N-1,
\end{aligned}
\end{align}
and
\begin{equation}\label{DE2}
M(t)^{-1}M'(t)x_N^0=\frac{\kappa}{N}\sum_{j=1}^{N-1}\frac{2}{1+\|y_k\|^2}(a(t)y_k(0)+b(t)).
\end{equation}
We now take the difference of \eqref{DE1} for $i=1,\cdots,N-2$ and \eqref{DE1} for $N-1$ to obtain that for $ i=1,\cdots,N-1$, 
\begin{align*}
\begin{aligned}
&M(t)^{-1}M'(t)a(t)(y_i(0)-y_{N-1}(0))+a'(t)(y_i(0)-y_{N-1}(0))\\
& \hspace{0.5cm} =\frac{\kappa}{N}\Bigg[1+\sum_{k=1}^{N-1}\frac{-1+\|y_k\|^2}{1+\|y_k\|^2}\Bigg]a(t)(y_i(0)-y_{N-1}(0)) -\frac{\kappa}{N}\Bigg[\sum_{k=1}^{N-1} \frac{2\langle y_i-y_{N-1},y_k\rangle}{1+\|y_k\|^2}\Bigg]x_N^0.
\end{aligned}
\end{align*}
We substitute $a'(t)$ for \eqref{DEa} to see that for $i = 1, \cdots, N-1$, 
\begin{align*}
\begin{aligned} 
&M(t)^{-1}M'(t)a(t)(y_i(0)-y_{N-1}(0)) =-\frac{\kappa}{N}\Bigg[\sum_{k=1}^{N-1} \frac{2\langle y_i-y_{N-1},y_k\rangle}{1+\|y_k\|^2}\Bigg]x_N^0 \\
& \hspace{0.5cm}  =-\frac{\kappa}{N}\Bigg[\sum_{k=1}^{N-1} \frac{2a(t) \langle y_i(0)-y_{N-1}(0),a(t)y_k(0)+b(t)\rangle}{1+\|y_k\|^2}\Bigg]x_N^0, \\
& \hspace{0.5cm} \textup{and} \quad \textup{span} \{ y_1(0)-y_{N-1}(0),\cdots,y_{N-2}-y_{N-1}(0) \} = \bbp_{x_N^0}^\perp.
\end{aligned}
\end{align*}
Thus, one has
\begin{align}
\begin{aligned} \label{Z-19-6}
& M(t)^{-1}M'(t)a(t)v =-\frac{\kappa}{N}\Bigg[\sum_{k=1}^{N-1} \frac{2a(t) \langle v,a(t)y_k(0)+b(t)\rangle}{1+\|y_k\|^2}\Bigg]x_N^0,\quad v\in \bbp_{x_N^0}^\perp, \\
&  \textup{or equivalently,} \quad  M(t)^{-1}M'(t)v =-\frac{\kappa}{N}\Bigg[\sum_{k=1}^{N-1} \frac{2 \langle v,a(t)y_k(0)+b(t)\rangle}{1+\|y_k\|^2}\Bigg]x_N^0,\quad v\in \bbp_{x_N^0}^\perp.
\end{aligned}
\end{align}
We combine \eqref{Z-19-6} and \eqref{DE2} to get 
\begin{align}
\begin{aligned}\label{DEM}
&M(t)^{-1}M'(t)v \\
& \hspace{0.2cm} =\frac{\kappa}{N}\sum_{j=1}^{N-1}\frac{2}{1+\|y_k\|^2}(a(t)y_k(0)+b(t))\langle v,x_N^0\rangle -\frac{\kappa}{N}\Bigg[\sum_{k=1}^{N-1} \frac{2 \langle v,a(t)y_k(0)+b(t)\rangle}{1+\|y_k\|^2}\Bigg]x_N^0\\
& \hspace{0.2cm}  =\langle v,x_N^0\rangle z- \langle v, z\rangle x_N^0,\quad v\in \bbr^{d+1},
\end{aligned}
\end{align}
where
\[
z(t):=\frac{K}{N}\sum_{j=1}^{N-1}\frac{2}{1+\|y_k\|^2}(a(t)y_k(0)+b(t))\in \bbp_{x_N^0}^\perp.
\]
Note that $M(t)^{-1}M'(t)\in \mathfrak{o}(d+1)$, that is, $M(t)^{-1}M'(t)$ should be anti-symmetric. This will be verified in the following lemma.
\begin{lemma}\label{L5.4}
Let $(V,\langle\cdot,\cdot \rangle)$ be an inner product space over $\bbr$, and $x,z\in V$ vectors such that $\langle x, z\rangle =0$. Then the linear operator $L:V\rightarrow V$ defined by
\begin{equation} \label{Z-20}
L(v)=\langle v,x\rangle z- \langle v, z\rangle x,\quad v\in V
\end{equation}
is anti-symmetric:
\[
\langle v,L(w)\rangle=-\langle L(v),w\rangle,\quad v,w\in V.
\]
\end{lemma}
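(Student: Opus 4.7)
The plan is to prove the anti-symmetry of $L$ by direct computation, expanding both $\langle v, L(w)\rangle$ and $\langle L(v), w\rangle$ using the definition \eqref{Z-20} and bilinearity/symmetry of the inner product, and observing that the resulting four scalar products cancel pairwise. Specifically, I would first expand
\[
\langle v, L(w)\rangle = \langle v, \langle w, x\rangle z - \langle w, z\rangle x\rangle = \langle w, x\rangle \langle v, z\rangle - \langle w, z\rangle \langle v, x\rangle,
\]
pulling the scalars $\langle w, x\rangle$ and $\langle w, z\rangle$ out of the second slot by linearity.

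Next, I would compute $\langle L(v), w\rangle$ symmetrically. Using linearity in the first slot and then symmetry of the inner product to flip each bracket, I get
\[
\langle L(v), w\rangle = \langle v, x\rangle \langle z, w\rangle - \langle v, z\rangle \langle x, w\rangle = \langle v, x\rangle \langle w, z\rangle - \langle v, z\rangle \langle w, x\rangle.
\]
Adding the two expressions term by term, the four scalar products pair up and cancel, giving $\langle v, L(w)\rangle + \langle L(v), w\rangle = 0$, which is exactly the anti-symmetry relation.

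There is essentially no obstacle here; the computation is a routine exercise in bilinearity and symmetry of the inner product. One mild observation worth mentioning in the proof is that the hypothesis $\langle x, z\rangle = 0$ is actually not used in establishing anti-symmetry itself — it was introduced in the derivation of \eqref{DEM} to guarantee that $z(t) \in \mathbb{P}_{x_N^0}^\perp$, but the formal identity holds for arbitrary $x, z \in V$. I would note this parenthetically so the reader understands why the anti-symmetry then implies $M(t)^{-1} M'(t) \in \mathfrak{o}(d+1)$ and hence that $M(t)$ stays in $O(d+1)$ along the flow, which is the dynamical content that this lemma is being used to secure.
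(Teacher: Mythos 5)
Your proof is correct and follows essentially the same direct computation as the paper's: expand $\langle v,L(w)\rangle+\langle L(v),w\rangle$ using bilinearity and symmetry and observe the four scalar products cancel pairwise. Your parenthetical observation that the hypothesis $\langle x,z\rangle=0$ is not needed for the anti-symmetry identity itself (only for ensuring $z(t)\in\bbp_{x_N^0}^\perp$ upstream) is correct and a useful clarification the paper does not make explicit.
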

\begin{proof} We use defining relation \eqref{Z-20} to find 
\begin{align*}
\begin{aligned}
\langle v,L(w)\rangle+\langle L(v),w\rangle &= \langle v,\langle w,x\rangle z-\langle w,z\rangle x\rangle+\langle \langle v,x\rangle z-\langle v,z\rangle x,w\rangle\\
& =\langle w,x\rangle  \langle v,z\rangle-\langle w,z\rangle \langle v,x\rangle +\langle v,x\rangle \langle z, w\rangle-\langle v,z\rangle \langle x,w\rangle =0.
\end{aligned}
\end{align*}
This yields the desired anti-symmetry of $L$. 
\end{proof}

\begin{proposition} \label{P5.2}
Let $a(t)\in \bbr$, $b(t)\in \bbp_{x_N^0}^\perp$ and $M(t)\in O(d+1)$ be solutions to the Cauchy problem:
\begin{equation}\label{reducedSphereLohe}
\begin{cases}
\displaystyle a'(t)=\frac{\kappa}{N}\Bigg[1+\sum_{k=1}^{N-1}\frac{-1+\|a(t)y_k(0)+b(t)\|^2}{1+\|a(t)y_k(0)+b(t)\|^2}\Bigg]a(t),\\
\displaystyle b'(t)= \kappa b(t)+\frac{\kappa}{N}\sum_{k=1}^{N-1}\frac{2}{1+\|a(t)y_k(0)+b(t)\|^2}a(t)y_k(0),\\
\displaystyle M'(t)=M(t)L(t),\\
a(0)=1, \quad b(0)=0\in \bbp_{x_N^0}^\perp, \quad M(0)=I_{d+1},
\end{cases}
\end{equation}
where $L(t)=L(t,a,b,y_1(0),\cdots,y_{N-1}(0),x_N^0)$ and $z$ are the anti-symmetric operator and quanitity defined by the following relations:
\begin{align*}
\begin{aligned}
& L(t)v =\langle v,x_N^0\rangle z(t)- \langle v, z(t) \rangle x_N^0,\quad v\in \bbr^{d+1}, \\
& z(t) :=\frac{\kappa}{N}\sum_{j=1}^{N-1}\frac{2}{1+\|a(t)y_k(0)+b(t)\|^2}(a(t)y_k(0)+b(t))\in \bbp_{x_N^0}^\perp.
\end{aligned}
\end{align*}
Then, the relations \eqref{data} hold.
\end{proposition}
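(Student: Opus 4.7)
The plan is to proceed by a uniqueness argument. Given solutions $a(t)>0$, $b(t)\in \bbp_{x_N^0}^\perp$, $M(t)\in O(d+1)$ to \eqref{reducedSphereLohe}, I will introduce candidate processes
\[
\tilde y_i(t):=M(t)(a(t)y_i(0)+b(t)),\qquad \tilde x_N(t):=M(t)x_N^0,
\]
and show that the tuple $(\tilde y_1,\ldots,\tilde y_{N-1},\tilde x_N)$ satisfies the same Cauchy problem \eqref{StereoLohe} as $(y_1,\ldots,y_{N-1},x_N)$. Since the right-hand side of \eqref{StereoLohe} is smooth in its arguments, local uniqueness then forces $\tilde y_i=y_i$ and $\tilde x_N=x_N$, which is precisely \eqref{data}.

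First I would verify the geometric constraints claimed in \eqref{reducedSphereLohe}. By Lemma \ref{L5.4}, $L(t)$ is antisymmetric, so $\frac{d}{dt}(MM^T)=M(L+L^T)M^T=0$ together with $M(0)=I_{d+1}$ gives $M(t)\in O(d+1)$ for all $t\ge 0$. Pairing the $b'$-equation with $x_N^0$ and using $y_k(0)\in \bbp_{x_N^0}^\perp$ yields $\frac{d}{dt}\langle b(t),x_N^0\rangle = \kappa \langle b(t),x_N^0\rangle$, which with $b(0)=0$ forces $b(t)\in \bbp_{x_N^0}^\perp$. Positivity of $a(t)$ is immediate from the linearity of its defining ODE. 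The initial conditions $\tilde y_i(0)=y_i(0)$ and $\tilde x_N(0)=x_N^0$ are also immediate.

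The main work is verifying the differential equations. For $\tilde x_N$, direct differentiation gives $\dot{\tilde x}_N=M(t)L(t)x_N^0=M(t)z(t)$, using $\|x_N^0\|=1$ and $z(t)\in \bbp_{x_N^0}^\perp$; since $\|\tilde y_j\|=\|a y_j(0)+b\|$ by orthogonality of $M$, this coincides with the $x_N$-equation of \eqref{StereoLohe}. For $\tilde y_i$, set $u_i:=a(t)y_i(0)+b(t)\in \bbp_{x_N^0}^\perp$, so that $L(t)u_i=-\langle u_i,z\rangle x_N^0$. Differentiating $\tilde y_i = Mu_i$ yields
\[
\dot{\tilde y}_i = M(a'(t)y_i(0)+b'(t)) - \langle u_i,z\rangle \tilde x_N.
\]
The last term matches the $\tilde x_N$-term of \eqref{StereoLohe} via $\langle u_i,z\rangle=\frac{\kappa}{N}\sum_j \frac{2\langle \tilde y_i,\tilde y_j\rangle}{1+\|\tilde y_j\|^2}$, so the remaining task is to match $M(a'y_i(0)+b')$ with $M\bigl(\frac{\kappa}{N}\sum_j \frac{2u_j}{1+\|u_j\|^2}+\frac{\kappa}{N}[1+\sum_j \frac{-1+\|u_j\|^2}{1+\|u_j\|^2}]u_i\bigr)$.

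The anticipated obstacle is disentangling this identity into $y_i(0)$-linear and $i$-independent parts. Writing $u_i=ay_i(0)+b$ and grouping by $i$-dependence, the $y_i(0)$-coefficient matches precisely the first equation of \eqref{reducedSphereLohe}; the $i$-independent part reduces, after using the telescoping identity
\[
\frac{\kappa}{N}\Big[1+\sum_{j=1}^{N-1}\frac{-1+\|u_j\|^2}{1+\|u_j\|^2}\Big]b + \frac{\kappa}{N}\sum_{j=1}^{N-1}\frac{2b}{1+\|u_j\|^2} = \frac{\kappa}{N}\Big[1+\sum_{j=1}^{N-1}\frac{1+\|u_j\|^2}{1+\|u_j\|^2}\Big]b = \kappa b,
\]
exactly to the second equation of \eqref{reducedSphereLohe}, thereby explaining the otherwise unexpected $\kappa b(t)$ term. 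Once both matchings are verified, uniqueness of solutions to \eqref{StereoLohe} completes the proof.
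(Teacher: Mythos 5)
Your proof is correct, and it takes a genuinely different---and arguably cleaner---route than the paper's. The paper's proof proceeds in the \emph{derivational} direction: assuming the ansatz \eqref{data} holds (and, notably, under the auxiliary spanning hypothesis on $\{y_i(0)-y_{N-1}(0)\}$ which the text waves away with ``these assumptions will eventually be unnecessary''), it reads off what $a'$, $b'$, $M'$ must be, and the displayed computation is just the bookkeeping for $b'$. That direction does not by itself prove the proposition: one needs the \emph{converse}, namely that the stated ODEs for $(a,b,M)$ reproduce the dynamics of $(y_1,\ldots,y_{N-1},x_N)$. Your argument supplies exactly that converse by the standard device of introducing candidates $\tilde y_i := M(au_i^0+b)$, $\tilde x_N := Mx_N^0$, verifying they satisfy \eqref{StereoLohe} with the same initial data, and invoking local uniqueness (the right side of \eqref{StereoLohe} is a rational map with uniformly bounded-below denominators $1+\|y_j\|^2$, hence locally Lipschitz). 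As a bonus, your route avoids the spanning hypothesis entirely. The technical core---the telescoping cancellation $\tfrac{-1+\|u_j\|^2}{1+\|u_j\|^2}+\tfrac{2}{1+\|u_j\|^2}=1$ producing the $\kappa b$ term, the use of $L(t)u_i=-\langle u_i,z\rangle x_N^0$ for $u_i\in\bbp^\perp_{x_N^0}$, and the invariance $\langle b,x_N^0\rangle\equiv 0$ and $M\in O(d+1)$ from antisymmetry of $L$---all check out. One small remark on phrasing: ``grouping by $i$-dependence'' is safe only because you are \emph{verifying} the known ODEs, not extracting them by matching coefficients (the decomposition into a $y_i(0)$-multiple plus an $i$-independent vector is not unique if some $y_i(0)=0$); stating the step as a direct substitution of the $a'$- and $b'$-equations into $a'y_i(0)+b'$ would sidestep that ambiguity.
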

\begin{proof} The dynamics for $a$ and $M$ follow from \eqref{DEa} and \eqref{DEM}. On the other hand, the dynamics of $b$ follows from \eqref{DE1} using \eqref{DEa} and \eqref{DEM} that
\begin{align*}
\begin{aligned} \label{Z-21}
b'(t) &=\frac{\kappa}{N}\sum_{k=1}^{N-1}\frac{2}{1+\|y_k\|^2}(a(t)y_k(0)+b(t))+\frac{\kappa}{N}\Bigg[1+\sum_{k=1}^{N-1}\frac{-1+\|y_k\|^2}{1+\|y_k\|^2}\Bigg](a(t)y_i(0)+b(t))\\
&\hspace{0.5cm}+\frac{\kappa}{N}\Bigg[-\sum_{k=1}^{N-1} \frac{2\langle y_i,y_k\rangle}{1+\|y_k\|^2}\Bigg]x_N^0-M(t)^{-1}M'(t)(a(t)y_i(0)+b(t))-a'(t)y_i(0)\\
&=\frac{\kappa}{N}\sum_{k=1}^{N-1}\frac{2}{1+\|y_k\|^2}(a(t)y_k(0)+b(t))+\frac{\kappa}{N}\Bigg[1+\sum_{k=1}^{N-1}\frac{-1+\|y_k\|^2}{1+\|y_k\|^2}\Bigg]b(t)\\
&\hspace{0.5cm}+\left\{\frac{\kappa}{N}\Bigg[-\sum_{k=1}^{N-1} \frac{2\langle a(t)y_i(0)+b(t),a(t)y_k(0)+b(t)\rangle}{1+\|y_k\|^2}\Bigg]x_N^0-M(t)^{-1}M'(t)(a(t)y_i(0)+b(t))\right\}\\
&\hspace{0.5cm}+\left\{\frac{\kappa}{N}\Bigg[1+\sum_{k=1}^{N-1}\frac{-1+\|y_k\|^2}{1+\|y_k\|^2}\Bigg]a(t)-a'(t)\right\}y_i(0)\\
&=\frac{\kappa}{N}\sum_{k=1}^{N-1}\frac{2}{1+\|y_k\|^2}(a(t)y_k(0)+b(t))+\frac{\kappa}{N}\Bigg[1+\sum_{k=1}^{N-1}\frac{-1+\|y_k\|^2}{1+\|y_k\|^2}\Bigg]b(t)\\
&=\frac{\kappa}{N}\sum_{k=1}^{N-1}\frac{2}{1+\|y_k\|^2}a(t)y_k(0)+ \kappa b(t).
\end{aligned}
\end{align*}
\end{proof}
\begin{remark} We provide two comments below. \newline

\noindent (i)~We have reduced the dynamics of \eqref{SphereLohe} on $(\bbs^d)^N$ to the dynamics of \eqref{reducedSphereLohe} on $(0,\infty)\times \bbp_{x_N^0}^\perp\times SO(d+1)$, a $\frac{(d+1)(d+2)}{2}$-dimensional manifold. In the latter case, the degree of freedom of the system is manifest in the governing equations, not the initial data. \newline

\noindent (ii)~Note that the differential equations for $a$ and $b$, do not depend on $M$. Hence we have a hierarchy: we can first solve for $a$ and $b$, and then solve for $M$ by integration.

Moreover, as far as synchronization properties based on Euclidean distances are concerned, the orthogonal transformation $M(t)$ plays no role. Thus analyzing the emergent dynamics of $(a,b)\in (0,\infty)\times \bbp_{x_N^0}^\perp$ on a $(d+1)$-dimensional manifold is enough to determine whether asymptotic aggregation occurs.
\end{remark}
As an example of how only $a$ and $b$(and not $M$) matters when it comes to aggregation estimates, we consider the order parameter. For a given position configuration $\{x_j \}$, we introduce the order parameter $\rho$:
\[  x_c := \frac{1}{N} \sum_{k=1}^{N} x_k, \qquad  \rho := \| x_c \|. \]
By direct calculation, one has 
\begin{align*}
\begin{aligned} \label{Z-22}
\rho^2=&\|\frac{1}{N}\sum_{k=1}^N x_k\|^2=\left\|\sum_{k=1}^{N-1}\frac{2}{1+\|y_k\|^2}y_k+\Big[1+\sum_{k=1}^{N-1}\frac{-1+\|y_k\|^2}{1+\|y_k\|^2}\Big]x_N\right\|^2\\
=&\left\|\sum_{k=1}^{N-1}\frac{2}{1+\|y_k\|^2}y_k\right\|^2+\left(1+\sum_{k=1}^{N-1}\frac{-1+\|y_k\|^2}{1+\|y_k\|^2}\right)^2\\
=&\left\|\sum_{k=1}^{N-1}\frac{2(a(t)y_k(0)+b(t))}{1+\|a(t)y_k(0)+b(t)\|^2}\right\|^2+\left(1+\sum_{k=1}^{N-1}\frac{-1+\|a(t)y_k(0)+b(t)\|^2}{1+\|a(t)y_k(0)+b(t)\|^2}\right)^2.
\end{aligned}
\end{align*}

In the following proposition, we show that $\rho$ is non-decreasing along the flow \eqref{SphereLohe}.
\begin{proposition}
The order parameter $\rho^2$ is a non-decreasing functional along the flow \eqref{SphereLohe}.
\end{proposition}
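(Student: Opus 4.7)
The plan is to reduce this monotonicity claim to the already-proved monotonicity of the squared-diameter functional $\mathcal{D}_M(X)$ established in Lemma \ref{L5.1}, since $\rho^2$ and $\mathcal{D}_M(X)$ turn out to be affine-equivalent once one uses $\|x_i\|=1$.

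First, I would recall the algebraic identity noted in Remark \ref{R5.1}(ii). Expanding $\|x_i-x_j\|^2 = \|x_i\|^2 - 2\langle x_i,x_j\rangle + \|x_j\|^2$ and invoking the unit-modulus constraint $\|x_i\|=1$ for every $i$, one obtains
\[
\mathcal{D}_M(X) = \sum_{i,j=1}^N \|x_i - x_j\|^2 = 2N^2 - 2\Big\|\sum_{k=1}^N x_k\Big\|^2 = 2N^2 \bigl(1 - \rho^2 \bigr).
\]
Differentiating in time gives $\dot{\mathcal{D}}_M(X) = -2N^2\,\tfrac{d}{dt}\rho^2$, so the sign of $\tfrac{d}{dt}\rho^2$ is exactly opposite to that of $\dot{\mathcal{D}}_M(X)$.

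Second, by Lemma \ref{L5.1}, $\mathcal{D}_M(X)$ is non-increasing along the flow whenever $\kappa > 0$, hence $\dot{\mathcal{D}}_M(X) \leq 0$. Substituting into the identity above yields $\tfrac{d}{dt}\rho^2 \geq 0$, which is exactly the desired monotonicity statement.

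As an alternative approach (and to double check that no subtlety is lurking), I could argue directly from the formula for $\tfrac{d}{dt}\|x_c\|^2$ recorded at the end of Remark \ref{R5.1}(ii),
\[
\frac{d}{dt}\|x_c\|^2 = 2\kappa\left(\|x_c\|^2 - \frac{1}{N}\sum_{i=1}^N \langle x_i, x_c\rangle^2\right),
\]
and apply Cauchy--Schwarz in the form $\langle x_i, x_c\rangle^2 \leq \|x_i\|^2\|x_c\|^2 = \|x_c\|^2$, which renders the right-hand side nonnegative for $\kappa > 0$. I do not anticipate any genuine technical obstacle here; the proposition is essentially Lemma \ref{L5.1} repackaged in terms of the order parameter, and the only real work is to observe the affine identification $\mathcal{D}_M = 2N^2(1-\rho^2)$.
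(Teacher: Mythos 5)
Your proof contains a genuine gap: both Lemma~\ref{L5.1} and the formula for $\tfrac{d}{dt}\|x_c\|^2$ quoted in Remark~\ref{R5.1}(ii) are stated and proved for the \emph{frustration-free} Lohe sphere model \eqref{Z-1}, whereas the proposition in question concerns the flow \eqref{SphereLohe}, which carries the frustration matrix $V$. Your algebraic identity $\mathcal{D}_M(X) = 2N^2(1-\rho^2)$ is purely kinematic and holds for any configuration on $\bbs^d$, so that part is fine; but the dynamical input you invoke (that $\mathcal{D}_M$ is non-increasing) was established only for the system $\dot x_i = \Omega_i x_i + \tfrac{\kappa}{N}\sum_j(x_j - \langle x_i,x_j\rangle x_i)$, not for $\dot x_j = \tfrac{\kappa}{N}\sum_k(Vx_k - \langle x_j, Vx_k\rangle x_j)$. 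Differentiating $\|x_c\|^2$ along the frustrated flow yields $\tfrac12\tfrac{d}{dt}\|x_c\|^2 = \kappa\bigl(\langle x_c, Vx_c\rangle - \tfrac1N\sum_k\langle x_k, Vx_c\rangle\langle x_k, x_c\rangle\bigr)$, which does not coincide with the frustration-free expression and whose sign is not controlled by the Cauchy--Schwarz step you outline (which only bounds $\langle x_i, x_c\rangle^2$, not the mixed $\langle x_k, Vx_c\rangle\langle x_k, x_c\rangle$ terms).

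The paper's proof handles this directly: it observes that the right-hand side of \eqref{SphereLohe} is linear in $V$, so one may normalize $\|V\|_{\mathrm{op}} = 1$ at the cost of rescaling $\kappa$, and then bounds the extra $V$-dependent sum using $\|x_k\| = 1$ and the operator-norm bound on $V$ to obtain $\tfrac12\tfrac{d}{dt}\|x_c\|^2 \geq \kappa(1 - \|V\|_{\mathrm{op}})\|x_c\|^2 = 0$. To repair your argument you would need to first extend Lemma~\ref{L5.1} to the flow \eqref{SphereLohe} with general $V$ (which is essentially the same computation), or otherwise carry out the direct estimate on $\tfrac{d}{dt}\|x_c\|^2$ with the frustration terms present.
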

\begin{proof}
We first observe that equation \eqref{SphereLohe} is linear with respect to the frustration matrix $V$. Thus, if we set $\tilde V := V/ \|V\|_\textup{op}$, then \eqref{SphereLohe} becomes
\begin{equation*}
\dot x_j = \frac{\kp \|V\|_\textup{op}}{N}\sum_{k=1}^N \Big( \tilde V x_k - \langle x_j, Vx_k\rangle x_j\Big).
\end{equation*}
Hence, without loss of generality, we may assume that the operator norm of a frustration matrix $V$ is 1. On the other hand, direct calculation yields 
\begin{align*}
\frac12\frac{d}{dt} \|x_c\|^2 &= \kp \left( \|x_c\|^2 - \frac1N\sum_{k=1}^N \langle x_k, Vx_c\rangle \langle x_k,x_c\rangle\right)  \geq \kp ( 1- \|V\|_\textup{op} ) \|x_c\|^2 =0.
\end{align*}
This shows the desired nondecreasing property of the order parameter.
\end{proof}

As another application of the constants of motion, we note the invariance of subspaces intersecting $\bbs^d$, or in stereographic coordinates, affine subspaces of $\bbr^{d+1}$. This generalizes Corollary \ref{C5.2}, which corresponds to the case of 2-dimensional affine surfaces. This also shows the advantage of using the viewpoint of stereographic projection over the M\"obius transformation, suggested in the beginning of Section \ref{sec:3.2}.

\begin{proposition}\label{P5.3.1}
Consider the Lohe sphere model \eqref{SphereLohe} with initial data $x_i^0\in \bbs^d$ satisfying condition \eqref{NonSingular}. Suppose $n,m\in \bbn$ so that the $n$ points $x_1^0,\cdots,x^0_n$ lie in an $m$-dimensional affine subspace of $\bbr^{d+1}$. Then for all $t\ge 0$, $x_1(t),\cdots,x_n(t)$ lie in some $m$-dimensional affine subspace of $\bbr^{d+1}$.
\end{proposition}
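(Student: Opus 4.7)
I will leverage the low-dimensional reduction from Proposition~\ref{P5.2}. By the permutation invariance of \eqref{SphereLohe} and uniqueness of solutions, particles that coincide at $t=0$ coincide for all $t \geq 0$; hence after relabeling I may assume $x_N^0 \neq x_j^0$ for $j = 1, \ldots, n$, so that the stereographic projection $y_j \in \bbp_{x_N}^\perp$ from $x_N$ is well-defined on each of our $n$ distinguished particles for all $t \geq 0$.

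The conceptual core is a stereographic dictionary. For points on $\bbs^d \setminus \{x_N\}$, lying in some $m$-dimensional affine subspace of $\bbr^{d+1}$ is equivalent to their stereographic images lying on a \emph{generalized $(m-1)$-sphere} of $\bbp_{x_N}^\perp \cong \bbr^d$, meaning either a round $(m-1)$-sphere or an $(m-1)$-dimensional affine subspace. This is a standard fact: the intersection of $\bbs^d$ with an $m$-dimensional affine subspace of $\bbr^{d+1}$ is a round $(m-1)$-sphere on $\bbs^d$, and stereographic projection from $x_N$ sends such a sphere to a round $(m-1)$-sphere of $\bbr^d$ when $x_N$ does not lie on the source sphere, and to an $(m-1)$-dimensional affine subspace of $\bbr^d$ when $x_N$ does.

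The dynamical step is immediate from Proposition~\ref{P5.2}, which yields
\begin{equation*}
y_j(t) = M(t)\bigl(a(t)\, y_j(0) + b(t)\bigr), \qquad M(t)\in O(d+1),~~a(t)>0,~~b(t)\in \bbp_{x_N^0}^\perp.
\end{equation*}
The map $y_j(0) \mapsto y_j(t)$ is therefore a similitude -- uniform scaling by $a(t)$, translation by $b(t)$, then the orthogonal transformation $M(t)$ -- from $\bbp_{x_N^0}^\perp$ onto $\bbp_{x_N(t)}^\perp$. Similitudes preserve round spheres and affine subspaces of each given dimension, and hence carry generalized $(m-1)$-spheres to generalized $(m-1)$-spheres. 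Consequently, if $\{y_j(0)\}_{j=1}^n$ lies on a generalized $(m-1)$-sphere of $\bbp_{x_N^0}^\perp$, then $\{y_j(t)\}_{j=1}^n$ lies on one in $\bbp_{x_N(t)}^\perp$ for every $t\geq 0$. Reversing the dictionary via inverse stereographic projection from $x_N(t)$ yields the conclusion.

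The main subtlety is the dictionary step, which distinguishes two structurally different images -- a round sphere versus an affine subspace in $\bbr^d$ -- depending on whether $x_N$ lies on the source sphere $A_0 \cap \bbs^d$. Fortunately, the dynamical step handles both cases uniformly, since similitudes preserve both round spheres and affine subspaces.
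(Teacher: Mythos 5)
Your argument is correct, but it follows a genuinely different route from the paper's. The paper relabels so that $x_N$ is \emph{one of} the $n$ distinguished points; then the source affine subspace passes through the stereographic center, and the images $y_1,\dots,y_{n-1}$ automatically lie in an $(m-1)$-dimensional \emph{affine} subspace of $\bbp_{x_N}^\perp$ --- the sphere-versus-plane dichotomy in your stereographic dictionary never arises, and the invariance is immediate because the similitude $y\mapsto M(t)(a(t)y+b(t))$ preserves affine subspaces. You instead arrange $x_N$ to lie \emph{outside} the distinguished set, project all $n$ points, and invoke the broader conformal-geometry fact that stereographic projection carries $(m-1)$-spheres of $\bbs^d$ to generalized $(m-1)$-spheres of $\bbp_{x_N}^\perp$, which similitudes again preserve. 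Both work; the paper's is leaner (no generalized-sphere formalism), while yours makes the underlying conformal invariance more visible. One small imprecision: the relabeling to achieve $x_N^0\neq x_j^0$ for all $j\le n$ requires some particle outside the distinguished set to be distinct from every $x_j^0$, and the justification ``coinciding particles remain coincident'' does not by itself deliver this. In the cases where no such relabeling exists, the offending $x_j$ equals $x_N$, the source affine subspace then contains $x_N$, and the argument collapses to the paper's own treatment (drop the duplicate and note the image is an affine subspace), so the slip is easily repaired.
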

\begin{proof}
If $n=1$, there is nothing to prove, so assume $n\ge 2$. We may relabel the indices so that the points in consideration are $x_1,\cdots,x_{n-1}$ and $x_N$. Then the condition that $x_1,\cdots,x_{n-1}$ and $x_N$ lie in some $m$-dimensional affine subspace of $\bbr^{d+1}$ is equivalent to the condition that $y_1,\cdots,y_{n-1}$ lie in some $(m-1)$-dimensional affine subspace of $\bbp_{x_N}^\perp$. However, since $y_i(t)=M(t)(a(t)y_i(0)+b(t))$ from \eqref{data}, the foregoing assertion is equivalent to saying that $a(t)y_i(0)+b(t)$, $i=1,\cdots,n-1$ lie in some $(m-1)$-dimensional affine subspace of $\bbp_{x_N^0}^\perp$. Since this is true for $t=0$, this must be true for all $t\ge 0$.
\end{proof}

\subsection{Complete aggregation} \label{sec:5.3} In this subsection, we  present an exponential aggregation  estimate of the Lohe sphere model with frustration. This is be clearly expected from the lower dimensional case of Section \ref{sec:3.1}.

\begin{theorem} \label{T5.2}
Suppose that the natural frequency matrices, frustration matrix and initial data satisfy 
\begin{align}
\begin{aligned} \label{Z-22-1}
& V = aI_d + W, \quad \| W \|<a \ll 1, \quad  \Omega_i \equiv \Omega \quad \textup{for all $i=1,\cdots,N$}, \\
& \max_{1\leq i,j\leq N} \Big(1-\langle x_i^0,x_j^0\rangle \Big) < 1- \frac{ \|W\|}{a},
\end{aligned}
\end{align}
where $W$ is a $d\times d$ skew-symmetric matrix, and let $\{x_i\}_{i=1}^N$ be a solution to \eqref{SphereLohe}. Then, one has complete aggregation:
\[ \lim_{t\to\infty} \max_{1\leq i,j\leq N} \|x_i(t) -x_j(t)\|  = 0. \]
\end{theorem}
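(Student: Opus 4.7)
The plan is to track the pairwise inner products $\langle x_i, x_j\rangle$ directly, derive a closed differential inequality for them, and run a bootstrap argument that keeps $\min_{i,j}\langle x_i,x_j\rangle$ strictly above $\|W\|/a$ for all time, which then forces exponential convergence to $1$. First, I would compute
\[
\frac{d}{dt}\langle x_i, x_j\rangle = \frac{\kappa}{N}(1-\langle x_i, x_j\rangle)\sum_{k=1}^{N}\langle Vx_k,\, x_i+x_j\rangle
\]
straight from the governing equations. The contributions from the common drift $\Omega x_i$ cancel out cleanly because $\Omega$ is skew-symmetric: $\langle \Omega x_i,x_j\rangle+\langle x_i,\Omega x_j\rangle=0$. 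The prefactor $(1-\langle x_i,x_j\rangle)$ emerges from the unit-norm constraint $\|x_i\|^2=1$, which collapses many cross-terms.

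Second, I would establish a uniform lower bound on the sum. Writing $V=aI_{d+1}+W$, letting $S:=\sum_k x_k$, and setting $\gamma(t):=\min_{i,j}\langle x_i(t),x_j(t)\rangle$, the identity term yields
\[
a\sum_{k=1}^N\langle x_k,x_i+x_j\rangle \geq 2aN\gamma(t),
\]
since $\langle x_k,x_i\rangle,\langle x_k,x_j\rangle\geq\gamma(t)$ for every $k$. For the skew part, Cauchy-Schwarz combined with $\|S\|\leq N$ and $\|x_i+x_j\|^2=2+2\langle x_i,x_j\rangle\leq 4$ gives
\[
\Big|\sum_{k=1}^N\langle Wx_k,x_i+x_j\rangle\Big| = |\langle WS,x_i+x_j\rangle|\leq 2N\|W\|.
\]
Combining these estimates yields
\[
\frac{d}{dt}\langle x_i, x_j\rangle \geq 2\kappa(1-\langle x_i, x_j\rangle)\bigl(a\gamma(t)-\|W\|\bigr).
\]

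Third comes the bootstrap. Let $T^*:=\sup\{T\geq 0:\gamma(t)>\|W\|/a\text{ for all }t\in[0,T)\}$; the hypothesis $\max_{i,j}(1-\langle x_i^0,x_j^0\rangle)<1-\|W\|/a$ translates to $\gamma(0)>\|W\|/a$, so $T^*>0$ by continuity. On $[0,T^*)$ the right-hand side of the above inequality is non-negative for every pair $(i,j)$, so each $\langle x_i(t),x_j(t)\rangle$ is non-decreasing, whence $\gamma(t)\geq\gamma(0)>\|W\|/a$. This precludes $T^*<\infty$, hence $T^*=\infty$. Setting $\delta:=a\gamma(0)-\|W\|>0$ and using $\gamma(t)\geq\gamma(0)$ globally, the pairwise inequality reads
\[
\frac{d}{dt}(1-\langle x_i,x_j\rangle)\leq -2\kappa\delta\,(1-\langle x_i,x_j\rangle),
\]
and Gronwall's inequality gives $\|x_i(t)-x_j(t)\|^2 = 2(1-\langle x_i,x_j\rangle)\leq 2(1-\langle x_i^0,x_j^0\rangle)e^{-2\kappa\delta t}$, which is exactly complete aggregation with exponential rate.

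The only real subtlety is that $\gamma(t)$ is merely Lipschitz, not $C^1$, so one must be careful about differentiating a pointwise minimum. However, because the differential inequality above holds pair-by-pair with the \emph{uniform} decay rate $2\kappa\delta$, I can bypass any direct differentiation of $\gamma$: non-decreasingness of each $\langle x_i,x_j\rangle$ immediately yields non-decreasingness of the min, which is all the bootstrap needs. If desired, a Danskin/Dini-derivative argument on $\max_{i,j}(1-\langle x_i,x_j\rangle)$ would give the same conclusion, but it is not required here.
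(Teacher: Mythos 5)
Your proof is correct and is essentially the paper's argument viewed dually: where the paper works with $A_{ij}=1-\langle x_i,x_j\rangle$ and its maximum $D(\mathcal A)$, deriving the Riccati-type inequality $\frac{d}{dt}D(\mathcal A)\le -2\kappa(a-\|W\|)D(\mathcal A)+2a\kappa D(\mathcal A)^2$ and invoking the smallness condition, you instead track $\gamma=\min_{i,j}\langle x_i,x_j\rangle$ and spell out the bootstrap that keeps $a\gamma-\|W\|$ bounded away from zero. The pairwise ODE, the split $V=aI+W$, the crude bound on the skew-symmetric contribution, and the resulting exponential rate all coincide with the paper's; your rendering merely makes explicit the step that the paper compresses into ``use Gr\"onwall and the smallness condition.''
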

\begin{proof} We substitute the relation $V=aI_d +  W$ in  \eqref{SphereLohe} to get 
\begin{equation*} \label{Z-23}
\dot x_i = \frac{a\kp}{N} \sum_{k=1}^N (x_k - \langle x_i,x_k\rangle x_i) + \frac{\kp}{N} \sum_{k=1}^N (   Wx_k -\langle x_i , Wx_k \rangle x_i ).
\end{equation*}
We set 
\[  R_{ij}:= \langle x_i,x_j\rangle, \quad 1 \leq i, j \leq N. \]
Then, one has $|R_{ij}| \leq 1$ and 
\begin{equation} \label{Z-24}
\frac{d}{dt} R_{ij} = \frac{a\kp}{N}\sum_{k=1}^N(R_{ik}+R_{kj})(1-R_{ij}) + \frac{\kp}{N} \sum_{k=1}^N (\langle x_i, Wx_k\rangle + \langle x_j, Wx_k \rangle)(1-R_{ij}).
\end{equation}
Since we expect $R_{ij} \to 1$ as $t \to \infty$, it will be convenient to use $A_{ij}$:
\[ A_{ij} := 1-R_{ij}. \]
It follows from the definition of $A_{ij}$ and  \eqref{Z-24} that 
\begin{align*}
\frac{d}{dt} A_{ij} &= -2a\kp A_{ij} + \frac{a\kp}{N}\sum_{k=1}^N (A_{ik}+A_{kj})A_{ij} + \frac{\kp}{N} \sum_{k=1}^N (\langle x_i, Wx_k\rangle + \langle x_j, Wx_k \rangle)A_{ij} \\
&\leq -2a\kp A_{ij} + \frac{a\kp}{N}\sum_{k=1}^N (A_{ik}+A_{kj})A_{ij} + 2\kp \|  W\| A_{ij}.
\end{align*}
For each $t>0$, we choose the extremal indices $(i_t,j_t)$ satisfying the relation:
\begin{equation*}
D(\mathcal A(t)) := \max_{1\leq i,j\leq N} A_{ij} = A_{i_tj_t}.
\end{equation*}
Then, $D(\mathcal A)$ satisfies 
\[
\frac{d}{dt} D(\mathcal A) \leq -2\kp(a-\| W\| )D(\mathcal A) + 2a\kp D(\mathcal A)^2.
\]
Finally, we use Gr\"onwall's lemma and the smallness condition \eqref{Z-22-1} to derive the desired result.
\end{proof}

\subsubsection{A two-oscillator system} For motivation, consider the two-oscillator system with a special ansatz for frustration matrix:
\[
V = aI_d +  W,
\]
where $W$ is a $d\times d$ skew-symmetric matrix. First, we consider the special case $a=0$. In other words, we assume that the frustration matrix $W$ is given to be skew-symmetric. Then, the two-oscillator system becomes
\begin{align}
\begin{aligned} \label{Y-2}
\dot x_1 &= \frac{\kp}{2} ( Wx_2 - \langle x_1,Wx_2\rangle x_1), \\
\dot x_2 &= \frac{\kp}{2} ( Wx_1 - \langle x_2,Wx_1\rangle x_2).
\end{aligned}
\end{align}
Then, it is easy to see that $R_{12}:=\langle x_1,x_2\rangle$ satisfies
\begin{align*} \label{Y-3}
\begin{aligned}
\frac{d}{dt} R_{12} &= \frac{\kp}{2} \Big( \langle Wx_2,x_2\rangle - \langle x_1,Wx_2\rangle \langle x_1,x_2\rangle + \langle Wx_1,x_1\rangle - \langle x_2,Wx_1\rangle \langle x_1,x_2\rangle\Big) \\
&=\frac{\kp}{2} \Big( \langle Wx_1,x_2\rangle - \langle Wx_1,x_2\rangle \Big) \langle x_1,x_2\rangle = 0,
\end{aligned}
\end{align*}
where we use the fact that $W$ is skew-symmetric:
\begin{equation*}
\langle Wy,y \rangle = 0\quad\textup{and} \quad \langle Wy,z \rangle = -\langle y, Wz\rangle.
\end{equation*}
Hence we have the following proposition.
\begin{proposition} \label{P5.4}
Let $(x_1,x_2)$ be a solution to \eqref{Y-2} with initial data $(x_1^0,x_2^0)$. Then we have
\begin{equation*}
\langle x_1,x_2\rangle(t) = \langle x_1^0,x_2^0\rangle, \quad t \geq 0.
\end{equation*}
\end{proposition}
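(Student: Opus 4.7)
The plan is to verify conservation of $\langle x_1, x_2\rangle$ by a direct product-rule differentiation, invoking only the two algebraic identities for a skew-symmetric operator stated immediately before the proposition: $\langle Wy,y\rangle = 0$ and $\langle Wy,z\rangle = -\langle y,Wz\rangle$. Concretely, I would compute
\[
\frac{d}{dt}\langle x_1, x_2\rangle = \langle \dot x_1, x_2\rangle + \langle x_1, \dot x_2\rangle
\]
and substitute the right-hand sides from \eqref{Y-2}. This produces four terms, which naturally separate into two groups: a group containing the self-pairings $\langle Wx_2, x_2\rangle$ and $\langle x_1, Wx_1\rangle$, and a group containing the cross terms $\langle x_1, Wx_2\rangle \langle x_1, x_2\rangle$ and $\langle x_2, Wx_1\rangle \langle x_1, x_2\rangle$.

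The first group vanishes immediately by the identity $\langle Wy,y\rangle = 0$. The second group can be rewritten as $-\big(\langle x_1, Wx_2\rangle + \langle x_2, Wx_1\rangle\big)\langle x_1, x_2\rangle$, and this too vanishes since $\langle x_2, Wx_1\rangle = -\langle x_1, Wx_2\rangle$ by skew-symmetry. Hence $\frac{d}{dt}\langle x_1, x_2\rangle \equiv 0$, and integrating from $0$ to $t$ delivers the claim $\langle x_1(t), x_2(t)\rangle = \langle x_1^0, x_2^0\rangle$.

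There is no genuine obstacle here: the entire argument is a few lines of bookkeeping once the skew-symmetry identities are in hand, and the computation is already sketched in the paragraph that motivates the proposition. The only point requiring care is to track signs correctly when matching $\langle x_2, Wx_1\rangle$ to $-\langle x_1, Wx_2\rangle$ so that the two cross terms cancel rather than reinforce.
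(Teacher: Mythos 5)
Your proposal is correct and takes essentially the same route as the paper: a direct product-rule differentiation of $\langle x_1,x_2\rangle$, with the self-pairings $\langle Wx_2,x_2\rangle$ and $\langle x_1,Wx_1\rangle$ vanishing by $\langle Wy,y\rangle=0$ and the two cross terms cancelling via $\langle Wy,z\rangle=-\langle y,Wz\rangle$. The paper carries out precisely this computation in the displayed chain of equalities immediately preceding the proposition statement; your grouping and sign bookkeeping match it line for line.
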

In other words, the relative angle between the position $x_1$ and $x_2$ is conserved under the flow \eqref{Y-2}.
\begin{remark} \label{RY.1}
Complete synchronization between $x_1$ and $x_2$ is equivalent to $\langle x_1,x_2\rangle$ converging to $1$. However, in the presence of skew-symmetric frustration, Proposition \ref{P5.4} tells us that their inner product is always constant. Hence, unless $x_1$ and $x_2$ are located in the same point initially, then they cannot be close to each other and complete synchronization is \textcolor{black}{impossible}.
\end{remark}

\subsubsection{A many-oscillator system} In this part, we consider the many-oscillator case. For a spatial configuration $\{ x_j \}$, we set 
\[ R_{ij}:=\langle x_i,x_j\rangle, \quad 1 \leq i, j \leq N. \]
\begin{theorem} \label{T5.3}
Let $\{x_i\}_{i=1}^N$ be a solution to the Lohe sphere model \eqref{SphereLohe} with skew-symmetric frustration matrix. Then,  
for $t>0$, one has 
\begin{align} \label{Y-7}
\begin{aligned}
\prod_{i<j} \Big(1-R_{ij}(t) \Big)  =  \prod_{i<j} \Big( 1-R_{ij}^0 \Big), \quad \textup{or equivalently,} \quad  \prod_{i<j} \|x_i(t)-x_j(t)\| = \prod_{i<j} \|x_i^0-x_j^0\|.
\end{aligned}
\end{align}
\end{theorem}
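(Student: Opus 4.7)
The plan is to reduce the product identity to a logarithmic derivative calculation and then exploit the skew-symmetry of $W$ at the very end.

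First, I would derive the evolution equation for the pairwise inner products $R_{ij}:=\langle x_i,x_j\rangle$. A direct computation from \eqref{SphereLohe} with $V=W$ gives
\[
\dot R_{ij} = \frac{\kappa}{N}\sum_{k=1}^N \Bigl(\langle Wx_k,x_j\rangle + \langle x_i,Wx_k\rangle - (\langle x_i,Wx_k\rangle+\langle x_j,Wx_k\rangle)R_{ij}\Bigr),
\]
where the symmetry of the real inner product is used to turn $\langle Wx_k,x_j\rangle$ into $\langle x_j,Wx_k\rangle$. Factoring, this collapses to the key formula
\[
\dot R_{ij} = \frac{\kappa}{N}(1-R_{ij})\sum_{k=1}^N \langle x_i+x_j,Wx_k\rangle.
\]
Writing $A_{ij}:=1-R_{ij}$, this becomes $\dot A_{ij}=-\frac{\kappa}{N}A_{ij}\sum_{k}\langle x_i+x_j,Wx_k\rangle$, so wherever $A_{ij}>0$ we have a clean logarithmic derivative.

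Next I would sum these logarithmic derivatives over the unordered pairs $i<j$. The crucial combinatorial identity is $\sum_{i<j}(x_i+x_j)=(N-1)\sum_i x_i$, because each index appears in exactly $N-1$ pairs. Therefore
\[
\frac{d}{dt}\log\prod_{i<j}A_{ij} = -\frac{\kappa(N-1)}{N}\Bigl\langle \sum_{i=1}^N x_i,\, W\sum_{k=1}^N x_k\Bigr\rangle.
\]
The bracket on the right is of the form $\langle y,Wy\rangle$, which vanishes identically because $W$ is skew-symmetric. Hence $\prod_{i<j}A_{ij}$ is constant in time, which is the first identity of \eqref{Y-7}.

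The second identity follows immediately from $\|x_i-x_j\|^2=2(1-R_{ij})=2A_{ij}$ on $\bbs^d$, so that $\prod_{i<j}\|x_i-x_j\|=2^{N(N-1)/4}\prod_{i<j}A_{ij}^{1/2}$; constancy of one product is equivalent to constancy of the other. The only technical point to address is the degenerate case in which some $A_{ij}^0=0$: by uniqueness of solutions of \eqref{SphereLohe}, if $x_i^0=x_j^0$ then $x_i(t)=x_j(t)$ for all $t\ge 0$, so both sides of \eqref{Y-7} are identically zero and the claim is trivial. I do not expect any genuine obstacle here; the proof is essentially a two-line calculation once the pairing trick $\sum_{i<j}(x_i+x_j)=(N-1)\sum_i x_i$ is combined with the antisymmetry $\langle y,Wy\rangle=0$.
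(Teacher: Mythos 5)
Your proof is correct and follows the same overall strategy as the paper's: derive the ODE for $R_{ij}$, pass to $\log(1-R_{ij})$, and show the total logarithmic derivative over unordered pairs vanishes by skew-symmetry. The one place where you genuinely improve on the paper's presentation is the final cancellation. The paper's displayed chain
\[
\sum_{k\neq j}\langle x_j, Wx_k\rangle = \sum_{k\neq j}\langle -Wx_j, x_k\rangle = \sum_{k\neq j}\langle x_k, -Wx_j\rangle = -\sum_{k\neq j}\langle x_j, Wx_k\rangle
\]
is not correct as written (the last equality fails; working it out carefully just returns the quantity you started with), and the subsequent step $\sum_{i<j}[\,\cdots + \cdots\,] = \sum_{i<j}[\,\cdots - \cdots\,]$ is also not justified. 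Your route is the clean and correct one: observe that the diagonal terms $\langle x_i, Wx_i\rangle$ vanish, use the pairing identity $\sum_{i<j}(x_i+x_j)=(N-1)\sum_i x_i$, and then collapse the double sum to the quadratic form $\bigl\langle \sum_i x_i,\, W\sum_k x_k\bigr\rangle$, which is zero because $W$ is skew-symmetric. Your treatment of the degenerate case $x_i^0=x_j^0$ via uniqueness of solutions, and the elementary conversion $\|x_i-x_j\|^2 = 2(1-R_{ij})$, match the structure the paper relies on implicitly. In short: same strategy, but your justification of the key cancellation is the one that actually holds up.
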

\begin{proof} 
Without loss of generality assume that $x_i\neq x_j$ for all $i,j$. Then $R_{ij}:=\langle x_i,x_j\rangle$ satisfies 
\begin{equation} \label{Y-8}
\frac{d}{dt} \langle x_i,x_j\rangle = \frac{\kp}{N}  (1-\langle x_i,x_j\rangle ) \Big[ \sum_{k\neq i} \langle x_i,Wx_k\rangle + \sum_{k\neq j } \langle x_j,Wx_k\rangle \Big].
\end{equation}
Since $W$ is a skew-symmetric matrix, we have
\begin{align*}
\sum_{k\neq j} \langle x_j, Wx_k\rangle  = \sum_{k\neq j} \langle -Wx_j,x_k\rangle = \sum_{k\neq j} \langle x_k,-Wx_j\rangle = -\sum_{k\neq j} \langle x_j,Wx_k\rangle.
\end{align*}
We sum up \eqref{Y-8} with respect to the indices $i <j$ and divide the resulting relation by $1-\langle x_i,x_j\rangle$  to get
\begin{align*}
\frac{d}{dt} \sum_{i <j} \frac{ \langle x_i,x_j\rangle}{1- \langle x_i,x_j\rangle} &= \frac{\kp}{N} \sum_{i <j} \Big[ \sum_{k\neq i} \langle x_i,Wx_k\rangle + \sum_{k\neq j } \langle x_j,Wx_k\rangle \Big] \\
&= \frac{\kp}{N} \sum_{i<j} \Big[ \sum_{k\neq i} \langle x_i,Wx_k\rangle -\sum_{k\neq j} \langle x_j,Wx_k\rangle\Big] \\
&=0.
\end{align*}
We integrate the above relation with respect to time $t$ to obtain the desired result.
\end{proof}
\begin{remark} \label{R5.5}
As  discussed before in Remark \ref{RY.1}, if the particles are located in all distinct positions initially, i.e., $\langle x_i^0,x_j^0\rangle \neq 1$, for all $i,j$, then the identity \eqref{Y-7} in the Theorem \ref{T3.2} yields that any two particles cannot converge towards each other. Hence, we may conclude that the skew-symmetric part of the frustration matrix contributes to anti-synchronous behavior. 

\end{remark}

\section{The Lohe matrix model with frustration}  \label{sec:6}
\setcounter{equation}{0}
In this section, we study emergent dynamics and equilibria of the Lohe matrix model with frustration for identical hamiltonians $D(H) = 0$. 
\subsection{Complete aggregation} \label{sec:6.1} Consider the Lohe matrix model with identical hamiltonians:
\begin{equation} \label{F-0}
\mi \dot U_j U_j ^* = H + \frac{\mi \kappa}{2N} \sum_{k=1}^N ( VU_kU_j^* - U_j U_k^* V^*), \quad j = 1, \cdots, N,
\end{equation}
where the frustration matrix $V$ is given to be unitary, that is, $V\in\mathbb{U}(d)$.   It is worthwhile to mention the recent results \cite{Lo-6} which concerns the constants of motion for \eqref{F-0} without frustration $V=I_d$. To be precise, they defined the matrix cross-ratios:
\begin{equation*}
C_{ijk\ell} := (U_i  - U_k)(U_i - U_\ell)^{-1}(U_j - U_\ell)(U_j - U_k)^{-1},\quad  i\neq \ell , ~~  j\neq k . 
\end{equation*}
Then, time-evolution of $C_{ijk\ell}$ is given by 
\begin{equation*}
\dot C_{ijk\ell} = \frac\kp2[ C_{ijk\ell}, U_kU_c^\dg + \mi H], \quad U_c:=\frac1N\sum_{i=1}^N U_i,
\end{equation*}
where $[\cdot,\cdot]$ denotes the usual commutator of two matrices. Then, the eigenvalues become constants of motion. For the detailed argument and proof, we refer the reader to \cite{Lo-6} and references therein. \newline

Now, we present the first main result on the exponential  aggregation of the identical oscillators. Note that $\|A\|_\textup{F}$ denotes the Frobenius norm for a $d\times d$ matrix $A$. 

\begin{theorem} \label{T6.1}
Suppose that the frustration matrix and the initial data satisfy
\begin{equation} \label{F-0-0-0}
 \|V-I_d\|_\textup{F} <\frac{2}{3} \quad \textup{and} \quad \max_{1\leq i,j\leq N} \|U_i^0-U_j^0\|_\textup{F} <\sqrt{2-3\|V-I_d\|_\textup{F}},  
\end{equation} 
 and let $\{U_i(t)\}$ be a solution to \eqref{F-0}. Then complete aggregation emerges asymptotically: 
\[
\lim_{t\to\infty} \max_{1\leq i,j\leq N} \|U_i(t) - U_j(t)\|_\textup{F} = 0.
\]
\end{theorem}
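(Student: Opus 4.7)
The plan is to adapt the Lyapunov-diameter argument of Theorem \ref{T5.2} (Lohe sphere with frustration) to the matrix model. With $W := V - I_d$ and $\|W\|_\textup{F} < 2/3$, define
\[
A_{ij}(t) := \|U_i(t) - U_j(t)\|_\textup{F}^2, \qquad D(t) := \max_{1 \le i, j \le N} A_{ij}(t),
\]
so the hypothesis reads $D(0) < 2 - 3\|W\|_\textup{F}$. The target is the differential inequality
\[
\frac{d}{dt}A_{ij}(t) \le \kappa\bigl[D(t) + 3\|W\|_\textup{F} - 2\bigr]A_{ij}(t),
\]
uniformly in $(i,j)$. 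This transfers to the upper Dini derivative of $D(t)$; a continuity-bootstrap argument then preserves $D(t) \le D(0)$ (so the bracket stays $\le -\varepsilon_0 := -(2 - 3\|W\|_\textup{F} - D(0)) < 0$), and Gr\"onwall's lemma yields $D(t) \le D(0)e^{-\kappa\varepsilon_0 t}$.

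To derive the inequality, I would first compute $\frac{d}{dt}\mathrm{tr}(U_i^*U_j)$ using $\dot U_j = -\mi HU_j + \frac{\kappa}{2N}\sum_k(VU_k - U_jU_k^*V^*U_j)$. The Hermitian Hamiltonian drops out after taking real parts (since $\mathrm{tr}(X^*HX)$ is real for Hermitian $H$), and grouping through the defect $E_{ij} := I_d - U_i^*U_j$ and the centroid $\bar U := \frac1N\sum_k U_k$ yields
\[
\frac{d}{dt}A_{ij} = -\kappa\,\mathrm{Re}\,\mathrm{tr}\bigl[E_{ij}(U_i^*V\bar U + \bar U^*V^*U_j)\bigr].
\]
Substituting $V = I_d + W$ splits the right-hand side into an unfrustrated piece and a frustration piece.

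The unfrustrated piece is handled using the sharp Frobenius identity $\mathrm{Re}\,\mathrm{tr}(U_i^*U_jU_k^*U_j) = d - \frac12\|U_i^*U_jU_k^*U_j - I_d\|_\textup{F}^2$ (immediate from $\|X - I_d\|_\textup{F}^2 = \|X\|_\textup{F}^2 + d - 2\mathrm{Re}\,\mathrm{tr}(X)$ together with the unitarity bound $\|X\|_\textup{F}^2 = d$) and its twin with indices permuted. The cancellation between the positive contribution $\frac{1}{2N}\sum_k(A_{ik} + A_{kj})$ and the negative contribution $\frac{1}{2N}\sum_k(\|U_i^*U_jU_i^*U_k - I_d\|_\textup{F}^2 + \|U_i^*U_jU_k^*U_j - I_d\|_\textup{F}^2)$, which to leading order expands as $4A_{ij} + A_{ik} + A_{jk}$, produces the sharp drift $-2\kappa A_{ij}$ together with a nonlinear remainder of size $\kappa D(t)A_{ij}$. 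For the frustration piece $-\kappa\,\mathrm{Re}\,\mathrm{tr}[E_{ij}(U_i^*W\bar U + \bar U^*W^*U_j)]$, I would decompose each $U_i^*$ as $\bar U^* + (U_i - \bar U)^*$ and apply the Frobenius Cauchy-Schwarz inequality. The key ingredient is that the unitarity of $V$ forces $W + W^* = -WW^*$, so the Hermitian-symmetric leading term $\mathrm{tr}[E_{ij}\bar U^*(W + W^*)\bar U]$ is small to higher order in $\|W\|_\textup{F}$; combined with $\sqrt{A_{ij}\cdot D}\|W\|_\textup{F}$ estimates on the four "difference" terms, this gives a total bound $\le 3\kappa\|W\|_\textup{F}A_{ij}$ at the maximizing pair.

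The main obstacle is achieving the sharp constants at each stage. Naive upper bounds such as $|\mathrm{tr}(U_i^*U_jU_k^*U_j)| \le d$ produce only $-\kappa A_{ij}$ drift in the unfrustrated piece, which is insufficient; the sharp Frobenius \emph{equality} is needed to expose the cancellation driving the $-2\kappa A_{ij}$ drift. Analogously, without exploiting the unitarity identity $W + W^* = -WW^*$, the frustration piece admits only a bound of order $\sqrt{A_{ij}}\|W\|_\textup{F}$, which is sublinear in $A_{ij}$ and would yield practical rather than complete aggregation. The structural use of unitarity here parallels the use of skew-symmetry $W + W^T = 0$ in the Lohe sphere analogue, Theorem \ref{T5.2}.
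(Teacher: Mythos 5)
Your overall plan—set up the maximal diameter $D(t)=\max_{i,j}\|U_i-U_j\|_\textup{F}^2$, derive a Gr\"onwall-type differential inequality with negative drift $-2\kappa$, absorb the frustration into a $3\kappa\|V-I_d\|_\textup{F}$ contribution, and bootstrap the initial smallness condition—matches the paper's Step C, and your derivative formula $\frac{d}{dt}A_{ij}=-\kappa\,\mathrm{Re}\,\mathrm{tr}\bigl[E_{ij}(U_i^*V\bar U+\bar U^*V^*U_j)\bigr]$ is correct. However, your treatment of the frustration piece has a genuine gap, and the structural ingredient you identify as key is not the one the paper uses.

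The problem is that in your decomposition $E_{ij}$ appears \emph{linearly} inside the trace, so a Cauchy--Schwarz bound only yields $\|E_{ij}\|_\textup{F}\cdot(\ldots)=\sqrt{A_{ij}}\cdot(\ldots)$, which is sublinear in $A_{ij}$. You recognize this danger but argue that the unitarity identity $W+W^*=-WW^*$ makes the leading term $\mathrm{tr}[E_{ij}\bar U^*(W+W^*)\bar U]$ negligible. It does not: you get
\[
\bigl|\mathrm{tr}[E_{ij}\bar U^*(W+W^*)\bar U]\bigr|=\bigl|\mathrm{tr}[E_{ij}\bar U^*WW^*\bar U]\bigr|\le\|E_{ij}\|_\textup{F}\,\|W\|_\textup{F}^2=\sqrt{A_{ij}}\,\|W\|_\textup{F}^2,
\]
which is higher order in $\|W\|_\textup{F}$ but still only of order $\sqrt{A_{ij}}$. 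When $D(t)$ becomes small relative to $\|W\|_\textup{F}^2$, this term dominates the drift $-2\kappa A_{ij}$ and the Gr\"onwall argument breaks down; one gets practical (bounded-from-above) aggregation, not $D(t)\to0$. Your claim that the frustration piece is $\le 3\kappa\|W\|_\textup{F}A_{ij}$ at the maximizing pair therefore does not follow from your decomposition, and in fact the arithmetic on the unfrustrated side (the "$4A_{ij}+A_{ik}+A_{jk}$ to leading order" claim) is also off: the Frobenius identity gives $\|U_i^*U_jU_i^*U_k-I\|_\textup{F}^2+\|U_i^*U_jU_k^*U_j-I\|_\textup{F}^2 = 2A_{ij}+A_{ik}+A_{kj}-2\,\mathrm{Re}\,\mathrm{tr}[E_{ij}(E_{ik}+E_{kj})]$, and the cross term, which you drop, is exactly what carries the second $-\kappa A_{ij}$ of the drift.

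The paper avoids all of this by a different structural observation. With $L_{ij}:=I_d-U_iU_j^*$, unitarity of the $U$'s (not of $V$!) gives the algebraic identity $L_{ij}+L_{ji}=L_{ij}L_{ji}=L_{ij}L_{ij}^*$, so the linear-in-$L$ term in $\frac{d}{dt}\|L_{ij}\|_\textup{F}^2$ can be rewritten as the quadratic $\mathcal I_{11}=\mathrm{tr}[L_{ij}L_{ji}(V+V^*)]$. Since $L_{ij}L_{ji}$ is Hermitian positive semidefinite with trace $A_{ij}$, the PSD trace bound $|\mathrm{tr}[L_{ij}L_{ji}(V+V^*-2I_d)]|\le\|V+V^*-2I_d\|_{op}\,\mathrm{tr}[L_{ij}L_{ji}]\le 2\|V-I_d\|_\textup{F}\,A_{ij}$ is \emph{linear} in $A_{ij}$, which is exactly what the Gr\"onwall argument needs. (The same identity $E_{ij}+E_{ji}=E_{ij}E_{ji}$ holds for your $E_{ij}=I_d-U_i^*U_j$ and would serve identically, provided you symmetrize in the right way before applying Cauchy--Schwarz.) Notice the unitarity of $V$ plays no role here; the sphere analogue you cite, Theorem \ref{T5.2}, is parallel precisely because there the factor $(1-\langle x_i,x_j\rangle)=A_{ij}/2$ already sits multiplicatively in the evolution equation, doing the same job as the quadratic identity in the matrix case. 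To repair your proof, replace the centroid decomposition $U_i=\bar U+(U_i-\bar U)$ on the frustration piece with a symmetrization that exhibits $E_{ij}E_{ji}$ and then invoke the PSD trace bound.
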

\begin{proof}  For $i, j  = 1, \cdots, N$,  we set 
\begin{equation} \label{F-0-1}
  G_{ij}:=U_iU_j^*, \quad  L_{ij}:=I_d-G_{ij}. 
\end{equation} 
In order to derive the dynamics of $L_{ij}$, we first estimate the time-evolution of $G_{ij}$, and then we derive the estimate for $L_{ij}$. \newline

\noindent $\bullet$~Step A~(Estimate of $G_{ij}$): We use equations for $U_i$ and $U_j$:
\begin{align}
\begin{aligned} \label{X-2}
\dot U_i &= -\mi H U_i + \frac{\kp}{2N} \sum_{k=1}^N (VU_k - U_iU_k^* V^* U_i), \\
\dot U_j^* &= \mi U_j^* H + \frac{\kp}{2N} \sum_{k=1}^N ( U_k^* V^* - U_j^* V U_k U_j^*) 
\end{aligned}
\end{align}
to get
\begin{align} \label{X-3}
\begin{aligned}
&\dot U_i U_j^* = -\mi H U_iU_j^* + \frac{\kappa}{2N} \sum_{k=1}^N (VU_kU_j^* - U_iU_k^* V^* U_iU_j^*), \\
& U_i\dot U_j^* = \mi U_iU_j^* H + \frac{\kp}{2N} \sum_{k=1}^N ( U_iU_k^* V^* - U_iU_j^* V U_k U_j^*).
\end{aligned}
\end{align}
Now, we add \eqref{X-2} and \eqref{X-3} to see 
\begin{align*}
\begin{aligned}
\frac{d}{dt} (U_iU_j^*) &= \mi ( U_iU_j^* H - H U_i U_j^* )  \\
&\hspace{0.2cm} + \frac{\kp}{2N} \sum_{k=1}^N (VU_kU_j^* - U_iU_k^* V^* U_iU_j^* + U_iU_k^* V^* - U_iU_j^* VU_kU_j^*).
\end{aligned}
\end{align*}
or equivalently, $G_{ij}$ satisfies
\begin{equation} \label{F-1}
\frac{d}{dt} G_{ij} = \mi(G_{ij}H-HG_{ij}) + \frac{\kp}{2N}\sum_{k=1}^N ( VG_{kj} - G_{ik}V^* G_{ij} + G_{ik}V^*- G_{ij}VG_{kj}).
\end{equation}

\vspace{0.2cm}

\noindent $\bullet$~Step B~(Estimate of $L_{ij}$): We use \eqref{F-1} to find the dynamics of $I_d-G_{ij}$ 
\begin{align} \label{F-1-0}
\begin{aligned}
&\frac{d}{dt} (I_d-G_{ij}) \\
& \hspace{0.2cm} = \mi(I_d-G_{ij})H - \mi H(I_d-G_{ij}) + \frac{\kp}{2N} \sum_{k=1}^N ( -G_{ik}V^*(I_d-G_{ij}) - (I_d-G_{ij})VG_{kj} )\\
&\hspace{0.2cm}  = \mi(I_d-G_{ij})H - \mi H(I_d-G_{ij}) \\
&\hspace{0.4cm} +\frac{\kp}{2N} \sum_{k=1}^N \Big[ (I_d-G_{ij})V(I_d-G_{kj})- (I_d-G_{ij})V +(I_d-G_{ik})V^*(I_d-G_{ij}) - V^*(I_d-G_{ij})\Big],
\end{aligned}
\end{align}
or equivalently, \eqref{F-1-0} can be rewritten in terms of  $L_{ij}$ 
\begin{equation} \label{X-4}
\frac{d}{dt} L_{ij} = \mi (L_{ij}H-HL_{ij}) + \frac{\kp}{2N} \sum_{k=1}^N ( L_{ij}VL_{kj} + L_{ik}V^* L_{ij}) -\frac{\kp}{2} ( L_{ij}V+V^* L_{ij}).
\end{equation}
We use the relations \eqref{F-0-1} to get
\[ \|L_{ij}\|_\textup{F}^2 = \textup{tr}[ L_{ij}L_{ji} ] = \textup{tr}[L_{ij} + L_{ji}].\]
Then, the combination $\eqref{X-4} + \eqref{X-4}^*$ yields
\begin{align*}
\begin{aligned} \label{X-5}
\frac{d}{dt} \|L_{ij}\|_\textup{F}^2 &= -\frac{\kp}{2} \textup{tr}[L_{ij}V + V^* L_{ij} + V^* L_{ij}^* + L_{ij}^* V ]\\
&\hspace{0.5cm}+\frac{ \kp}{2N}\sum_{k=1}^N \textup{tr}[L_{ij}VL_{kj} + L_{ik}V^* L_{ij} +(L_{ij}VL_{kj})^* + (L_{ik}V^* L_{ij})^*] \\
&=: -\frac{\kp}{2}{ \mathcal I}_{11} + {\mathcal I}_{12}.
\end{aligned}
\end{align*}
Below, we estimate the terms ${\mathcal I}_{11}$ and $\mathcal I_{12}$, separately. \newline

\noindent $\bullet$~(Estimate of ${\mathcal I}_{11}$): By direct estimate, one has
\begin{align*}
\begin{aligned}
{\mathcal I}_{11} = \textup{tr}[ V^* L_{ij}L_{ji} + L_{ij}L_{ji}V] =\textup{tr}[ L_{ij}L_{ji}(V+V^*)] = 2\|L_{ij}\|_\textup{F}^2+ \textup{tr}[ L_{ij}L_{ji}(V+V^*-2I_d)].
\end{aligned}
\end{align*}
This yields
\begin{equation*} \label{X-6}
 |{\mathcal I}_{11} -2\|L_{ij}\|_\textup{F}^2| \le 2 \|L_{ij}\|_\textup{F}^2 \|V-I_d\|_\textup{F}. 
 \end{equation*}
\noindent $\bullet$~(Estimate of ${\mathcal I}_{12}$): Similarly, we find
\begin{align*}
\begin{aligned}
&\textup{tr}[L_{ij}VL_{kj} + L_{ik}V^* L_{ij} +(L_{ij}VL_{kj})^* + (L_{ik}V^* L_{ij})^*]\\
& \hspace{0.5cm} = \textup{tr} [L_{ij}VL_{kj} + L_{ik}V^* L_{ij} +L_{jk} V^* L_{ji} + L_{ji}V L_{ki}]\\
& \hspace{0.5cm} = \textup{tr}[L_{ij}L_{kj} + L_{ik} L_{ij} +L_{jk} L_{ji} + L_{ji} L_{ki}]\\
& \hspace{0.7cm} + \textup{tr}[L_{ij}(V-I_d)L_{kj} + L_{ik}(V^*-I_d) L_{ij} +L_{jk} (V^*-I_d) L_{ji} + L_{ji}(V-I_d) L_{ki}] \\
& \hspace{0.5cm} =: {\mathcal I}_{121} +  {\mathcal I}_{122}. 
\end{aligned}
\end{align*}

\noindent $\diamond$~(Estimate of ${\mathcal I}_{122}$): By direct estimates,
\[ |{\mathcal I}_{122}| \leq 4D(U)^2 \|V-I_d\|_\textup{F}. \]

\noindent $\diamond$~(Estimate of ${\mathcal I}_{121}$): Similarly, one has
\begin{align*}
\begin{aligned}
&\textup{tr}[L_{ij}L_{kj} + L_{ik} L_{ij} +L_{jk} L_{ji} + L_{ji} L_{ki}]\\
& \hspace{0.2cm} =\textup{tr}[(I_d-U_i U_j^* -U_k U_j^* +U_i U_j^* U_k U_j^*)+(I_d -U_i U_k^*-U_iU_j^* +U_iU_k^* U_iU_j^*)\\
&\hspace{0.4cm} +(I_d-U_j U_k^*-U_j U_i^*+U_jU_k^* U_j U_i^*)+(I_d-U_j U_i^*-U_k U_i^* +U_j U_i^* U_k U_i^*)]\\
& \hspace{0.2cm} =\textup{tr}[4I_d-2U_i U_j^*-2U_j U_i^* -U_j U_k^*-U_k U_j^*-U_i U_k^*-U_k U_i^*+U_j U_i^* U_j U_k^*+U_i U_j^* U_k U_j^* ].
\end{aligned}
\end{align*}
On the other hand, note that 
\begin{align*}
\begin{aligned}
&-2\|L_{ij}\|^2+ \textup{tr}[L_{ij}L_{ji}L_{jk}L_{kj}+L_{ji}L_{ij}L_{ki}L_{ik}]\\
&=\textup{tr}[-2(2I_d-U_i U_j^*-U_j U_i^*)+(2I_d-U_i U_j^*-U_j U_i^*)(4I_d-U_j U_k^*-U_k U_j^*-U_i U_k^*-U_k U_i^*)]\\
&=\textup{tr}[(2I_d-U_i U_j^*-U_j U_i^*)(2I_d-U_j U_k^*-U_k U_j^*-U_i U_k^*-U_k U_i^*)]\\
&=\textup{tr}[(4I_d-2U_i U_j^*-2U_j U_i^*) -(2U_j U_k^*-U_i U_k^*-U_j U_i^* U_j U_k^*) \\
&\quad -(2U_k U_j^*-U_i U_j^* U_k U_j^*-U_j U_i^* U_k U_j^*)-(2U_i U_k^*-U_i U_j^* U_i U_k^*-U_j U_k^*)\\
&\quad-(2U_k U_i^*-U_i U_j^* U_k U_i^*-U_j U_i^* U_k U_i^*)]\\
&=\textup{tr}[(4I_d-2U_i U_j^*-2U_j U_i^*) -(2U_j U_k^*-U_i U_k^*-U_j U_i^* U_j U_k^*) \\
&\quad-(2U_k U_j^*-U_i U_j^* U_k U_j^*- U_i^* U_k) -(2U_i U_k^*-U_i U_j^* U_i U_k^*-U_j U_k^*) \\
&\quad -(2U_k U_i^*- U_j^* U_k -U_j U_i^* U_k U_i^*)]\\
&=\textup{tr}[4I_d-2U_i U_j^*-2U_j U_i^* -2U_j U_k^*+U_i U_k^* +U_j U_i^* U_j U_k^*-2U_k U_j^*+U_i U_j^* U_k U_j^* + U_i^* U_k\\
&\quad -2U_i U_k^*+U_i U_j^* U_i U_k^* +U_j U_k^*-2U_k U_i^*+ U_j^* U_k +U_j U_i^* U_k U_i^*)]\\
&=\textup{tr}[4I_d-2U_i U_j^*-2U_j U_i^* -U_j U_k^*-U_k U_j^*-U_i U_k^*-U_k U_i^*+U_j U_i^* U_j U_k^*+U_i U_j^* U_k U_j^* \\
&\quad +U_i U_j^* U_i U_k^* +U_j U_i^* U_k U_i^*].
\end{aligned}
\end{align*}

\noindent $\bullet$~Step C~(Derivation of Gr\"onwall's inequality):  We now define the maximal diameter as
\begin{equation*}
D(U):= \max_{1\leq i,j\leq N} \|U_i - U_j\|_\textup{F} = \max_{1\leq i,j\leq N} \|I_d - U_iU_j^*\|_\textup{F} = \max_{1\leq i,j\leq N}\| L_{ij}\|_\textup{F}.
\end{equation*}
For each $t>0$, we choose extremal indices $(i_t,j_t)$ satisfying 
\begin{equation*}
D(U) = \|U_{i_t} - U_{j_t}\|_\textup{F} = \|L_{i_tj_t}\|_\textup{F}.
\end{equation*}
Hence, we obtain 
\[
\left|\frac{d}{dt} D(U)^2+2\kappa D(U)^2 \right|\le \kappa D(U)^4+3\kappa D(U)^2 \|V-I_d\|_\textup{F},
\]
or  equivalently, 
\begin{equation*}
\frac{d}{dt} D(U) \leq -\frac{\kp}{2} \Big (2 -3\|V-I_d\|_\textup{F} \Big)D(U) + \frac{\kp}{2} D(U)^3.
\end{equation*}
Since the initial data satisfy the condition \eqref{F-0-0-0}, we obtain the desired result. Moreover the solution of the Riccati-type differential inequality and comparison principle yield exponential aggregation. 
\end{proof}

\begin{remark}\label{R6.1}
(i)~If we assume that there is no frustration, i.e., $V=I_d$, then the conditions in \eqref{F-0-0-0} reduces to 
\begin{equation*}
D(U^0) < \sqrt2
\end{equation*}
which is already introduced in \cite{H-R}. Moreover, the smallness condition in \eqref{F-0-0-0} says that the frustration matrix must be not much different from the identity matrix.

(ii) Although Theorem \ref{T6.1} is stated in terms of the Frobenius norm $\|\cdot\|_F$, a similar result holds(with a similar proof) for the operator norm $\|\cdot \|_{op}$:
\[
\|V-I_d\|_{op}<\frac{1}{4}\quad \mbox{and}\quad \max_{1\le i,j\le N}\|U_i^0-U_j^0\|_{op}<1
\]
implies
\[
\lim_{t\rightarrow\infty}\max_{1\le i,j\le N}\|U_i(t)-U_j(t)\|_{op}=0.
\]
The constants are not optimal and are in the process of improvement. This result will be published in a forthcoming manuscript.
\end{remark}

\subsection{Equilibria} \label{sec:6.2}
Recall that for the Kuramoto model with identical oscillators, some equilibria with order parameter zero are the splay states, that is, the states where the initial phases are spaced equally on the unit circle and fixed for all time $t$. These states correspond to the finite subgroups of $\bbs^1$, or equivalently, the image of group homomorphisms from finite groups into the circle group $\bbs^1$. Of course, these do not fully characterize the equilibrium states with order parameter zero, and it is easy to construct configurations that possess no plane symmetry. However, we may generalize this idea to the Lohe matrix model by considering embeddings of finite groups into unitary groups of higher dimension. Consider the Lohe matrix model for $H_j = 0$:
\begin{equation} \label{F-2}
\dot U_j  U_j^* =  \frac{\kappa}{2N} \sum_{k=1}^N ( VU_kU_j^* - U_j U_k^* V^*), \quad j = 1, \cdots, N.
\end{equation}

\begin{theorem} \label{T6.2}
Let $G=\{g_1,\cdots,g_N\}$ be a finite group.
\begin{enumerate}
\item If $\varrho:G\rightarrow {\bbu}(d)$ is a group homomorphism, then the initial condition
\[
N=|G|,\quad U_i^0=\varrho(g_i),~i=1,\cdots,N
\]
is an equilibrium state for the  Lohe matrix model \eqref{F-2} with $V=I_d$.

\vspace{0.2cm}

\item Let $\varrho:G\rightarrow {\bbu}(d)$ be an irreducible unitary representation of $G$. Then the initial condition
\[
N=|G|,\quad U_i^0=\varrho(g_i),~i=1,\cdots,N
\]
is an equilibrium state for the Lohe matrix model \eqref{F-2}.
\end{enumerate}
\end{theorem}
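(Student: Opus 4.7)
The plan is to reduce the equilibrium condition to a statement about the group-average operator
\[ S := \sum_{k=1}^N \varrho(g_k) \in \bbc^{d \times d}, \]
and then verify $S = 0$ in the setting of (2) using an elementary invariant-subspace argument. The observation underlying both parts is that, because $\varrho$ is a homomorphism and $G$ is a group, the reindexing $g_k \mapsto g_k g_j^{-1}$ (respectively $g_k \mapsto g_j g_k^{-1}$) is a bijection of $G$ onto itself. Consequently, for each fixed $j$,
\[
\sum_{k=1}^N U_k U_j^* = \sum_{k=1}^N \varrho(g_k g_j^{-1}) = S, \qquad \sum_{k=1}^N U_j U_k^* = \sum_{k=1}^N \varrho(g_j g_k^{-1}) = S.
\]

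For part (1), with $V = I_d$, the coupling term in \eqref{F-2} collapses to $\frac{\mi \kappa}{2N}(S - S) = 0$, so $\dot U_j U_j^* = 0$, and since $U_j$ is unitary, $\dot U_j = 0$. Thus $U_j(t) \equiv \varrho(g_j)$ is an equilibrium. This step is essentially a direct calculation and should pose no obstacle.

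For part (2), with general frustration $V$, the same reindexing gives
\[ \sum_{k=1}^N \bigl( V U_k U_j^* - U_j U_k^* V^* \bigr) = V S - S V^*, \]
independent of $j$, so the task reduces to showing $S = 0$. The key observation is that $\varrho(h) S = \sum_k \varrho(h g_k) = S$ for every $h \in G$, so for any $v \in \bbc^d$, the vector $Sv$ is fixed by every $\varrho(h)$. In other words, $\mathrm{Im}\, S$ lies in the subspace of $G$-invariant vectors $(\bbc^d)^G$. Since $(\bbc^d)^G$ is a $\varrho$-invariant subspace of the irreducible representation $\bbc^d$, it must equal $0$ or $\bbc^d$; in the latter case $\varrho$ would be the trivial representation, which only occurs (among irreducibles) for $d = 1$, and even there only if one interprets $\varrho \equiv 1$ as compatible with the statement. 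In the standard case $\mathrm{Im}\, S = 0$, hence $S = 0$, whence $V S - S V^* = 0$ and $\dot U_j \equiv 0$ as before.

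The main subtlety is the $1$-dimensional trivial representation: there $S = N \neq 0$ and the identity $V S = S V^*$ fails unless $V$ is Hermitian, so the claim should implicitly be read as referring to non-trivial irreducibles. I would mention this briefly and note that in every other case, the invariant-subspace characterization of $(\bbc^d)^G$ combined with Schur's irreducibility hypothesis forces $S = 0$ with no further input. The rest of the argument is purely algebraic bookkeeping.
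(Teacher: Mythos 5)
Your proposal follows essentially the same line as the paper: part (1) is the reindexing/averaging trick, and part (2) reduces to showing the group average $S = \sum_k \varrho(g_k)$ vanishes by observing that $\mathrm{Im}\,S$ is a $G$-invariant subspace and invoking irreducibility. The paper phrases this via $\varrho(g_i)U_c^0 = U_c^0$ rather than via the subspace of $G$-invariant vectors, but these are the same argument.

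Where you actually add something is in the edge case. When $\mathrm{Im}\,S$ is the whole space, irreducibility forces $\varrho$ to be the trivial one-dimensional representation, and the paper's proof merely remarks that one lands in ``the completely aggregated state'' without checking that it is an equilibrium of \eqref{F-2}. You are right that it is not: for the aggregated configuration the coupling term becomes $\frac{\mi\kappa}{2}(V - V^*)$, which vanishes only when $V$ is Hermitian, so with a general unitary frustration $V$ the trivial representation does \emph{not} give a fixed point. Your proposal flags this correctly, so if anything you are more careful than the published argument; the clean fix is to restrict the hypothesis in part (2) to nontrivial irreducibles (equivalently, irreducibles with $d \ge 2$, or $d = 1$ with $\varrho$ nontrivial), exactly as you suggest.
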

\begin{proof}
\textcolor{black}{
\noindent (i)~ For fixed $g_i \in G$, it follows from $|G|=N<\infty$ that 
\begin{equation*}
\{ g_1,\cdots, g_N \} = \{ g_1g_i^{-1}, \cdots, g_Ng_i^{-1}\} =\{ g_ig_1,\cdots, g_ig_N\}= \{ g_1^{-1}, \cdots, g_N^{-1}\}.
\end{equation*}
Thus, we see 
\begin{equation*}
\sum_{k=1}^N \varrho(g_k g_i^{-1}) = \sum_{k=1}^N \varrho(g_k) , \quad \sum_{k=1}^N \varrho(g_ig_k^{-1}) = \sum_{k=1}^N \varrho(g_k^{-1})  = \sum_{k=1}^N \varrho(g_k).
\end{equation*}
Thus, we obtain the desired result. 
\[
\sum_{k=1}^{N}  (U_k^0 (U_i^0)^* -U_i^0 (U_k^0)^*)=\sum_{k=1}^{N}  (\varrho(g_k g_i^{-1})-\varrho(g_i g_k^{-1})) =\sum_{k=1}^{N} (\varrho(g_k)-\varrho(g_k))=0.
\]
\noindent (ii)~We set
\[
U^0_c= \frac{1}{N} \sum_{k=1}^{N} \varrho(g_k).
\]
Then for all $i=1,\cdots,N$,
\[
\varrho(g_i)U_c^0= \frac{1}{N} \varrho(g_i)\sum_{k=1}^{N} \varrho(g_k)= \frac{1}{N} \sum_{k=1}^{N} \varrho(g_i g_k)= \frac{1}{N} \sum_{k=1}^{N} \varrho(g_k)=U_c^0.
\]
Hence the image of $U_c^0$ becomes a common eigenspace of all the $\varrho(g_i)$'s. By irreducibility, the image space of $U_c^0$ must be either zero or the whole space. \newline
\noindent $\bullet$~Case A (the image space of $U_c^0$ is zero): We use  $U_c^0=0$ to see
\[
\sum_{k=1}^{N} (VU_k^0 (U_i^0)^*-U_i^0 (U_k^0)^* V^*)=V U_c^0 (U_i^0)^*-U_i^0 (U_c^0)^* V^*=0.
\]
\noindent $\bullet$~Case B (the image space of $U_c^0$ is the whole space): In this case, all the $U_i^0$'s act as the identity on the whole space, i.e. $\varrho$ is the trivial representation. Thus, $\varrho(g_i)$'s then correspond to the completely aggregated state.}
\end{proof}
\begin{remark}
It is well-known that any finite group has an irreducible unitary representation. By considering that the structure as well as the representation of finite groups is vast, we can say that some of the splay states in the higher dimensional unitary groups are complex yet highly symmetric. As a concrete example, let us consider the standard representation $\varrho$ of the symmetric group $S_n$. It is an irreducible representation of dimension $n-1$, and is the symmetry group of the $(n-1)$-dimensional standard simplex. In this case, it is easy to see that
\[
D(\mbox{im}~\varrho)=\sqrt{2n},\quad D_{op}(\mbox{im}~\varrho)=\sqrt{\frac{2n}{n-1}}.
\]
(Here $D_{op}$ is defined the same as $D$ with the Frobenius norm replaced by the operator norm.) This tells us that the forthcoming result mentioned in Remark \ref{R6.1} (ii) is close to being sharp and has the necessary limitation $D(U^0)<\sqrt{2}$.
\end{remark}

\section{conclusion} \label{sec:7}
\setcounter{equation}{0}
In this paper, we have investigated constants of motion and low-dimensional reductions of the Lohe type models under the frustration effect. We have also presented asymptotic aggregation estimates which provides frameworks in terms of the initial data and frustrations. As we already have seen in the previous literature, frustration brings in a competition between `synchronous motion (aggregation)' and `periodic motion'. This coincides with the Lohe sphere model case where the `synchronous motion' and `periodic motion' are represented by the identity matrix and the skew-symmetric matrix, respectively. We did not address the emergence of periodic motions for non-identical particles, and we leave it to future work. We also provide a unified framework for the constants of motion for the Kuramoto model and Lohe sphere model with frustration and suggest applications to asymptotic behaviors of the Lohe type models. We further present a method of reduction with which we can reduce the number of effective variables. Finally, we provide some class of equilibria of the Lohe matrix model with identical hamiltonians.

\newpage

\appendix

\section{Proof of Proposition \ref{P3.2}} \label{App-A}
\setcounter{equation}{0}
In this appendix, we provide a detailed proof of Proposition \ref{P3.2}. For this, we first show that the Cauchy problem of Proposition \ref{P3.2} admits the unique global-in-time smooth solution for $f(t)$ and $g(t)$, and then the relation \eqref{affine} holds:
\[
x_j(t) = g(t)+f(t)x_j^0,\quad j=1,\dots, N-m.
\]

\subsection{Well-posedness of the Cauchy problem} Recall  the Cauchy problem:
\[
\begin{pmatrix}
f'(t) \\ g'(t)
\end{pmatrix}
=
\begin{pmatrix}
\tilde{\mathcal{B}}f(t) \\ \tilde{\mathcal{A}}+\tilde{\mathcal{B}}g(t)
\end{pmatrix}
,
\quad
\begin{pmatrix}
f(0) \\ g(0)
\end{pmatrix}
=
\begin{pmatrix}
1 \\ 0
\end{pmatrix},
\]
where
\[
\tilde{\mathcal{A}}=\frac{\kappa}{N}\left[m\sin\alpha+\sum_{k=1}^{N-m}\left(\frac{2f(t)x_k^0+2g(t)}{(f(t)x_k^0+g(t))^2+1}\cos\alpha+\frac{(f(t)x_k^0+g(t))^2-1}{(f(t)x_k^0+g(t))^2+1}\sin\alpha\right)\right]
\]
and
\[
\tilde{\mathcal{B}}=\frac{\kappa}{N}\left[m\cos\alpha+\sum_{k=1}^{N-m}\left(-\frac{2(f(t)x_k^0+g(t))}{(f(t)x_k^0+g(t))^2+1}\sin\alpha+\frac{(f(t)x_k^0+g(t))^2-1}{(f(t)x_k^0+g(t))^2+1}\cos\alpha\right)\right],
\]
for given real numbers $x_1^0,\cdots, x_{N-m}^0\in\bbr$.

First of all, $\tilde{\mathcal{A}}$ and $\tilde{\mathcal{B}}$ are rational functions of $f$ and $g$ and have no poles on the real line, so this regularity easily implies the local existence and uniqueness of $f(t)$ and $g(t)$. So it remains to put an upper bound to the growth rate of $f(t)$ and $g(t)$ so as to exclude finite-time blow-up. Note that 
\[
\left(\frac{2y}{y^2+1}\right)^2+\left(\frac{y^2-1}{y^2+1}\right)^2=1,\quad \forall y\in \bbr.
\]
By the Cauchy-Schwarz inequality, one has
\begin{align*}
\begin{aligned}
|\tilde{\mathcal{A}}| &= \frac{\kappa}{N}\left[m|\sin\alpha|+\sum_{k=1}^{N-m}\left(\left|\frac{2f(t)x_k^0+2g(t)}{(f(t)x_k^0+g(t))^2+1}\right| |\cos\alpha|+\left|\frac{(f(t)x_k^0+g(t))^2-1}{(f(t)x_k^0+g(t))^2+1}\right| |\sin\alpha|\right)\right]\\
&\le \frac{\kappa}{N}\left[m+\sum_{k=1}^{N-m}\left(\left|\frac{2(f(t)x_k^0+g(t))}{(f(t)x_k^0+g(t))^2+1}\right|^2+\left|\frac{(f(t)x_k^0+g(t))^2-1}{(f(t)x_k^0+g(t))^2+1}\right|^2\right)^{\frac{1}{2}}\left(|\cos\alpha|^2+|\sin\alpha|^2\right)^{\frac{1}{2}}\right]\\
&\le \frac{\kappa}{N}\left[m+(N-m)\right] =\kappa.
\end{aligned}
\end{align*}
Similarly, we have
\begin{align*}
|\tilde{\mathcal{B}}|\le & \frac{\kappa}{N}\left[m|\cos\alpha|+\sum_{k=1}^{N-m}\left(\left|\frac{2(f(t)x_k^0+g(t))}{(f(t)x_k^0+g(t))^2+1}\right||\sin\alpha|+\left|\frac{(f(t)x_k^0+g(t))^2-1}{(f(t)x_k^0+g(t))^2+1}\right||\cos\alpha|\right)\right]\\
\le &\frac{\kappa}{N}\left[m+\sum_{k=1}^{N-m}\left(\left|\frac{2(f(t)x_k^0+g(t))}{(f(t)x_k^0+g(t))^2+1}\right|^2+\left|\frac{(f(t)x_k^0+g(t))^2-1}{(f(t)x_k^0+g(t))^2+1}\right|^2\right)^{\frac{1}{2}}\left(|\sin\alpha|^2+|\cos\alpha|^2\right)^{\frac{1}{2}}\right]\\
\le &\frac{\kappa}{N}\left[m+(N-m)\right] = \kappa.
\end{align*}
Therefore, one finds
\[
\begin{cases}
|f'(t)|\le |\tilde{\mathcal{B}}f(t)|\le \kappa |f(t)|,\quad |f(0)|=1, \\
|g'(t)| \le |\tilde{\mathcal{A}}|+|\tilde{\mathcal{B}}||g(t)|\le \kappa + \kappa |g(t)|,\quad |g(0)|=0.
\end{cases}
\]
These imply
\[
|f(t)|\le e^{\kappa t},\quad |g(t)|\le e^{\kappa t}-1,\quad t>0. 
\]
Thus, $f$ and $g$ must be defined for all $t\ge 0$.

\subsection{Verification of \eqref{affine}} We define
\[
\tilde{x}_j(t)=g(t)+f(t)x_j^0,\quad j=1,\cdots,N-m, \quad t\ge 0.
\]
For $j=1,\cdots,N-m$, one has
\[
\dot{\tilde{x}}_j(t) =g'(t)+f'(t)x_j^0 =\tilde{\mathcal{A}}+\tilde{\mathcal{B}}g(t)+\tilde{\mathcal{B}}f(t)x_j^0 =\tilde{\mathcal{A}}+\tilde{\mathcal{B}}\tilde{x}_j.
\]
Furthermore, we have
\[
\tilde{\mathcal{A}}=\mathcal{A}(\{\tilde{x}_j\}),\quad \tilde{\mathcal{B}}=\mathcal{B}(\{\tilde{x}_j\}),
\]
so $\{\tilde{x}_j \}_{j=1}^{N-m}$ is a solution of the Cauchy problem \eqref{stereo}:
\[
\begin{cases}
\dot{\tilde{x}}_j =\mathcal{A}(\{\tilde{x}_j \})+\tilde{\mathcal{B}}(\{\tilde{x}_j \})\tilde{x}_j,\\
\displaystyle \tilde{x}_j(0)=x_j(0)=\frac{1+\cos(\theta_j^0-\theta_N^0)}{\sin(\theta_j^0-\theta_N^0)},\quad j = 1,\cdots, N-m.
\end{cases}
\]
However, $\{x_j \}_{j=1}^{N-m}$ is also a solution of the Cauchy problem \eqref{stereo}, as demonstrated by Proposition \ref{P3.2}. Since the Cauchy problem \eqref{stereo} has clearly a unique solution, we can conclude
\[
\tilde{x}_j(t)=x_j(t),\quad j=1,\cdots,N-m,\quad t\ge 0,
\]
which is an equivalent statement of \eqref{affine}.

\end{document}